\title{Non-iid hypothesis testing: from classical to quantum}
\author{Giacomo De Palma\thanks{Universit{\`a} di Bologna. \texttt{ giacomo.depalma@unibo.it}}\and
Marco Fanizza\thanks{Inria, Télécom Paris - LTCI, Institut Polytechnique de Paris. Part of the work done while at Department of Mathematical Sciences, University of Copenhagen. \texttt{marco.fanizza@inria.fr}}\and
Connor Mowry\thanks{University of Illinois Urbana--Champaign. Work done while the author was at Carnegie Mellon University. \texttt{cmowry@andrew.cmu.edu}} \and Ryan O'Donnell\thanks{Carnegie Mellon University. Work partially supported by a grant from Google Quantum AI. \texttt{odonnell@cs.cmu.edu}}}
\date{\today}
\begin{document}

\maketitle

\begin{abstract}
    We study hypothesis testing (aka state certification) in the \emph{non-identically distributed} setting.  A recent work (Garg et~al.~2023) considered the classical case, in which one is given (independent) samples from $T$ unknown probability distributions $p_1, \dots, p_T$ on $[d] = \{1, 2, \dots, d\}$, and one wishes to accept/reject the hypothesis that their average $p_{\textnormal{avg}}$ equals a known hypothesis distribution~$q$. Garg et al.~showed that if one has just $c = 2$ samples from each $p_i$, and provided $T \gg \frac{\sqrt{d}}{\eps^2} + \frac{1}{\eps^4}$, one can (whp) distinguish $p_{\textnormal{avg}} = q$ from $\dtv{p_{\textnormal{avg}}}{q} > \eps$. This nearly matches the optimal result for the classical iid setting (namely, $T \gg \frac{\sqrt{d}}{\eps^2}$).

    Besides optimally improving this result (and generalizing to tolerant testing with more stringent distance measures), we study the analogous problem of  hypothesis testing for non-identical \emph{quantum} states.  Here we uncover an unexpected phenomenon: for any $d$-dimensional hypothesis state~$\sigma$, and given just a \emph{single} copy ($c = 1$) of each state $\rho_1, \dots, \rho_T$, one can distinguish $\rho_{\textnormal{avg}} = \sigma$ from $\Dtr{\rho_{\textnormal{avg}}}{\sigma} > \eps$ provided $T \gg d/\eps^2$.  (Again, we generalize to tolerant testing with more stringent distance measures.) 
    This matches the optimal result for the iid case, which is surprising because doing this with $c = 1$ is provably impossible in the classical case.
    {Extending the iid result on identity testing between unknown states, we also show that given a single copy of each state $\rho_1,\cdots,\rho_T$ and $\sigma_1,\cdots,\sigma_T$, it is possible to distinguish between $\rho_{\textnormal{avg}} = \sigma_{\textnormal{avg}}$ from $\Dtr{\rho_{\textnormal{avg}}}{\sigma_{\textnormal{avg}}} > \eps$ provided $T \gg d/\eps^2$.}
    A technical tool we introduce may be of independent interest: an Efron--Stein inequality, and more generally an Efron--Stein decomposition, in the quantum setting.
\end{abstract}

\newcommand{\oC}{\bC}
\newcommand{\oM}{\bM}

\section{Introduction}

\emph{Hypothesis testing} is a fundamental task in algorithmic statistics and learning theory.  In the classical setting, one has access to samples from a probability distribution~$p$ on $[d] = \{1, 2, \dots, d\}$, as well as a hypothesis $q$ for what that distribution is.  Using as few samples~$n$ as possible, the task is to distinguish (with high confidence) the case that $p$ is close to $q$ from the case that $p$ is far from~$q$.  In the quantum setting, $p$ is replaced by a $d$-dimensional (mixed) quantum state, and $q$ by a hypothesis state~$\sigma$ (whose classical description is known); again, one wants to use as few copies~$n$ of~$\rho$ as possible to determine whether $\rho$ is close to, or far from,~$\sigma$.

The classical task is a staple of statistics, having applications ranging from scientific trials to anomaly detection to differential privacy.  The quantum version models both  understanding of quantum data gathered from nature, as well as validating that quantum systems designed in the lab are behaving as intended.  See e.g.~\cite{Hua25} for more on the importance of quantum state certification.

Through long study in the statistics and sublinear-time algorithms communities~\cite{MR2844253,paninski2008coincidence,MR3142287,MR3376448,acharya2015optimal,DK16,MR3614697,DKW2018,MR4008852,MR4158946}, the sample complexity of classical hypothesis testing has become very well understood.  
Briefly, the optimal sample complexity for distinguishing $p = q$ from $\dtv{p}{q} > \eps$ is known to be $\Theta(\frac{\sqrt{d}}{\eps^2})$, which is notable for being quadratically better (in terms of~$d$) than the optimal sample complexity of learning~$p$.
As for the quantum case, more recent work~\cite{MR4312358,BOW17} has nailed down the optimal complexity, which is $\Theta(\frac{d}{\eps^2})$ for distinguishing $\rho = \sigma$ from $\Dtr{\rho}{\sigma} > \eps$; again, quadratically better (in terms of~$d$) than the copy complexity required for learning (state tomography).

\paragraph{Non-identical sources.}  The previously stated results are all for the usual ``iid'' model. In the classical case, this means the $n$ samples are independent and identically distributed according to one distribution~$p$; in the quantum case, it means the $n$ systems are unentangled identical copies of one state~$\rho$.  In this work, we consider keeping the independence/unentangled hypothesis in place, but relaxing the assumption that all samples are identical.

In the classical case, this relaxation was recently studied in work of Garg, Pabbaraju, Shiragur, and G.~Valiant~\cite{garg2023testingnonidenticallydistributedsamples}.  Here the model is that one may have samples from a variety of (related) distributions $p_1, p_2, \dots, p_T$, and one wishes to do hypothesis testing on their average, $p_{\textnormal{avg}} = \frac1{T} \sum_{i=1}^T p_i$.  Garg et~al.\ were motivated by a variety of practical settings, including federated learning (where the different $p_i$'s may govern data from a variety of user types), time series data (in which the $p_i$'s represent a data source that fluctuates over time), and spatially heterogeneous data.  Many of these considerations apply in quantum learning settings: any time (i)~the data preparation procedures are not easily repeatable, (ii)~some kind of classical information about each preparation procedure is available, and (iii)~there is interest in learning about global properties of the collected data, conditioned on the available classical information. Similar motivations in certification and learning problems have been considered in~\cite{Fan2023,Fan2024}. Examples could be quantum probes encoding data from:
\begin{itemize}
\item molecules or materials in uncontrolled and rapidly time-varying but monitored environments;
\item collections of astronomy experiments (e.g.\ gravitational waves), which cannot be individually repeated but are associated to specific events;
\item independent sources generating quantum data in parallel at very small rates, for example because they are post-selected conditioned on some rare event happening. Each source may be designed such that it prepares one state of an ensemble, and we just need to certify that the mixture is close to the target.
\end{itemize}

This motivates us to study the quantum setting with non-identical sources; i.e., hypothesis testing of $\rho_{\textnormal{avg}} = \frac{1}{T} \sum_{i=1}^T \rho_i$ given copies of the~$\rho_i$'s.

\paragraph{More on the model.}  We explain an additional aspect of the model, focusing on the classical case for simplicity.  Given heterogeneous  distributions $p_1, \dots, p_T$ on~$[d]$, a natural desire when testing $p_{\textnormal{avg}}$ is to use as few samples~$c$ from of each source as possible.  At the same time, we certainly need the total number of samples, $cT$, to be at least the known sample complexity~$n = \Theta(\frac{\sqrt{d}}{\eps^2})$ for the iid version of the problem.  Thus a first instinct is to ask whether having $c = n/T$ samples from each source suffices.  However one can be  more ambitious than this, asking whether even a \emph{fixed} constant $c = O(1)$ suffices, provided $T$ is large enough (namely, at least $n/c$).  At first it might sound peculiar to think of the number of classes~$T$ as varying, rather than being given.  But notice if one has a batch of samples from~$T$ different sources, one can divide each batch into groups of~$k$, artificially increasing~$T$ to $kT$ and only making the problem potentially harder.  Thus we can follow the ambitious framework in \cite{garg2023testingnonidenticallydistributedsamples} of fixing~$c$ and investigating how large~$T$ needs to be to test $p_{\textnormal{avg}}$ (or $\rho_{\textnormal{avg}}$).

\paragraph{Prior work in the classical case.} Garg et al.~\cite{garg2023testingnonidenticallydistributedsamples} were not able to nail down completely matching bounds in the non-iid case, but they did establish the following striking results:
\begin{theorem} \label{thm:classical-upper}
    (\cite{garg2023testingnonidenticallydistributedsamples}.)  
    Fix distribution $q$ on $[d]$.  Then there is an algorithm, getting \mbox{$c = 2$} samples each from distributions $p_1, \dots, p_T$ on $[d]$, that distinguishes the cases $p_{\textnormal{avg}} = q$ from $\dtv{p_{\textnormal{avg}}}{q} > \eps$ with high probability (whp\footnote{To avoid excessive parameters, we define this throughout to mean, say, ``with probability at least~$.99$''.}),
    provided $T \gg \frac{\sqrt{d}}{\eps^2} + \frac{1}{\eps^4}$.\footnote{Here and throughout, we write ``if $T \gg f(d,\eps)$, then\dots'' to mean ``there exists a universal constant~$C$ such that if $T \geq C \cdot f(d,\eps)$, then \dots''.}
    
\end{theorem}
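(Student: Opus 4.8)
\emph{Proof proposal.} The plan is to run a $\chi^2$-type test, but with a statistic built to exploit the $c=2$ structure: \emph{split each source's two samples into a ``left'' copy and a ``right'' copy, and cross-correlate the halves}. Write $X_{i,1},X_{i,2}$ for the samples of source $i$, set $N^{(b)}_m := \#\{i : X_{i,b}=m\}$ for $b\in\{1,2\}$, and consider
\[
  Z \;:=\; \sum_{m\in[d]} \frac{N^{(1)}_m N^{(2)}_m}{q_m} \;=\; \sum_{i,i'} \frac{\mathbf{1}[X_{i,1}=X_{i',2}]}{q_{X_{i,1}}}.
\]
Because the samples are independent across sources and i.i.d.\ within each source, the vectors $(N^{(1)}_m)_m$ and $(N^{(2)}_m)_m$ are \emph{independent}, so $\mathbb{E}[N^{(1)}_m N^{(2)}_m]=\mathbb{E}[N^{(1)}_m]\,\mathbb{E}[N^{(2)}_m]=(T\,p_{\textnormal{avg}}(m))^2$ with \emph{no} correction term, and hence $\mathbb{E}[Z]=T^2\sum_m p_{\textnormal{avg}}(m)^2/q_m = T^2(1+\chi^2(p_{\textnormal{avg}},q))$. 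Thus $\mathbb{E}[Z]=T^2$ in the complete case, while $\mathbb{E}[Z]\ge T^2(1+4\eps^2)$ in the sound case since $\chi^2(p_{\textnormal{avg}},q)\ge 4\,\dtv{p_{\textnormal{avg}}}{q}^2>4\eps^2$ by Cauchy--Schwarz. This is precisely where $c=2$ is needed: from a single copy one has only $N^{(1)}$, and the natural degree-$2$ estimator $\sum_m N^{(1)}_m(N^{(1)}_m-1)/q_m$ has expectation $T^2\sum_m p_{\textnormal{avg}}(m)^2/q_m - \sum_m q_m^{-1}\sum_i p_i(m)^2$, whose second term depends on the individual $p_i$'s and is not estimable from one sample each --- which is the classical $c=1$ impossibility --- whereas the cross-term $N^{(1)}_m N^{(2)}_m$ removes it for free.

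Next, the variance bound. Each summand of $Z$ depends only on $X_{i,1}$ and $X_{i',2}$ and is a product of nonnegative factors, and the left and right halves never share a sample, so two summands $(i,i')$ and $(k,k')$ are independent unless $i=k$ or $i'=k'$. Expanding $\mathrm{Var}(Z)$ over the surviving pairs, bounding each covariance by the expectation of the product, and collapsing the sums over source indices via $\sum_i p_i = T\,p_{\textnormal{avg}}$ yields
\[
  \mathrm{Var}(Z) \;\le\; T^2\sum_m \frac{p_{\textnormal{avg}}(m)^2}{q_m^2} \;+\; 2\,T^3\sum_m \frac{p_{\textnormal{avg}}(m)^3}{q_m^2}.
\]
In the complete case ($p_{\textnormal{avg}}=q$) the right-hand side is just $T^2 d + 2T^3$, so Chebyshev's inequality at threshold $T^2(1+2\eps^2)$ accepts whp as soon as $(T^2\eps^2)^2 \gg T^2 d + T^3$, i.e.\ $T\gg \tfrac{\sqrt d}{\eps^2}+\tfrac1{\eps^4}$ --- exactly the two terms of the theorem, coming respectively from the $T^2\|p_{\textnormal{avg}}\|_2^2$- and $T^3\|p_{\textnormal{avg}}\|_3^3$-type pieces of the variance.

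What remains --- and the step I expect to be the main obstacle --- is the sound case, where $p_{\textnormal{avg}}$ may place large mass on symbols with tiny $q_m$, blowing up the $1/q_m^2$ weights above; this is the classical small-probability-symbol pathology of $\chi^2$/collision testers and does not interact with the new ingredients. I would dispatch it with the standard toolkit: the reduction to testing against a (near-)uniform $q$ on $\Theta(d)$ symbols by domain-flattening (splitting each symbol $m$ into $\max(1,\lceil d\,q_m\rceil)$ equal pieces, which preserves $\dtv$ and keeps the samples simulable with private randomness), a separate easy count-based test for whether $p_{\textnormal{avg}}$ puts more than $\eps/2$ of its mass on the $O(\eps^2/d)$-light symbols of $q$, and a self-bounding argument showing that when $\|p_{\textnormal{avg}}\|_\infty$ is large the cubic variance term is absorbed into the (then large) gap $\mathbb{E}[Z]-T^2$. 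None of this requires new ideas, but porting it to the bilinear statistic $Z$ is the bulk of the remaining work; afterwards the sound case rejects whp under the same bound $T\gg \tfrac{\sqrt d}{\eps^2}+\tfrac1{\eps^4}$, and the $+\tfrac1{\eps^4}$ is intrinsic to the cubic variance term --- precisely the slack that a sharper analysis would remove.
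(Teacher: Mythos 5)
Your statistic is exactly the one the paper uses in \Cref{sec:classical} (your $Z$ equals $T^2(\oM+1)$ there), and your computations of the mean, of the $c=1$ obstruction, and of the completeness case are all correct: for $p_{\textnormal{avg}}=q$ the bound $\Var(Z)\le T^2d+2T^3$ plus Chebyshev does give acceptance once $T\gg \sqrt d/\eps^2+1/\eps^4$. The genuine gap is the soundness case, which you defer entirely to a ``standard toolkit,'' and it is not a routine port, for two concrete reasons. First, the step $\Cov[W_{i,i'},W_{k,k'}]\le\E[W_{i,i'}W_{k,k'}]$ discards precisely the cancellation that makes the cubic term controllable: writing $\vphi=p_{\textnormal{avg}}/q$, the quantity $\sum_m p_{\textnormal{avg}}(m)^3/q_m^2=\E_q[\vphi^3]$ must be re-centered via $\vphi=1+\wt\vphi$ (equivalently, one must keep the subtracted products $\la \vphi_s,\vphi_t\ra\la\vphi_s,\vphi_{t'}\ra$), leaving $\E_q[\wt{\vphi}^3]+O(\mu)$ with $\mu=\dchisq{p_{\textnormal{avg}}}{q}$, which is then bounded by $(2/\sqrt\gamma)\,\mu^{3/2}$ via Cauchy--Schwarz as in the proof of \Cref{prop:classicalvar}. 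In the hardest regime $\mu\approx 4\eps^2$ with a single light spike, a raw bound on $\|p_{\textnormal{avg}}\|_3^3$ does not close the Chebyshev inequality within your budget. Second, your $1/q_m^2$ weights require a \emph{lower} bound on $\min_m q_m$, which domain-flattening does not supply (it only caps $\max_m q_m$; light symbols stay light). The reduction that works is mixing $q$ and all samples with the uniform distribution --- the classical analogue of the depolarizing trick in \Cref{cor:main0} --- which forces $\gamma=\Omega(1/d)$ at the cost of a constant factor in $\dtv{p_{\textnormal{avg}}}{q}$.

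With those two repairs your route does go through and recovers the stated (Garg et al.) bound: after mixing, $\gamma\ge 1/(2d)$, and expanding $p_{\textnormal{avg}}=q+\delta$ gives $\sum_m p_{\textnormal{avg}}(m)^3/q_m^2\le O(1)+O(\mu/\gamma\cdot\gamma)+O(\mu^{3/2}/\sqrt\gamma)$, whose pieces, measured against the squared gap $\Theta(T^4\mu^2)$, cost $T\gg 1/\eps^4$, $1/\eps^2$, and $\sqrt d/\eps$ respectively. The paper's \Cref{sec:classical} instead computes the covariances exactly, centers throughout, and tracks the $\mu$-dependence of every term, yielding \Cref{prop:classicalvar} and hence (via \Cref{lemmaCheb}) a single analysis covering both cases that eliminates the $1/\eps^4$ term and proves the stronger \Cref{thm:classical1}. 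The $1/\eps^4$ in your version comes exactly from the un-cancelled constant in $\E_q[\vphi^3]\ge 1$, which the exact covariance computation removes; so your approach, once completed, proves \Cref{thm:classical-upper} but not the paper's improvement of it.
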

\begin{theorem} \label{thm:classical-lower}
    (\cite{garg2023testingnonidenticallydistributedsamples}.)  
    Let $q$ denote the uniform distribution on~$[d]$.
    Suppose there is an algorithm that gets $c = 1$ sample each from distributions $p_1, \dots, p_T$ on $[d]$, and distinguishes (whp) the cases $p_{\textnormal{avg}} = q$ from $\dtv{p_{\textnormal{avg}}}{q} > 1/4$.  Then $T \geq \Omega(d)$.
\end{theorem}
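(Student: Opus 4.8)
The plan is a two-point (Le Cam) indistinguishability argument, and its only real content is the observation that when each source contributes just a \emph{single} sample, a \emph{deterministic} source is statistically indistinguishable from a \emph{uniformly random} one. So I would pit two families of instances against each other. In the ``null'' family, take $p_1 = \cdots = p_T = q$, the uniform distribution on $[d]$; then the tester's input $(X_1,\dots,X_T)$, with each $X_i\sim p_i$ drawn independently, is simply $T$ i.i.d.\ uniform samples from $[d]$. In the ``far'' family, first draw $j_1,\dots,j_T$ i.i.d.\ uniformly from $[d]$ and then set $p_i$ to be the point mass at $j_i$. Since $p_{\textnormal{avg}} = \frac1T\sum_{i=1}^T p_i$ is then supported on at most $T$ coordinates, $q$ puts mass at least $(d-T)/d$ outside that support, so $\dtv{p_{\textnormal{avg}}}{q} \ge \frac12\cdot\frac{d-T}{d}$, which exceeds $1/4$ whenever $T < d/2$. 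Hence, for such $T$, every realization of the far family is a legitimate ``far'' instance of the testing problem.

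The crux is then immediate: in the far family $X_i = j_i$ with probability $1$, so the observed data $(X_1,\dots,X_T)$ is \emph{again} exactly $T$ i.i.d.\ uniform samples from $[d]$ --- the same distribution as in the null family. Therefore the tester's probability of outputting ``accept'' is the same whether its input is generated by the null family or by the far family (with the $p_i$'s drawn from the above prior). This contradicts the hypothesis that the tester accepts with probability $\ge 0.99$ when $p_{\textnormal{avg}} = q$ and rejects with probability $\ge 0.99$ on every far instance --- and hence rejects with probability $\ge 0.99$ on average over the prior. We conclude that no such tester exists when $T < d/2$, i.e.\ that any successful tester requires $T = \Omega(d)$.

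I do not expect a genuine obstacle: the subtlety is in \emph{not} over-engineering the hard instance --- the most extreme possible sources, namely deterministic ones, already produce data distributed identically to the null while forcing $p_{\textnormal{avg}}$ to be maximally spread out whenever $T$ is a constant factor below $d$. The only step requiring a line of care is the elementary bound $\dtv{p_{\textnormal{avg}}}{q} \ge \frac12(1 - T/d)$ via the missing-mass estimate above. It is worth remarking that this construction is precisely what collapses when $c = 2$ (as in Theorem~\ref{thm:classical-upper}): with two samples per source the tester can count within-source collisions, which a deterministic source exhibits with probability $1$ but a genuinely uniform source exhibits with probability only $1/d$, so deterministic sources can no longer masquerade as uniform.
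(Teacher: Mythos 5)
Your proof is correct, and it is the standard argument for this bound: the paper states \Cref{thm:classical-lower} as a citation to Garg et al.\ without reproducing a proof, and the point-mass-at-random-locations construction (which makes the $c=1$ data exactly i.i.d.\ uniform in both the null and far families while forcing $\dtv{p_{\textnormal{avg}}}{q} \geq \tfrac12(1-T/d)$) is precisely the known route. No gaps; the only boundary care needed is to take, say, $T \leq d/3$ so the far instances satisfy the strict inequality $\dtv{p_{\textnormal{avg}}}{q} > 1/4$, which you have effectively done.
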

The two takeaways from these theorems are: (1)~With $c$ as low as just~$2$, one can do hypothesis testing in the non-iid setting \emph{almost} as well as in the iid setting (and in fact just as well, up to constants, provided $\eps \geq d^{-1/4}$). (2)~$c = 2$ is optimal for this; when $c = 1$, the cost to test $p = q$ is as high as the cost to learn the whole distribution~$p$.

Intuitively, the reason that $c = 2$ is the ``correct answer'' is that almost all hypothesis testing algorithms are based on \emph{collision-counting}: i.e., estimating $\sum_{j = 1}^d p(j)^2$, the probability that two independent draws from a distribution~$p$ are the same.  Thus it seems plausible that having at least $c = 2$ samples from each $p_i$ is necessary (which \Cref{thm:classical-lower} shows is true); and one might also be hopeful that $c = 2$ suffices (which \Cref{thm:classical-upper} shows is true).

\section{Our results and methods}

The primary goal of this work is to extend non-iid hypothesis testing to the quantum case, but as a standalone first result (that may be read independently of the rest of the paper), we strengthen and extend the results for the classical case:
\begin{theorem} \label{thm:classical1}
    (In \Cref{sec:classical}.)     Fix distribution $q$ on $[d]$.  Then there is an algorithm, getting $c = 2$ samples each from distributions $p_1, \dots, p_T$ on $[d]$, that distinguishes (whp) the cases $p_{\textnormal{avg}} = q$ and $\dtv{p_{\textnormal{avg}}}{q} > \eps$, provided $T \gg \frac{\sqrt{d}}{\eps^2}$.
\end{theorem}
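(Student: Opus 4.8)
The plan is to improve the bound of \Cref{thm:classical-upper} from $T \gg \frac{\sqrt d}{\eps^2} + \frac{1}{\eps^4}$ down to the optimal $T \gg \frac{\sqrt d}{\eps^2}$ by eliminating the additive $\frac1{\eps^4}$ term. As noted in the excerpt, the natural approach is collision-counting. With $c=2$ samples $(x_i, y_i)$ drawn independently from each $p_i$, the pair $(x_i,y_i)$ is an unbiased sample from the product distribution $p_i \otimes p_i$, and averaging over $i$ gives us, in effect, access to samples from $\frac1T\sum_i p_i\otimes p_i$. The key algebraic point is that $\frac1T\sum_i p_i\otimes p_i = p_{\mathrm{avg}}\otimes p_{\mathrm{avg}} + \Delta$, where $\Delta = \frac1T\sum_i (p_i - p_{\mathrm{avg}})^{\otimes 2}$ is a \emph{positive-semidefinite} (as a diagonal matrix) perturbation — intuitively, a ``variance'' term. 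Thus a collision-based estimator for $\sum_j (p_{\mathrm{avg}}\otimes p_{\mathrm{avg}})(j,j) = \|p_{\mathrm{avg}}\|_2^2$ will systematically overcount by $\mathrm{Tr}(\Delta) = \frac1T\sum_i \|p_i - p_{\mathrm{avg}}\|_2^2 \ge 0$. The first step is to observe that this one-sided bias is harmless: under the null hypothesis $p_{\mathrm{avg}} = q$, the overcount is still nonnegative, so a test that accepts when the collision statistic is not too large will still pass; and under the alternative, $\|p_{\mathrm{avg}}\|_2^2$ is already large enough (after the usual chi-squared-vs-TV massaging) to be detected, and the extra nonnegative term only helps.

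The second step is to set up the precise estimator and control its variance. Following Chan--Diakonikolas--Valiant--Valiant-style analysis for the iid collision tester, one uses the statistic $Z = \sum_{i < i'} \mathbf{1}[\text{the four samples from sources } i, i' \text{ exhibit the appropriate collision pattern}]$ — or more precisely, for hypothesis testing against a known $q$, the Valiant--Valiant weighted/modified chi-squared statistic $Z = \sum_j \frac{(\widehat{N}_j - \text{(count predicted by }q))^2 - \widehat{N}_j}{q(j)}$ built from the collision counts $\widehat N_j$ across all $\binom T2$ cross-pairs (or across the $T$ within-source pairs, depending on which variant one uses). The mean of $Z$ is (a positive multiple of) $\dchi{p_{\mathrm{avg}}}{q}$ plus the nonnegative bias term discussed above. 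The crux is a variance bound: one needs $\mathrm{Var}[Z] \ll (\mathbb{E}[Z] + T\cdot\text{threshold})^2$ so that Chebyshev separates the two cases once $T \gg \frac{\sqrt d}{\eps^2}$. Here is where the non-iid-ness bites: the cross-pair indicators are no longer exchangeable, and the variance has extra terms coming from the heterogeneity of the $p_i$. One expands $\mathrm{Var}[Z]$ into contributions indexed by how many sources two pairs share (0, 1, or 2 shared indices), bounds each using $\|p_i\|_\infty \le 1$, $\|p_i\|_1 = 1$, and crucially the flattening/thresholding trick of Diakonikolas--Kane (restricting to coordinates $j$ with $q(j) \le \text{poly}(\eps)/\sqrt d$, and handling ``heavy'' coordinates separately by a simpler $\ell_1$ test) to get the $\sqrt d$ rather than $d$ dependence. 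The heterogeneity terms, after using $\frac1T\sum_i p_i = p_{\mathrm{avg}}$ and Cauchy--Schwarz, reduce to moments of $p_{\mathrm{avg}}$ and are absorbed into the same bound — this is precisely what lets us avoid paying the extra $\frac1{\eps^4}$.

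I expect the main obstacle to be exactly this variance computation in the non-iid setting: keeping careful track of which cross-terms $\mathbb{E}[\mathbf1[\ldots]\mathbf1[\ldots]]$ factor and which do not, and verifying that every term involving the individual $p_i$'s (rather than just $p_{\mathrm{avg}}$) is either nonnegative-and-mean-like (hence helpful) or bounded by $\frac1T$ times something controllable (hence negligible once $T$ is large). A secondary technical point is the reduction from testing in TV distance to testing in the (easier) modified-$\chi^2$ / $\ell_2$ sense, which requires the standard bucketing of $q$ by probability scale and a separate, cheap $\ell_1$-style test on the few heavy buckets — here one should check that the heavy-bucket test also tolerates the nonnegative heterogeneity bias, but since that test only needs an \emph{upper} estimate of $p_{\mathrm{avg}}$ on heavy coordinates and the bias again only inflates estimated mass, this goes through. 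Modulo these, the structure is: (i) algebraically isolate the nonnegative bias; (ii) show it only helps both soundness and completeness; (iii) bound the variance via pair-overlap expansion plus Diakonikolas--Kane flattening; (iv) conclude by Chebyshev.
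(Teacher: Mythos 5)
There is a genuine gap at the heart of steps (i)--(ii). You claim the nonnegative heterogeneity bias $\frac{1}{T}\sum_i \|p_i - p_{\textnormal{avg}}\|_2^2$ coming from within-source collisions ``only helps both soundness and completeness.'' It does not: that bias inflates the statistic in the \emph{null} case, where you need it to be small. Concretely, take $q$ uniform on $[d]$ and each $p_i$ a point mass on a distinct element so that $p_{\textnormal{avg}} = q$: every within-source pair collides with probability $1$, the bias term is $1 - 1/d$, and any threshold test based on within-source collision counts rejects the null outright, even though $\dtv{p_{\textnormal{avg}}}{q} = 0$. A one-sided bias on a statistic that must be small under the null is exactly the harmful direction, so the ``bias is harmless'' reduction collapses, and with it the plan of reusing an iid-style analysis on $\frac1T\sum_i p_i \otimes p_i$. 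Your hedge toward cross-pairs points in the right direction, but you never commit to it or compute its (different, $\gamma$-dependent) bias, and the variance computation you defer to is precisely where the non-iid structure must be confronted.

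What the paper does instead is avoid any bias argument: with $c=2$ one can build an \emph{exactly unbiased} estimator of $1+\dchisq{p_{\textnormal{avg}}}{q}$ by pairing the \emph{first} sample of source $s$ with the \emph{second} sample of source $t$ and averaging the weighted collision indicator $\sum_j \mathbf{1}[\bJ_s^{(1)}=j=\bJ_t^{(2)}]/q(j)$ over \emph{all} ordered pairs $(s,t)$, including $s=t$ (the two samples within a source are independent, so even the diagonal terms have the correct mean $\sum_j p_t(j)^2/q(j)$, and the average over $[T]^2$ reproduces the product $\sum_j p_{\textnormal{avg}}(j)^2/q(j)$ exactly). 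The work then goes entirely into a covariance expansion over how the index pairs $(s,t),(s',t')$ overlap, yielding $\Var \leq \frac{4\mu}{T^2\gamma} + \frac{4d}{T^2} + \frac{2\mu^{3/2}}{T\sqrt{\gamma}} + \frac{\mu}{T}$ with $\gamma=\min_j q(j)$, followed by Chebyshev. The $\sqrt{d}/\eps^2$ rate is then obtained not by Diakonikolas--Kane flattening of $q$ inside the statistic, but by proving a robust $\chi^2$ tester with explicit $\gamma$-dependence and reducing the total-variation statement to it by mixing both $p$ and $q$ with the uniform distribution (so $\gamma \gtrsim \theta/d$), the classical analogue of the depolarizing trick used in the quantum section. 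If you want to salvage your outline, replacing steps (i)--(ii) with this unbiased pairing (or carefully bias-correcting a cross-pairs-only statistic) is the essential missing idea; the heavy/light bucketing machinery is then unnecessary.
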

This strictly generalizes the optimal result in the iid case~\cite{paninski2008coincidence} by taking $p_1 = \cdots = p_T$.  In fact, we derive \Cref{thm:classical1} as an easy consequence of the following much stronger result, about \emph{robust} hypothesis testing with respect to the more stringent $\chi^2$-divergence notion of distance:
\begin{theorem} \label{thm:classical2}
    (In \Cref{sec:classical}.)     Fix distribution $q$ on $[d]$, and write $\gamma = \min\{q(j) : j \in [d]\}$.  For any parameter $\theta \geq 0$ there is an algorithm, getting $c = 2$ samples each from distributions $p_1, \dots, p_T$ on $[d]$, that distinguishes (whp) the cases $\dchisq{p_{\textnormal{avg}}}{q} \leq .99\theta$ and $\dchisq{p_{\textnormal{avg}}}{q} > \theta$, provided $T \gg \max\{\frac{\sqrt{d}}{\theta}, \frac{1}{\sqrt{\theta \gamma}}\}$.
\end{theorem}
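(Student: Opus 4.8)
The plan is to build an unbiased estimator of $\dchisq{p_{\textnormal{avg}}}{q} = \sum_{j=1}^d \frac{(p_{\textnormal{avg}}(j) - q(j))^2}{q(j)}$ from the $c=2$ samples, and then control its variance. Writing $p_{\textnormal{avg}}(j) = \frac1T\sum_i p_i(j)$ and expanding the square, the quantity to estimate is $\sum_j \frac{1}{q(j)}\bigl(p_{\textnormal{avg}}(j)^2 - 2 p_{\textnormal{avg}}(j) q(j) + q(j)^2\bigr)$; the cross term and the pure-$q$ term are known constants (the $-2p_{\textnormal{avg}}(j)q(j)$ piece has expectation $-2\Pr[X_i = j]q(j)$ averaged over a single sample, which is computable), so the real content is an unbiased estimator of $\sum_j \frac{p_{\textnormal{avg}}(j)^2}{q(j)} = \frac{1}{T^2}\sum_{i,i'} \sum_j \frac{p_i(j) p_{i'}(j)}{q(j)}$. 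For the diagonal terms $i = i'$, the two samples from source~$i$ give an unbiased estimate of $\sum_j \frac{p_i(j)^2}{q(j)}$ via the collision-type statistic $\frac{1}{q(X_i^{(1)})}\math1[X_i^{(1)} = X_i^{(2)}]$; for the off-diagonal terms $i \neq i'$, one sample from each of source~$i$ and source~$i'$ suffices, using $\frac{1}{q(X_i^{(1)})}\math1[X_i^{(1)} = X_{i'}^{(1)}]$. Summing these over all $\binom{T}{2}$ pairs (plus the $T$ diagonal contributions) and adding the known correction terms yields an unbiased estimator $\widehat{Z}$ of $\dchisq{p_{\textnormal{avg}}}{q}$.

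Next I would show that this estimator distinguishes the two cases, which requires a variance bound of the form $\Var[\widehat Z] \ll \theta^2$ (so that Chebyshev separates $\le .99\theta$ from $>\theta$) whenever $T \gg \max\{\frac{\sqrt d}{\theta}, \frac{1}{\sqrt{\theta\gamma}}\}$. Since $\widehat Z$ is a sum over independent sources of statistics that each depend on at most two sources, I would organize the variance computation by the number of ``shared'' sources between two terms of the expansion: pairs of terms sharing two sources, sharing one source, and sharing none (which are independent and contribute nothing to the covariance). This is exactly the structure of a degree-$2$ $U$-statistic / Efron--Stein-type decomposition, and the bookkeeping is the crux. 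One expects the dominant contributions to be: a ``diagonal'' term scaling like $\frac{1}{T^2}\sum_j \frac{p_{\textnormal{avg}}(j)}{q(j)^2}\cdot(\text{something})$ controlled by $\gamma^{-1}$, giving the $\frac{1}{\sqrt{\theta\gamma}}$ requirement; and a ``$\sqrt d$'' term of the form $\frac{1}{T^2}\|p_{\textnormal{avg}}/\sqrt q\|_2^2 \cdot \|{\cdot}\|$ or $\frac{1}{T^2}\sum_{j,k}\frac{p_{\textnormal{avg}}(j)p_{\textnormal{avg}}(k)}{q(j)q(k)}$-type cross terms whose Cauchy--Schwarz bound brings in $\sqrt d$ and the value $\theta$ of the divergence itself, giving the $\frac{\sqrt d}{\theta}$ requirement. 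Throughout I would use $p_{\textnormal{avg}}(j) \ge 0$, $\sum_j p_{\textnormal{avg}}(j) = 1$, and $q(j) \ge \gamma$ to reduce everything to $\ell_1/\ell_2$ norms of $p_{\textnormal{avg}}$ relative to $q$, and use the promise $\dchisq{p_{\textnormal{avg}}}{q} \le \theta$ (or the trivial a priori bound, handled by a preliminary case split on whether $\widehat Z$ is enormous) to absorb the divergence-dependent factors.

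The main obstacle I anticipate is the variance analysis of the off-diagonal (``$i \neq i'$'') part, because unlike the classical iid collision estimator there are $\Theta(T^2)$ such terms and their pairwise covariances when two terms share exactly one source must be summed carefully; a naive bound loses a factor of $T$. The resolution is to recognize that for a fixed shared source~$i$, the conditional-on-source-$i$ fluctuations across the $\Theta(T)$ partners are independent, so one gets cancellation, and the relevant quantity is $\frac1{T^2}\sum_j \frac{1}{q(j)^2}\bigl(\frac1T\sum_i p_i(j)\bigr)\bigl(\frac1T\sum_{i'} p_{i'}(j)^2\bigr)$-style expressions, which are bounded using $p_i(j) \le 1$ and $q(j) \ge \gamma$ together with $\sum_j p_{\textnormal{avg}}(j)^2/q(j) \le 1 + \theta$. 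Once the variance is shown to be $O(\frac{d}{T^2} + \frac{1}{\gamma T^2} + \frac{\theta}{T^2}\cdot\text{poly})$ — i.e.\ $\ll \theta^2$ under the stated hypothesis on~$T$ — Chebyshev's inequality finishes it. Finally, \Cref{thm:classical1} follows by taking $\theta = \eps^2$ (since $\dchisq{p_{\textnormal{avg}}}{q} \ge \dtv{p_{\textnormal{avg}}}{q}^2$ always and $=0$ when $p_{\textnormal{avg}} = q$), after a standard reduction that replaces a general $q$ by one with all masses $\gamma = \Omega(\eps^2/\sqrt d \cdot 1/\mathrm{poly})$ bounded below — e.g.\ by mixing $q$ with the uniform distribution and adjusting parameters, so that the $\frac{1}{\sqrt{\theta\gamma}}$ term never dominates $\frac{\sqrt d}{\theta}$.
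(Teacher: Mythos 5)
Your plan is essentially the paper's proof of \Cref{thm:classical3}: the same collision-type unbiased estimator of $\sum_j p_{\textnormal{avg}}(j)^2/q(j)$ (one pair of samples per source for the diagonal terms, one sample from each of two sources for the cross terms, weighted by $1/q(j)$), the same organization of the variance by how many sources two covarying terms share (the degree-$2$ $U$-statistic / Efron--Stein bookkeeping, which is indeed where all the work is), a Chebyshev finish, and the same style of reduction for \Cref{thm:classical1} (mixing with the uniform distribution to bound $\gamma$ below so that the $1/\sqrt{\theta\gamma}$ term never dominates).

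The one place where the plan as literally stated would fail is the success criterion $\Var[\widehat Z]\ll\theta^2$. That bound is unattainable in the far case: the variance of this estimator genuinely grows with $\mu=\dchisq{p_{\textnormal{avg}}}{q}$ (the paper's bound, \Cref{prop:classicalvar}, contains the terms $\mu/T$ and $\mu^{3/2}/(T\sqrt{\gamma})$), and there is no promise bounding $\mu$ from above when $\mu>\theta$. What is actually needed --- and what your parenthetical ``case split on whether $\widehat Z$ is enormous'' is gesturing at --- is the relative guarantee $\abs{\E[\widehat Z]-\mu}\ll\mu+\theta$ (here the bias is in fact zero) and $\stddev[\widehat Z]\ll\mu+\theta$, combined with a one-sided threshold test at $(1-c)\theta$; this is exactly \Cref{lemmaCheb}. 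Under $T\gg\max\{\sqrt{d}/\theta,\,1/\sqrt{\theta\gamma}\}$ the variance bound gives $\stddev[\widehat Z]\ll\sqrt{\mu\theta}+\mu^{3/4}\theta^{1/4}+\theta\le O(\mu+\theta)$, which suffices. So promote that hedge from a parenthesis to the actual criterion; with that replacement, and the covariance bookkeeping you correctly identify as the crux carried out (Cauchy--Schwarz on the shared-source terms, plus $q(j)\ge\gamma$), your plan coincides with the paper's argument.
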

In fact, our \Cref{thm:classical2} is new \emph{even in the iid case}; it is slightly stronger than the (iid) Hellinger-vs.-$\chi^2$ robust testing result of Daskalakis--Kamath--Wright~\cite{DKW2018}, which in turn generalizes than the (iid) version of \Cref{thm:classical1}.  See \Cref{sec:proofstmt} for details of these reductions.

\subsection{Warmup: testing the maximally mixed state}

We turn now to the quantum hypothesis testing problem, in the non-iid setting. 
As a warmup, let us first consider the flagship case of hypothesis testing for the maximally mixed state $\sigma = \frac{\Id}{d}$.  In the standard iid setting, it is known~\cite{MR4312358} that $n = \Theta(\frac{d}{\eps^2})$ copies of a $d$-dimensional state~$\rho$ are necessary and sufficient to distinguish $\rho = \frac{\Id}{d}$ from $\Dtr{\rho}{\frac{\Id}{d}} > \eps$.  We show a surprising result:  the upper bound continues to hold in the non-iid setting, \emph{even for $c = 1$}.
\begin{theorem}     \label{thm:mm}
    (In \Cref{sec:mm}.)  There is an algorithm, getting \emph{one} copy each of $d$-dimensional states $\rho_1, \dots, \rho_T$ (i.e., getting $\varrho = \rho_1 \otimes \cdots \otimes \rho_T$), that distinguishes (whp) the cases $\rho_{\textnormal{avg}} = \frac{\Id}{d}$ and $\Dtr{\rho_{\textnormal{avg}}}{\frac{\Id}{d}}>\epsilon$, provided $T \gg \frac{d}{\eps^2}$.
\end{theorem}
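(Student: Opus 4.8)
The plan is to reduce the problem, exactly as in the iid case, to estimating the purity $\operatorname{Tr}(\rho_{\textnormal{avg}}^2)$, and to do so with a single \emph{collective} measurement on the product state $\varrho = \rho_1 \otimes \cdots \otimes \rho_T$. Concretely, I would measure the Hermitian observable
\[
A \;=\; \frac{1}{T(T-1)}\sum_{i \neq j}\mathrm{SWAP}_{ij}
\]
(equivalently: perform weak Schur sampling to obtain a Young diagram $\lambda \vdash T$ and output $\tfrac{2}{T(T-1)}\sum_{(r,c)\in\lambda}(c-r)$, since $\sum_{i<j}\mathrm{SWAP}_{ij}$ is the image of the central element ``sum of all transpositions'' of $\mathbb{C}[S_T]$), obtaining a random outcome $X$, and accept iff $X < \tfrac1d + \tfrac{2\eps^2}{d}$. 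Because $\mathrm{SWAP}_{ij}$ acts nontrivially only on tensor factors $i,j$, we get $\mathbb{E}[X] = \operatorname{Tr}(A\varrho) = \tfrac{1}{T(T-1)}\sum_{i\neq j}\operatorname{Tr}(\rho_i\rho_j) = \bigl(T\operatorname{Tr}(\rho_{\textnormal{avg}}^2) - \tfrac1T\sum_i \operatorname{Tr}(\rho_i^2)\bigr)/(T-1)$, which, using $\tfrac1d \le \tfrac1T\sum_i\operatorname{Tr}(\rho_i^2) \le 1$ and $T \gg d/\eps^2$, equals $\operatorname{Tr}(\rho_{\textnormal{avg}}^2)$ up to an additive error $O(1/T) \ll \eps^2/d$. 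Thus $\mathbb{E}[X] \le \tfrac1d$ when $\rho_{\textnormal{avg}} = \Id/d$, while when $\Dtr{\rho_{\textnormal{avg}}}{\Id/d} > \eps$ the bound $\|\rho_{\textnormal{avg}} - \Id/d\|_1 \le \sqrt{d}\,\|\rho_{\textnormal{avg}} - \Id/d\|_2$ gives $\delta := \operatorname{Tr}(\rho_{\textnormal{avg}}^2) - \tfrac1d > \tfrac{4\eps^2}{d}$ and hence $\mathbb{E}[X] \ge \tfrac1d + \tfrac{3\eps^2}{d}$. So the two cases are separated in expectation, and the whole theorem comes down to the variance bound $\operatorname{Var}(X) = \operatorname{Tr}(A^2\varrho) - \operatorname{Tr}(A\varrho)^2 = O(\delta/T) + O(1/T^2)$.

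To prove that bound I would expand $A^2 = \tfrac{1}{T^2(T-1)^2}\sum_{i\neq j}\sum_{k\neq l}\mathrm{SWAP}_{ij}\,\mathrm{SWAP}_{kl}$ and group the terms of $\operatorname{Tr}(A^2\varrho)$ against the matching terms of $\operatorname{Tr}(A\varrho)^2$ according to $|\{i,j\} \cap \{k,l\}| \in \{0,1,2\}$. The disjoint terms ($0$ shared indices) have $\operatorname{Tr}(\mathrm{SWAP}_{ij}\mathrm{SWAP}_{kl}\varrho) = \operatorname{Tr}(\rho_i\rho_j)\operatorname{Tr}(\rho_k\rho_l)$ and cancel exactly against the corresponding part of $\operatorname{Tr}(A\varrho)^2$. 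The $2$-overlap terms ($\{k,l\} = \{i,j\}$) have $\mathrm{SWAP}_{ij}\mathrm{SWAP}_{kl} = \Id$, so they contribute $\tfrac{1}{T^2(T-1)^2}\sum(1 - \operatorname{Tr}(\rho_i\rho_j)^2) \le \tfrac{2}{T(T-1)}$. The $1$-overlap terms involve three distinct indices, and the product of two overlapping transpositions is a $3$-cycle, so each such $\operatorname{Tr}(\mathrm{SWAP}_{ij}\mathrm{SWAP}_{kl}\varrho)$ is the trace of a product of three distinct $\rho$'s; summing over the four index patterns that produce three distinct indices reorganizes the $\mathrm{SWAP}$-trace part into $4T^3\operatorname{Tr}(\rho_{\textnormal{avg}}^3) + O(T^2)$ and the canceling part into $4T^2\sum_i\operatorname{Tr}(\rho_i\rho_{\textnormal{avg}})^2 + O(T^2)$. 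Hence the $1$-overlap contribution to $\operatorname{Var}(X)$ is at most $\tfrac{4}{(T-1)^2}\bigl(T\operatorname{Tr}(\rho_{\textnormal{avg}}^3) - \sum_i\operatorname{Tr}(\rho_i\rho_{\textnormal{avg}})^2\bigr)/T + O(1/T^2)$; by Cauchy--Schwarz $\tfrac1T\sum_i\operatorname{Tr}(\rho_i\rho_{\textnormal{avg}})^2 \ge \operatorname{Tr}(\rho_{\textnormal{avg}}^2)^2$, and the elementary identity $\operatorname{Tr}(\rho^3) - \operatorname{Tr}(\rho^2)^2 = \tfrac{\delta}{d} + \operatorname{Tr}\bigl((\rho - \tfrac{\Id}{d})^3\bigr) - \delta^2 \le 2\delta$ (valid for any state, using $\|\rho - \tfrac{\Id}{d}\|_\infty \le 1$) applied to $\rho = \rho_{\textnormal{avg}}$ gives $\operatorname{Tr}(\rho_{\textnormal{avg}}^3) - \operatorname{Tr}(\rho_{\textnormal{avg}}^2)^2 \le 2\delta$. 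Altogether $\operatorname{Var}(X) = O(\delta/T) + O(1/T^2)$.

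With this in hand the conclusion is routine Chebyshev. If $\rho_{\textnormal{avg}} = \Id/d$ then $\delta = 0$, so $\operatorname{Var}(X) = O(1/T^2) \ll \eps^4/d^2$ while $\mathbb{E}[X] \le \tfrac1d$, so $X < \tfrac1d + \tfrac{2\eps^2}{d}$ whp. If $\Dtr{\rho_{\textnormal{avg}}}{\Id/d} > \eps$ then $\delta > 4\eps^2/d$ forces $\mathbb{E}[X] - (\tfrac1d + \tfrac{2\eps^2}{d}) \ge \delta/4$, while $\operatorname{Var}(X) = O(\delta/T) + O(1/T^2) \ll \delta^2$ (here $T \gg d/\eps^2$ together with $\delta > 4\eps^2/d$ gives $1/T \ll \delta$), so $X \ge \tfrac1d + \tfrac{2\eps^2}{d}$ whp. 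Taking the hidden constant in $T \gg d/\eps^2$ large enough makes both error probabilities at most $0.01$.

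I expect the one genuinely delicate point to be the $1$-overlap term of the variance: one must carefully enumerate the four index patterns yielding three distinct indices, verify that after normalization the $\mathrm{SWAP}\cdot\mathrm{SWAP}$ traces and the canceling $\operatorname{Tr}(\rho_i\rho_j)\operatorname{Tr}(\rho_k\rho_l)$ terms reassemble into $\operatorname{Tr}(\rho_{\textnormal{avg}}^3)$ and $\sum_i\operatorname{Tr}(\rho_i\rho_{\textnormal{avg}})^2$ with the \emph{same} leading constant, and control the $O(T^2)$ ``repeated-index'' corrections. (This variance estimate can alternatively be phrased as an instance of a quantum Efron--Stein decomposition of the degree-$2$ observable $A$, which is the route that generalizes to an arbitrary hypothesis state $\sigma$; for the maximally mixed case the direct computation above seems simplest.) Everything else --- the bias computation, the reduction from trace distance to purity, the fact that measuring the Hermitian observable $A$ is a legitimate collective measurement on the product state $\varrho$, and the Chebyshev steps --- is routine.
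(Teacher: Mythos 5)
Your proposal is correct, and it is built around the same observable as the paper (the averaged pairwise swap $\avg_{i\neq j} S_{ij}$, up to normalization and a constant shift) with the same bias analysis and the same Chebyshev finish; the one genuine difference is how the variance is bounded. The paper runs the variance through the quantum Efron--Stein inequality (\Cref{cor:2loc}), writing the observable as $\sum_i A_i$ with $A_i$ the swaps touching component $i$ and bounding $\sum_i \E_\varrho[(\mathcal{D}_i A_i)^2]$; you instead expand $\Tr[A^2\varrho]-\Tr[A\varrho]^2$ directly and organize the cancellation by the overlap pattern $|\{i,j\}\cap\{k,l\}|\in\{0,1,2\}$. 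These are the same computation in different clothing: Efron--Stein automates the exact cancellation of the disjoint ($0$-overlap) terms at the cost of a factor of $4$ and is the tool the paper reuses verbatim for the general-$\sigma$ and unknown-states testers, whereas your direct bookkeeping is elementary and self-contained but is exactly the part that becomes painful for general $\sigma$ (as you anticipate). The key quantities are identical in both routes: the $1$-overlap terms reduce to $T\Tr[\rho_{\mathrm{avg}}^3]-\sum_i\Tr[\rho_i\rho_{\mathrm{avg}}]^2$, controlled by Cauchy--Schwarz plus the identity $\Tr[\rho_{\mathrm{avg}}^3]=\frac{1}{d^2}+\frac{3}{d}\Tr[\Delta^2]+\Tr[\Delta^3]$ with $\Tr[\Delta^3]\le\|\Delta\|_\infty\Tr[\Delta^2]$, which is \Cref{eqn:used} of the paper. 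Two small remarks: your displayed $1$-overlap bound carries a stray extra factor of $1/T$ (the correct normalization gives $\frac{4}{(T-1)^2}\bigl(T\Tr[\rho_{\mathrm{avg}}^3]-\sum_i\Tr[\rho_i\rho_{\mathrm{avg}}]^2\bigr)$, which is $O(\delta/T)$ as you conclude, so nothing downstream is affected); and since your bias and variance bounds are already stated in terms of $\delta=\DHSsq{\rho_{\mathrm{avg}}}{\frac{\Id}{d}}$, you have in effect also proved the paper's stronger tolerant Hilbert--Schmidt version (\Cref{thm:mm2}), not just the trace-distance corollary.
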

In fact, we prove our algorithm has the following stricter stronger guarantee, for any $\theta \geq 0$: It distinguishes $\DHSsq{\rho_{\textnormal{avg}}}{\frac{\Id}{d}} \leq .99\theta$ from $\DHSsq{\rho_{\textnormal{avg}}}{\frac{\Id}{d}} > \theta$ (whp), provided $T \gg \frac{1}{\theta}$. (This stronger result was previously known in the iid case~\cite{BOW17}.)

\paragraph{Why is $c = 1$ possible?}  Our quantum algorithm is still based on ``quantum collision-counting''; i.e., estimating $\Tr[\rho_{\textnormal{avg}}^2]$.  So one might ask why the $c = 1$ lower bound from \Cref{thm:classical-lower} does not apply.  In fact, it \emph{does} still apply in the quantum case, but it ``only'' shows that $\Omega(d)$ is a lower bound.  This is indeed a high lower bound in the classical case, but in the quantum case we anyway require $\Omega(d)$ copies even in the iid case!  Thus there is no immediate barrier to matching the iid result in the non-iid case with $c = 1$; and indeed, we show this is possible.  With $c = 1$, one \emph{does} have the difficulty that it is impossible to come up with an unbiased estimator for $\Tr[\rho_{\textnormal{avg}}^2]$, as one can in the $c = 2$ case.  However, we are able to give a natural estimator whose bias (and variance) is small enough that $T = O(\frac{d}{\eps^2})$ suffices.

\subsection{Our most general result}

Finally, our furthest-reaching theorem is the following significant generalization of \Cref{thm:mm}.  It shows that with $c = 1$, the same copy complexity of $T \gg \frac{d}{\eps^2}$ can be achieved for hypothesis-testing \emph{any} $d$-dimensional state $\sigma$.
It moreover provides a robust testing result with respect to the more stringent quantum (Bures) $\chi^2$-divergence:
\begin{theorem}     \label{thm:main}
    (In \Cref{sec:main}.)  Fix a $d$-dimensional quantum state $\sigma$, and write $\gamma$ for the minimum eigenvalue of~$\sigma$.  For any parameter $\theta \geq 0$, there is an algorithm, getting one copy each of $d$-dimensional states $\rho_1, \dots, \rho_T$ (i.e., getting $\varrho = \rho_1 \otimes \cdots \otimes \rho_T$), that distinguishes (whp) the cases $\DBchi{\rho_{\textnormal{avg}}}{\sigma} \leq .99\theta$ and  $\DBchi{\rho_{\textnormal{avg}}}{\sigma} > \theta$, provided $T \gg \max\{\frac{d}{\theta}, \frac{\sqrt{d}}{\sqrt{\theta \gamma}}\}$.

    In particular (\Cref{cor:main2}), $T \gg \frac{d}{\eps^2}$ suffices to distinguish (whp) $\rho_{\textnormal{avg}} = \sigma$ and \mbox{$\Dtr{\rho_{\textnormal{avg}}}{\sigma} > \eps$}.
\end{theorem}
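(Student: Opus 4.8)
The plan is to build an estimator for $\DBchi{\rho_{\textnormal{avg}}}{\sigma}$ from a single collective measurement on $\varrho=\rho_1\otimes\cdots\otimes\rho_T$, to bound its bias by a quantity controlled by $\DBchi{\rho_{\textnormal{avg}}}{\sigma}$ itself, and to bound its variance via the quantum Efron--Stein inequality. Write $\mathcal{J}_\sigma(X)=\tfrac12(\sigma X+X\sigma)$, so that $\DBchi{\rho}{\sigma}=\Tr[(\rho-\sigma)\,\mathcal{J}_\sigma^{-1}(\rho-\sigma)]$, with the heat-kernel form $\mathcal{J}_\sigma^{-1}(X)=\int_0^\infty e^{-t\sigma/2}Xe^{-t\sigma/2}\,dt$. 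Since $\mathcal{J}_\sigma^{-1}(\sigma)=\Id$, this gives $\DBchi{\rho_{\textnormal{avg}}}{\sigma}=\Tr[\rho_{\textnormal{avg}}\,\mathcal{J}_\sigma^{-1}(\rho_{\textnormal{avg}})]-1=\tfrac1{T^2}\sum_{i,j}\Tr[\rho_i\,\mathcal{J}_\sigma^{-1}(\rho_j)]-1$. For $i\ne j$, $\Tr[\rho_i\,\mathcal{J}_\sigma^{-1}(\rho_j)]=\int_0^\infty\Tr[\rho_i\,e^{-t\sigma/2}\rho_j\,e^{-t\sigma/2}]\,dt$ is the expectation on $\rho_i\otimes\rho_j$ of the ``$\sigma$-twisted swap'' observable $N_\sigma=\int_0^\infty (e^{-t\sigma/4}\otimes e^{-t\sigma/4})\,\mathrm{SWAP}\,(e^{-t\sigma/4}\otimes e^{-t\sigma/4})\,dt$ (one truncates the integral and samples $t$ to get genuinely bounded observables). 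The estimator is $\widehat\chi:=\big\langle \tfrac1{\binom{T}{2}}\sum_{i<j}(N_\sigma)_{ij}\big\rangle_\varrho-1$, and the test rejects when $\widehat\chi$ exceeds a threshold slightly below $\theta$.

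Taking expectations, $\E[\widehat\chi]=\tfrac{T}{T-1}\DBchi{\rho_{\textnormal{avg}}}{\sigma}-D$ where $D=\tfrac1{T(T-1)}\sum_i\Tr[\rho_i\,\mathcal{J}_\sigma^{-1}(\rho_i)]$ is the bias incurred because with $c=1$ we cannot form the ``self-collision'' terms $i=j$. The key point is that this bias is small: since $\tfrac2{\lambda_k+\lambda_l}\le\tfrac12(\tfrac1{\lambda_k}+\tfrac1{\lambda_l})$ in $\sigma$'s eigenbasis and $\rho_i\preceq\Id$, we get $\Tr[\rho_i\,\mathcal{J}_\sigma^{-1}(\rho_i)]\le\Tr[\rho_i\sigma^{-1}\rho_i]\le\Tr[\sigma^{-1}\rho_i]$, hence $D\le\tfrac1T\Tr[\sigma^{-1}\rho_{\textnormal{avg}}]=\tfrac dT+\tfrac1T\Tr[\sigma^{-1}(\rho_{\textnormal{avg}}-\sigma)]$. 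Using the inner product $\langle A,B\rangle_\sigma:=\Tr[A\,\mathcal{J}_\sigma(B)]$ and writing $Y=\mathcal{J}_\sigma^{-1}(\rho_{\textnormal{avg}}-\sigma)$, one has $\Tr[\sigma^{-1}(\rho_{\textnormal{avg}}-\sigma)]=\langle\sigma^{-1},Y\rangle_\sigma\le\sqrt{\langle\sigma^{-1},\sigma^{-1}\rangle_\sigma}\,\sqrt{\langle Y,Y\rangle_\sigma}=\sqrt{\Tr[\sigma^{-1}]}\,\sqrt{\DBchi{\rho_{\textnormal{avg}}}{\sigma}}\le\sqrt{d/\gamma}\,\sqrt{\DBchi{\rho_{\textnormal{avg}}}{\sigma}}$. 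Thus $D\le\tfrac dT+\tfrac1T\sqrt{\tfrac d\gamma}\sqrt{\DBchi{\rho_{\textnormal{avg}}}{\sigma}}$: under $T\gg\tfrac d\theta$ the first term is $\ll\theta$, and under $T\gg\tfrac{\sqrt d}{\sqrt{\theta\gamma}}$ the second is $\ll\sqrt\theta\,\sqrt{\DBchi{\rho_{\textnormal{avg}}}{\sigma}}\le\tfrac12(\theta+\DBchi{\rho_{\textnormal{avg}}}{\sigma})$, so $\E[\widehat\chi]=\DBchi{\rho_{\textnormal{avg}}}{\sigma}\pm o(\theta+\DBchi{\rho_{\textnormal{avg}}}{\sigma})$, which separates $\DBchi{\rho_{\textnormal{avg}}}{\sigma}\le.99\theta$ from $\DBchi{\rho_{\textnormal{avg}}}{\sigma}>\theta$.

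The crux — and what I expect to be the main obstacle — is showing $\mathrm{Var}(\widehat\chi)=o\big(\theta^2+\DBchi{\rho_{\textnormal{avg}}}{\sigma}^2\big)$ under the same hypotheses on $T$. A naive bound using the operator norm of $N_\sigma$ (which is of order $1/\gamma$) is far too weak; instead, since $\widehat\chi$ is a symmetric function of $T$ independent systems, one applies the quantum Efron--Stein inequality (and, for sharp constants, the quantum Efron--Stein decomposition) introduced in the paper: $\mathrm{Var}(\widehat\chi)$ is controlled by a sum of ``local influences'', one per system $i$, measuring how much the collective outcome changes when system $i$ is resampled. For the twisted-swap statistic these influences reduce to sums of terms such as $\Tr[\rho_i\,\mathcal{J}_\sigma^{-1}(\rho_j-\sigma)]$ and $\Tr[(e^{-t\sigma/2}\otimes e^{-t\sigma/2})(\rho_{\textnormal{avg}}\otimes\rho_{\textnormal{avg}})]$, which are small precisely when $\rho_{\textnormal{avg}}$ is close to $\sigma$; this is where $\DBchi{\rho_{\textnormal{avg}}}{\sigma}$ and the $d,\gamma$ dependencies enter, and the noncommutativity of $\rho_i$, $\sigma$ and $\mathrm{SWAP}$ is what makes the argument genuinely quantum. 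Assembling the influence bounds should give $\mathrm{Var}(\widehat\chi)\lesssim\tfrac{d\,(\theta+\DBchi{\rho_{\textnormal{avg}}}{\sigma})}{T}+\tfrac{d^2}{T^2}+\tfrac{d\,(\theta+\DBchi{\rho_{\textnormal{avg}}}{\sigma})}{\gamma T^2}$, which is $o\big(\theta^2+\DBchi{\rho_{\textnormal{avg}}}{\sigma}^2\big)$ exactly when $T\gg\max\{\tfrac d\theta,\tfrac{\sqrt d}{\sqrt{\theta\gamma}}\}$; Chebyshev then finishes the test.

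Finally, the ``in particular'' clause follows from two facts. First, $\DBchi{\rho}{\sigma}\ge4\,\Dtr{\rho}{\sigma}^2$ for all $\rho$: with $Y=\mathcal{J}_\sigma^{-1}(\rho-\sigma)$ and $M=\mathrm{sgn}(\rho-\sigma)$ (so $-\Id\preceq M\preceq\Id$ and $M^2$ is a projector), $2\Dtr{\rho}{\sigma}=\Tr[M(\rho-\sigma)]=\langle M,Y\rangle_\sigma\le\sqrt{\langle M,M\rangle_\sigma}\,\sqrt{\langle Y,Y\rangle_\sigma}$ with $\langle M,M\rangle_\sigma=\Tr[\sigma M^2]\le1$ and $\langle Y,Y\rangle_\sigma=\Tr[Y(\rho-\sigma)]=\DBchi{\rho}{\sigma}$. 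Second, to remove the dependence on $\gamma$, apply the theorem to the smoothed states $\rho_i'=\tfrac12\rho_i+\tfrac12\tfrac{\Id}{d}$ and target $\sigma'=\tfrac12\sigma+\tfrac12\tfrac{\Id}{d}$ — each $\rho_i'$ is preparable from one copy of $\rho_i$ by a fair coin flip — for which $\gamma(\sigma')\ge\tfrac1{2d}$ and $\Dtr{\rho_{\textnormal{avg}}'}{\sigma'}=\tfrac12\Dtr{\rho_{\textnormal{avg}}}{\sigma}$. Taking $\theta=\eps^2$ turns $\Dtr{\rho_{\textnormal{avg}}}{\sigma}>\eps$ into $\DBchi{\rho_{\textnormal{avg}}'}{\sigma'}>4(\eps/2)^2=\theta$ and $\rho_{\textnormal{avg}}=\sigma$ into $\DBchi{\rho_{\textnormal{avg}}'}{\sigma'}=0$, while the requirement becomes $T\gg\max\{\tfrac d{\eps^2},\tfrac{\sqrt d}{\sqrt{\eps^2/(2d)}}\}=\tfrac d{\eps^2}$ (for $\eps$ bounded).
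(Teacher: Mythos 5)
Your overall strategy coincides with the paper's: your ``$\sigma$-twisted swap'' $N_\sigma$ is exactly the observable $C=\sum_{i,j}\ket{ji}\!\bra{ij}/q(i,j)$ used in \Cref{sec:main} (the heat-kernel integral merely reproduces the weights $2/(q(i)+q(j))$), your bias bound is the concavity-deficit \Cref{lemma_conc_chi} in different clothing, and your deduction of the ``in particular'' clause --- smoothing with the maximally mixed state together with the direct Cauchy--Schwarz proof that $\DBchi{\rho}{\sigma}\ge 4\Dtr{\rho}{\sigma}^2$ --- is correct and in fact somewhat more streamlined than the route through \Cref{cor:main0}. (A small slip: $\E[\widehat\chi]$ also picks up a $+\tfrac{1}{T-1}$ from completing the off-diagonal sum to the full double sum, but this is harmlessly absorbed into your $D$, which is at least $\tfrac{1}{T-1}$.)

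The genuine gap is the variance bound, which you state as a hoped-for estimate rather than prove, and which is where essentially all of the work in \Cref{sec:main} lies. The difficulty is not mere bookkeeping. Writing $M_t=\tfrac{1}{T^2}\sum_{s\ne t}C_{st}$, the Efron--Stein bound \Cref{cor:2loc} gives $\tfrac14\Var_\varrho[M]\le\sum_t\E_\varrho[M_t^2]-\sum_t\E_{\varrho\otimes\varrho}[(M_t\otimes\Id)F_t(M_t\otimes\Id)F_t]$, and one cannot simply estimate the first sum and discard the second: $\sum_t\E_\varrho[M_t^2]$ contains a contribution of order $\tfrac1T\Tr[(\rho\otimes\rho\otimes\rho)C_{12}C_{13}]\approx\tfrac1T$ as well as the term $-\tfrac{2}{T^3}\sum_t\E_{\rho_t\otimes\rho_t}[C]$ of magnitude up to $\tfrac{1}{\gamma T^2}$, and neither is $o((\mu+\theta)^2)$ under the stated hypotheses on $T$ (take $\mu=0$ and $\theta$ small: $T\gg d/\theta$ only gives $\tfrac1T\ll\theta/d$, which need not be $\ll\theta^2$). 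These terms must be cancelled \emph{exactly} against identical terms extracted from the subtracted Efron--Stein expression; arranging this cancellation is the content of \Cref{lemmavar1,lemmavar2}, and controlling what remains requires the positivity observation $\Tr[(R\otimes S\otimes S)C_{12}C_{13}]\ge0$ together with the four estimates of \Cref{lemma_miscineq}. So the sentence ``assembling the influence bounds should give [the displayed variance estimate]'' is precisely the step that needs a proof; as written, your argument establishes the bias control and the reductions, but not the concentration.
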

The iid case of this theorem was proven (though not quite stated in this way) in~\cite{BOW17}.
As in that paper, and intermediate to the trace-distance consequence, our theorem also straightforwardly implies an infidelity-vs-$\chi^2$ robust testing result that matches the iid case (see \Cref{cor:main0}).

{

\subsection{Identity testing  of unknown states}

Similarly to~\cite{BOW17}, we can extend the certification result to identity testing between unknown sources:

\begin{theorem}     \label{thm:unknownstates}
    (In \Cref{sec:us}.)   For any parameter $\theta \geq 0$, there is an algorithm, getting one copy each of $d$-dimensional states $\rho_1, \dots, \rho_T$, $\sigma_1, \dots, \sigma_T$ (i.e., getting $\varrho = \rho_1 \otimes \cdots \otimes \rho_T\otimes  \sigma_1 \otimes \cdots \otimes \sigma_T $), that distinguishes (whp) the cases $\DHSsq{\rho_{\textnormal{avg}}}{\sigma_{\textnormal{avg}}} \leq .99\theta$ and $\DHSsq{\rho_{\textnormal{avg}}}{\sigma_{\textnormal{avg}}} > \theta$, provided $T \gg \frac{1}{\theta}$.
    
    In particular, $T \gg \frac{d}{\eps^2}$ suffices to distinguish (whp) $\rho_{\textnormal{avg}} = \sigma_{\textnormal{avg}}$ and \mbox{$\Dtr{\rho_{\textnormal{avg}}}{\sigma_{\textnormal{avg}}} > \eps$}.
\end{theorem}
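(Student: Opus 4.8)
The plan is to follow the proof of the Hilbert--Schmidt guarantee behind \Cref{thm:mm}, but with the single family of $T$ systems replaced by the $2T$ systems $\rho_1,\dots,\rho_T,\sigma_1,\dots,\sigma_T$, and the ``collision observable'' $\sum_{i<j}\mathrm{SWAP}_{ij}$ replaced by a \emph{signed} version. First, the trace-distance consequence follows from the $\DHSsq{\cdot}{\cdot}$ statement by exactly the reduction used for \Cref{thm:main}: for Hermitian $X$ one has $\|X\|_1\le\sqrt d\,\|X\|_2$, so $\Dtr{\rho_{\textnormal{avg}}}{\sigma_{\textnormal{avg}}}>\eps$ forces $\DHSsq{\rho_{\textnormal{avg}}}{\sigma_{\textnormal{avg}}}=\|\rho_{\textnormal{avg}}-\sigma_{\textnormal{avg}}\|_2^2>4\eps^2/d$, while $\rho_{\textnormal{avg}}=\sigma_{\textnormal{avg}}$ gives $0$; taking $\theta\asymp\eps^2/d$ turns $T\gg 1/\theta$ into $T\gg d/\eps^2$. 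So fix $\theta$ and abbreviate $\Delta:=\DHSsq{\rho_{\textnormal{avg}}}{\sigma_{\textnormal{avg}}}=\|\rho_{\textnormal{avg}}-\sigma_{\textnormal{avg}}\|_2^2$. Index the systems by $a\in[2T]$, setting $(\tau_a,s_a)=(\rho_a,+1)$ for $a\le T$ and $(\tau_a,s_a)=(\sigma_{a-T},-1)$ for $a>T$, so that $\rho_{\textnormal{avg}}-\sigma_{\textnormal{avg}}=\tfrac1T\sum_a s_a\tau_a$ and $\Delta=\tfrac1{T^2}\sum_{a,b}s_as_b\Tr[\tau_a\tau_b]$. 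Measure the Hermitian observable $D:=\sum_{1\le a<b\le 2T}s_as_b\,\mathrm{SWAP}_{ab}$ on $\varrho=\tau_1\otimes\cdots\otimes\tau_{2T}$, obtaining an eigenvalue $X$, and output $\widehat\Delta:=\tfrac2{T^2}X$. Since $\Tr[\mathrm{SWAP}_{ab}\,\varrho]=\Tr[\tau_a\tau_b]$, we get $\mathbb E[\widehat\Delta]=\tfrac1{T^2}\sum_{a\ne b}s_as_b\Tr[\tau_a\tau_b]=\Delta-\tfrac1{T^2}\sum_a\Tr[\tau_a^2]$, so $\widehat\Delta$ estimates $\Delta$ from below with bias at most $2/T$.

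Granting the variance bound $\mathrm{Var}[\widehat\Delta]=O\!\big(\tfrac1{T^2}+\tfrac\Delta T\big)$ established below, the theorem follows by thresholding $\widehat\Delta$ at $0.995\,\theta$: take $T\ge C/\theta$ with $C$ a large universal constant, so the bias is at most $0.001\theta$. In the close case $\mathbb E[\widehat\Delta]\le\Delta\le 0.99\theta$, so an error forces $\widehat\Delta$ to deviate from its mean by $\ge 0.005\theta$, which by Chebyshev has probability $O\!\big(\tfrac1{(T\theta)^2}+\tfrac1{T\theta}\big)\le 0.01$ for $C$ large. In the far case $\Delta>\theta$, so $\mathbb E[\widehat\Delta]\ge\Delta-0.001\theta$ lies at least $\Delta-0.996\theta\ge 0.004\Delta$ above the threshold (using $\Delta>\theta$); an error forces a deviation $\ge 0.004\Delta$, of probability $O\!\big(\tfrac1{(T\Delta)^2}+\tfrac1{T\Delta}\big)\le 0.01$ since $T\Delta\ge T\theta\ge C$.

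The crux is the variance bound, where I would invoke the quantum Efron--Stein decomposition of this paper (with respect to the product state $\varrho$). Because each summand $\mathrm{SWAP}_{ab}$ of $D$ is supported on only two tensor factors, the decomposition of $D$ has parts of degree $\le 2$ only, $D=D_\emptyset+\sum_a D_{\{a\}}+\sum_{a<b}D_{\{a,b\}}$, and orthogonality gives $\mathrm{Var}_\varrho[D]=\sum_a\|D_{\{a\}}\|_{2,\varrho}^2+\sum_{a<b}\|D_{\{a,b\}}\|_{2,\varrho}^2$. Each degree-$2$ part is bounded in operator norm by a constant (it is $\mathrm{SWAP}_{ab}$ minus lower-degree conditional expectations, all of norm $O(1)$), so $\|D_{\{a,b\}}\|_{2,\varrho}^2=O(1)$ and these contribute $O(T^2)$ in total. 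For the degree-$1$ parts, the identity $\Tr_b[\mathrm{SWAP}_{ab}(\Id_a\otimes\tau_b)]=\tau_b$ yields, for $a\le T$, $D_{\{a\}}=M_a-\Tr[\rho_a M_a]\,\Id$ with $M_a=T(\rho_{\textnormal{avg}}-\sigma_{\textnormal{avg}})-\rho_a$ (and the symmetric expression for $a>T$), hence $\|D_{\{a\}}\|_{2,\varrho}^2\le\Tr[\rho_a M_a^2]\le\|M_a\|_\infty^2\le\big(T\|\rho_{\textnormal{avg}}-\sigma_{\textnormal{avg}}\|_\infty+1\big)^2\le\big(T\sqrt\Delta+1\big)^2$, using $\|\cdot\|_\infty\le\|\cdot\|_2$. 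Summing over all $a$ gives $\sum_a\|D_{\{a\}}\|_{2,\varrho}^2=O(T^3\Delta+T)$, so $\mathrm{Var}_\varrho[D]=O(T^2+T^3\Delta)$ and $\mathrm{Var}[\widehat\Delta]=\tfrac4{T^4}\mathrm{Var}_\varrho[D]=O\!\big(\tfrac1{T^2}+\tfrac\Delta T\big)$.

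The main obstacle is precisely this last estimate: pulling the factor $\Delta$ out of the degree-$1$ mass of $D$—rather than settling for the crude $O(T^3)$ that a direct expansion of $\Tr[\varrho D^2]$ would give, which comes from the $\approx(2T)^3$ ``triple-overlap'' terms $\Tr[\tau_a\tau_b\tau_c]$—is what makes the Efron--Stein viewpoint essential; without it one obtains only $\mathrm{Var}[\widehat\Delta]=O(1/T)$ and hence the suboptimal requirement $T\gg 1/\theta^2$. A minor point to keep in mind, by contrast with \Cref{thm:mm}, is that the \emph{signed} observable $D$ is no longer central in the group algebra of $S_{2T}$, so one should not diagonalize it via Schur--Weyl duality; but the argument uses $D$ only as an abstract Hermitian observable together with the partial-trace identity above, so this causes no difficulty.
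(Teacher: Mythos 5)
Your proposal is correct, and your observable is in fact identical to the paper's: unwinding the signs $s_as_b$, your $\widehat\Delta=\tfrac{2}{T^2}\sum_{a<b}s_as_b S_{ab}$ is exactly the statistic $Z$ measured in \Cref{sec:us}, and your bias computation and Chebyshev thresholding match the paper's (\Cref{lemma_bound_unknown} plus \Cref{lemmaCheb}). Where you genuinely diverge is the variance bound. The paper applies the Efron--Stein \emph{inequality} in the two-local form of \Cref{cor:2loc}, discards the subtracted term via Cauchy--Schwarz, and then expands $\sum_t\E_\varrho[Z_t^2]$ into triple-product traces, recognizing the surviving piece as $\tfrac1T\Tr[(\rho_{\textnormal{avg}}+\sigma_{\textnormal{avg}})(\rho_{\textnormal{avg}}-\sigma_{\textnormal{avg}})^2]\le\tfrac2T\DHSsq{\rho_{\textnormal{avg}}}{\sigma_{\textnormal{avg}}}$. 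You instead invoke the full orthogonal \emph{decomposition} (\Cref{thm:ES}): since each swap is two-local, $D$ has only degree-$\le 2$ components, the degree-$2$ mass is $O(T^2)$ by a crude operator-norm bound, and the degree-$1$ component on slot $a$ is explicitly $M_a-\Tr[\tau_aM_a]\Id$ with $M_a=\pm T(\rho_{\textnormal{avg}}-\sigma_{\textnormal{avg}})-\tau_a$, whose operator norm is at most $T\sqrt{\Delta}+1$ via $\|\cdot\|_\infty\le\|\cdot\|_2$. This extracts the crucial $\Delta/T$ term from the same place (the degree-$1$/triple-overlap mass) but with essentially no trace bookkeeping, and as an exact identity for $\Var_\varrho[D]$ rather than an upper bound; the paper's route, by contrast, stays within the inequality form that it reuses verbatim for the harder $\chi^2$ estimator of \Cref{sec:main}, where the second Efron--Stein term cannot be discarded. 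Both arguments are sound and give the same $\Var[\widehat\Delta]=O\bigl(\tfrac{1}{T^2}+\tfrac{\Delta}{T}\bigr)$.
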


Interestingly, the proof using the quantum Efron--Stein inequality presented here requires far fewer calculations than the one in~\cite{BOW17} for the iid case and the same observable. It is also worth noting that an identity testing for two unknown states (such as this one) can be also used to test against a known state $\sigma$, since one can simply prepare $\sigma^{\otimes T}$ and use it as an input for the unknown-states algorithm. However, using this approach with the above algorithm only gives a guarantee in Hilbert--Schmidt distance, which is weaker than the  $\chi^2$-divergence guarantee from \Cref{thm:main}.

}
\subsection{A technical tool: quantum Efron--Stein inequality and decomposition}

The general method followed by all of our algorithms is to construct an observable $X$ whose mean is equal (when $c = 2$) or close to (when $c = 1$) the $\chi^2$-divergence of the average state $\rho_{\textnormal{avg}}$ (or distribution $p_{\textnormal{avg}}$) from the hypothesis.  As usual, the most difficult part of the analysis is bounding the variance of the observable.  

In this work, we observe that the Efron--Stein inequality can be quite helpful for simplifying the calculations involved.
The Efron--Stein inequality is a basic tool in classical statistics, but we are not aware of it being previously developed in the quantum setting.  In \Cref{sec:es}, we give two proofs of the quantum Efron--Stein inequality, paralleling the two different ways it can be proven in the classical case.  The first is a direct inductive proof, akin to the standard inductive/tensorization proof of classical Efron--Stein (e.g., in~\cite[Sec.~3]{houdrenote}).  The second follows immediately after making a quantum generalization of the entire Efron--Stein decomposition (aka Hoeffding/ANOVA/orthogonal decomposition; see, e.g.,~\cite[Sec.~8.3]{O'D14}).  We anticipate this quantum Efron--Stein decomposition having further applications in quantum statistics and information.

\section{Related work}\label{sec:review}

For a survey of relevant work on \emph{classical} distribution testing, we suggest~\cite{CanonneTopicsDT2022}.

\paragraph{Quantum state certification.} In the iid\ case, the sample complexity of obtaining a classical description of the density matrix with error $\epsilon$ in trace distance has been established as $\Theta\parens*{\frac{d^2}{\epsilon^2}}$~\cite{ODon2016,Haah2017}, while an error $\epsilon$ in Bures $\chi^2$ divergence can be guaranteed with $\widetilde{O}\parens*{\frac{d^2}{\epsilon}}$ copies~\cite{Flammia2024quantumchisquared}. The field of quantum \emph{property testing} is concerned with statistical tests for quantum states having sample complexity asymptotically smaller than the full-state tomography  benchmark. 
For an introduction to the broad field of quantum property testing, see~\cite{MdW}. In this paper, we focus on the property of being close to a target hypothesis state, focusing on the sample complexity without constraints on the measurements. See the tutorial~\cite{Roth2021} for a comprehensive view of the problem and other approaches.

In the iid case,
testing mixedness (closeness to the maximally mixed state) and quantum state certification (closeness to a known state~$\sigma$) have been established to require $\Theta\left(\frac{d}{\epsilon^2}\right)$ for trace-distance error~$\epsilon$. State certification is also possible if the target is unknown, but $\Theta\left(\frac{d}{\epsilon^2}\right)$ copies of it are provided (this problem is also referred as \emph{identity testing}). The upper bounds for testing uniformity and state certification were obtained via bounding the variance of the unbiased estimators of the Hilbert--Schmidt distance and of the Bures $\chi^2$-divergence. In both cases the estimator is a linear combination of two-body observables; i.e., it has nice locality properties. Thanks to that, to address non-identical product states, we do not need to change the estimators, and we are able to show that their performance does not change (up to constant factors) in the more general setting.

Recently, instance-optimal results were obtained for the problem of state certification, improving the worst-case dependence on the dimension~\cite{odonnell2025}. Extensions of identity testing have been obtained for testing identity of collections of unknown distribution, according to different sampling models~\cite{yu2020,Fan2023}, paralleling similar works in the classical setting~\cite{Levi2013, DK16}. For the special case of testing pure $n$-qubit states with product measurements on each qubits, it was shown that $O(n/\eps^2)$ copies are sufficient to test against almost any Haar random state~\cite{huang2024certify}, and this was later improved to a tester that works with any pure state of this form, using adaptive measurements~\cite{gupta2025}. 
Uniformity testing and state certification are also well-understood in the iid setting in the case of non-entangled measurements, with or without adaptivity, in which cases it requires a number of samples superlinear in $d$~\cite{chen2021,Bubeck2020,Chen2022a,Chen2022b}.

\paragraph{Learning with non-iid\ sources.}
Other works have considered the setting of learning with non-identical product sources motivated above. For example~\cite{Fan2024} considers an extension of shadow tomography~\cite{Aaronson2018, Badescu2024} for non-identical product states, where the interest is to estimate averages of 
observables; that work uses it to develop a quantum version of empirical risk minimization.  The work~\cite{Goc2022} considers device-independent state certification from product state sources, while~\cite{Neven2021} uses classical shadows~\cite{Huang2020} to estimate the expectation of quadratic observables on $\rho_{\mathrm{avg}}$.
A general framework for reducing learning problems in the non-iid\ setting to the iid\ one has been developed in~\cite{Fawzi2024}, based on a version of a quantum de Finetti-style result --- i.e., showing that a sufficiently small marginal $\rho^{A_1,\cdots,A_k}$ of a permutation-invariant state $\rho^{A_1,\cdots,A_N}$ is close to a \emph{convex combination} of iid states $\int d\nu(\sigma) \sigma^{\otimes k}$~\cite{Hudson1976,Konig2005,Christandl2007}. While the framework also encompasses sources with correlations, it requires formulating the learning problem as a three-step process: divide the source system randomly into training system and test system, then learn a property from the training system using an iid\ algorithm, and finally quantify the quality of the hypothesis with a cost function evaluated on the test system. This means that correctness of the algorithm amounts to learning a property of one of the states $\sigma$ in the convex combination $\int d\nu(\sigma) \sigma^{\otimes k}$, whereas we are interested in properties of the global average. We are also not aware of results that show that the measure $d\nu(\sigma)$ concentrates around $\rho_{\mathrm{avg}}$ {for the symmetrized version of our input model}, when $k>\Omega(T)$, although see~\cite{Duab2016} for an upper bound on $d\nu(\sigma)$. General results of this type may also introduce suboptimal dependence on the dimension, while our direct analysis bypasses these difficulties.
State tomography is also considered in the framework of~\cite{Fawzi2024}, extending work in~\cite{Christandl2012}, but this does not clarify if it is possible to solve the problem of learning $\rho_{\mathrm{avg}}$ with $O(\frac{d^2}{\epsilon^2})$ samples, while a simple argument shows that it is possible to do so with $O(\frac{d^3}{\epsilon^2})$ samples simply using the unbiased estimator from local measurements of~\cite{KUENG2017,Guta2020}.

\paragraph{Quantum concentration inequalities.}
Concentration inequalities provide an upper bound on the probability of deviations of a random variable from its mean. They are a fundamental tool in mathematics, physics and computer science~\cite{boucheron2013concentration}.
In the quantum setting, concentration inequalities have originally been investigated for product states \cite{goderis1989central,hartmann2004existence,Kuwahara_2016,abrahamsen2020short,de2021quantum,anshu2016concentration}, and later for high-temperature Gibbs states \cite{de2022quantum,KUWAHARA2020gaussian,anshu2016concentration,kuwahara2020eigenstate,De_Palma_2025}, time-evolved product states \cite{PRXQuantum.4.020340}, and states of spin lattices whose correlations decay exponentially with the distance \cite{anshu2016concentration}.
In particular, \cite[Theorem 3]{de2021quantum} proved an exponential concentration inequality for product states of qudits, which applies to any observable whose quantum Lipschitz constant is $O(1)$.
The quantum Lipschitz constant \cite[Definition 8]{de2021quantum}, \cite{De_Palma_2024} quantifies the maximum amount by which an observable can depend on a single qudit, and constitutes a quantum generalization of the classical Lipschitz constant for real-valued functions on the Boolean Hamming cube.
Closer to the spirit of the quantum Efron--Stein inequality proved in this work, \cite[Lemma F.1]{De_Palma_2023} proved a quadratic concentration inequality for product states of qudits, stating that the variance of any observable is upper bounded by the number of qudits times the square of the Lipschitz constant of the observable.
While the upper bound of \cite[Lemma F.1]{De_Palma_2023} contains for each qudit a worst-case contribution that quantifies the maximum amount by which the observable can depend on the qudit regardless of the state, \autoref{lem:quantumefron} proved here replaces such worst-case contributions with an expectation with respect to the quantum state.

\paragraph{Concavity deficit of relative entropies.} The main result we use to bound the bias of the estimator of the Bures $\chi^2$-divergence is \Cref{lemma_conc_chi}, which is a type of \emph{concavity deficit}. (Joint convexity of the $\chi^2$-divergence follows from data-processing.) Similar results,  also under the name of \emph{almost concavity}, have been proven for the von Neumann entropy and for Umegaki and Belavkin--Staszewski relative entropy~\cite{Bluhm2023}, generalizing work on continuity bounds on entropies~\cite{Alicki_2004,Winter2016,Shirokov2020}.

\section{Future directions, and paper outline}
Several extensions, using the techniques developed here, are possible and will be addressed in future work.  
For example, it would be interesting to study how the tester performs on \emph{correlated} states, where $\rho_{\mathrm{avg}}$ is defined with one-body marginals. Of particular interest is the case when the input $\varrho$ is product except on a subset of the systems, with the goal being to understand how large the subset can be for the tester to work successfully. 
Besides these questions, which seem susceptible to the tools developed herein, the problem of determining the sample complexity of \emph{learning} $\rho_{\mathrm{avg}}$ is completely open and fascinating. While testing seems to work well thanks to the locality properties of the observables, the optimal tomography algorithms do not seem to have clear locality properties, and it would be very interesting to understand how they perform on non-identical product states, or if they can otherwise be modified to maintain their performance in the iid case.

\subsection*{Outline of the paper}
We go over some statistical and quantum preliminaries in \Cref{sec:prelims}.
Following this, we develop the quantum Efron--Stein inequality \Cref{lem:quantumefron} and decomposition \Cref{thm:ES} in \Cref{sec:es}.  These are not completely essential for our subsequent hypothesis testing results, but they do shorten some calculations and are of independent interest.  In \Cref{sec:mm} we prove \Cref{thm:mm}, handling the case of testing the maximally mixed state.  Strictly speaking, this is subsumed by our subsequent general result, but we find it to be a simpler special case worthy of singling out. In \Cref{sec:main} we prove our most general \Cref{thm:main}.  { Finally, in \Cref{sec:us} we prove \Cref{thm:unknownstates}.} \Cref{sec:classical} contains our classical result, \Cref{thm:classical2}.

\section{Preliminaries} \label{sec:prelims}
\begin{notation}
    If $\varrho \in \C^{d \times d}$ is a quantum state, and $X \in \C^{d \times d}$, we write $\E_{\varrho}[X]$ for $\Tr[\varrho X]$, and $\Var_{\varrho}[X]$ for $\E_{\varrho}[X^2] - \E_{\varrho}[X]^2$.
\end{notation}

\subsection{Quantum distances and divergences}
We briefly recap a variety of notions of distances between quantum states.  (For more, see e.g.~\cite[Sec.~3.1]{BOW17}.) 
\begin{notation}
    With $\| {\cdot} \|_p$ denoting Schatten $p$-norm, let $\rho, \sigma \in \C^{d \times d}$ be quantum states.
    Their \emph{trace distance} $\Dtr{\rho}{\sigma}$ is $\frac12 \|\Delta\|_1$, where $\Delta = \rho - \sigma$, and their \emph{squared Hilbert--Schmidt} distance is $\DHSsq{\rho}{\sigma} = \|\Delta\|_2^2 = \Tr[\Delta^2]$. These distances equal~$0$ iff $\rho = \sigma$.
\end{notation}
\begin{fact} \label{fact:cs}
    Cauchy--Schwarz implies $\frac14 \DHSsq{\rho}{\sigma} \leq \Dtr{\rho}{\sigma}^2 \leq \frac14 d \cdot \DHSsq{\rho}{\sigma}$.
\end{fact}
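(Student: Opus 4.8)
The plan is to reduce both inequalities to the elementary comparison between the $\ell_1$ and $\ell_2$ norms of a real vector. The key observation is that $\Delta = \rho - \sigma$ is Hermitian, being a difference of density matrices, and hence unitarily diagonalizable with real eigenvalues $\lambda_1, \dots, \lambda_d$. By the definition of the Schatten norms in terms of the singular values $|\lambda_i|$, we then have $\|\Delta\|_1 = \sum_{i=1}^d |\lambda_i|$ and $\|\Delta\|_2^2 = \sum_{i=1}^d \lambda_i^2$. Unwinding the definitions $\Dtr{\rho}{\sigma} = \frac12\|\Delta\|_1$ and $\DHSsq{\rho}{\sigma} = \|\Delta\|_2^2$, the claimed chain $\frac14\DHSsq{\rho}{\sigma} \le \Dtr{\rho}{\sigma}^2 \le \frac14 d\cdot\DHSsq{\rho}{\sigma}$ becomes, after clearing the factor of $4$, simply $\sum_i \lambda_i^2 \le \big(\sum_i |\lambda_i|\big)^2 \le d\sum_i \lambda_i^2$.

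For the lower bound I would expand the square: $\big(\sum_i |\lambda_i|\big)^2 = \sum_i \lambda_i^2 + \sum_{i \ne j} |\lambda_i|\,|\lambda_j| \ge \sum_i \lambda_i^2$, since every cross term is nonnegative. For the upper bound I would invoke Cauchy--Schwarz on the vectors $(1,\dots,1)$ and $(|\lambda_1|,\dots,|\lambda_d|)$, giving $\sum_i |\lambda_i| \le \sqrt{\textstyle\sum_i 1}\,\sqrt{\textstyle\sum_i \lambda_i^2} = \sqrt{d}\,\big(\sum_i \lambda_i^2\big)^{1/2}$, and squaring yields the factor $d$. This is exactly the Cauchy--Schwarz step named in the statement, and it is the only place the ambient dimension $d$ enters.

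There is essentially no obstacle: the entire content is the Hermiticity of $\Delta$, which lets us pass to its eigenvalue vector, after which both bounds are one-line facts. The only point worth a moment's care is that the upper bound uses the crude count of $d$ eigenvalues; since $\Delta$ is a $d \times d$ matrix its rank is at most $d$, so the stated factor is correct (and could be sharpened to the rank of $\Delta$ if one wished, noting that $\Tr[\Delta] = 0$ forces $\Delta$ to have both a positive and a negative eigenvalue whenever $\rho \neq \sigma$).
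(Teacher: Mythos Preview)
Your argument is correct and is exactly the standard derivation the paper has in mind: the statement is recorded as a Fact without proof, attributed to Cauchy--Schwarz, and your reduction to the $\ell_1$/$\ell_2$ comparison on the eigenvalue vector of the Hermitian matrix $\Delta$ is the intended (and essentially only) route.
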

\begin{notation}
    We take the \emph{fidelity} between $\rho$ and~$\sigma$ to be $\mathrm{F}(\rho,\sigma) = \|\sqrt{\rho}\sqrt{\sigma}\|_1^2$, and write $\textnormal{Infid}(\rho, \sigma) = 1 - \mathrm{F}(\rho,\sigma) \in [0,1]$ for their \emph{infidelity}.
\end{notation}
\begin{notation}
    The \emph{Bures metric} $\DB{\rho}{\sigma}$ (which indeed satisfies the triangle inequality) is defined by $\DB{\rho}{\sigma}^2 = 2(1 - \sqrt{\mathrm{F}(\rho,\sigma)})$.  It is closely related to infidelity:  $\mathrm{Infid}(\rho,\sigma) \leq \DB{\rho}{\sigma}^2 \leq 2\mathrm{Infid}(\rho,\sigma)$.
\end{notation}
\begin{fact}
    (\cite{FG99}.) $\frac12 \DB{\rho}{\sigma}^2 \leq \Dtr{\rho}{\sigma}^2 \leq \textnormal{Infid}(\rho,\sigma)$.
\end{fact}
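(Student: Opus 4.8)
Both displayed inequalities are the Fuchs--van de Graaf inequalities, and the plan is to prove them separately, each after unfolding the definitions. Writing $\mathrm F = \mathrm F(\rho,\sigma)$ and recalling $\DB{\rho}{\sigma}^2 = 2(1-\sqrt{\mathrm F})$ together with $\textnormal{Infid}(\rho,\sigma) = 1-\mathrm F$, the content reduces to the two Fuchs--van de Graaf bounds $1-\sqrt{\mathrm F} \le \Dtr{\rho}{\sigma}$ (the lower bound, since $\tfrac12\DB{\rho}{\sigma}^2 = 1-\sqrt{\mathrm F}$) and $\Dtr{\rho}{\sigma}^2 \le 1-\mathrm F = \textnormal{Infid}(\rho,\sigma)$ (the upper bound). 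I would establish these with two standard tools that are not needed elsewhere in the paper: Uhlmann's theorem for the upper bound, and the measurement (variational) characterizations of trace distance and fidelity for the lower bound.

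For the upper bound, the plan is to reduce to pure states by purification. By Uhlmann's theorem I choose purifications $|\psi\rangle,|\phi\rangle$ of $\rho,\sigma$ on an enlarged system with $|\langle\psi|\phi\rangle|^2 = \mathrm F$. Since the partial trace is a CPTP map and trace distance is contractive under CPTP maps, $\Dtr{\rho}{\sigma} \le \Dtr{|\psi\rangle\langle\psi|}{|\phi\rangle\langle\phi|}$. A direct two-dimensional computation on $\mathrm{span}\{|\psi\rangle,|\phi\rangle\}$, where $|\psi\rangle\langle\psi| - |\phi\rangle\langle\phi|$ has eigenvalues $\pm\sqrt{1-|\langle\psi|\phi\rangle|^2}$, gives $\Dtr{|\psi\rangle\langle\psi|}{|\phi\rangle\langle\phi|} = \sqrt{1-|\langle\psi|\phi\rangle|^2} = \sqrt{1-\mathrm F}$. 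Squaring yields $\Dtr{\rho}{\sigma}^2 \le 1-\mathrm F = \textnormal{Infid}(\rho,\sigma)$.

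For the lower bound, I would pass to an optimal measurement. The Helstrom characterization gives $\Dtr{\rho}{\sigma} = \max_{\{M_a\}} \tfrac12\sum_a |\Tr[M_a\rho] - \Tr[M_a\sigma]|$ over POVMs, while Fuchs's measured-fidelity theorem gives $\sqrt{\mathrm F} = \min_{\{M_a\}} \sum_a \sqrt{\Tr[M_a\rho]\,\Tr[M_a\sigma]}$. Taking the POVM attaining the latter minimum and setting $p_a = \Tr[M_a\rho]$, $q_a = \Tr[M_a\sigma]$, I combine the classical identity $\tfrac12\sum_a|p_a-q_a| = 1 - \sum_a \min(p_a,q_a)$ with $\min(p_a,q_a)\le\sqrt{p_a q_a}$ to obtain $\tfrac12\sum_a|p_a-q_a| \ge 1-\sum_a\sqrt{p_a q_a} = 1-\sqrt{\mathrm F}$. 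Since $\Dtr{\rho}{\sigma}$ dominates the total-variation distance of the outcome distribution of any measurement, this gives $\Dtr{\rho}{\sigma} \ge 1-\sqrt{\mathrm F} = \tfrac12\DB{\rho}{\sigma}^2$.

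The main obstacle is the lower bound's reliance on Fuchs's variational formula, i.e., that root fidelity is \emph{exactly} the smallest Bhattacharyya coefficient achievable by a measurement: the argument needs a measurement that \emph{attains} $\sqrt{\mathrm F}$, not merely a one-sided bound, and verifying attainment (e.g.\ by exhibiting the optimal measurement built from $\sqrt{\sqrt\rho\,\sigma\sqrt\rho}$) is the substantive step. By contrast the upper bound is routine once Uhlmann's theorem and CPTP-contractivity of the trace distance are granted. Since both inequalities are precisely the result of~\cite{FG99}, an acceptable alternative is simply to invoke that reference.
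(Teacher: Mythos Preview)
The paper does not prove this Fact; it simply cites \cite{FG99}. Your argument for the upper bound $\Dtr{\rho}{\sigma}^2 \le \textnormal{Infid}(\rho,\sigma)$ via Uhlmann's theorem and contractivity of trace distance under partial trace is correct and standard.

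For the lower bound, however, you have silently replaced the stated inequality $\tfrac12\DB{\rho}{\sigma}^2 \le \Dtr{\rho}{\sigma}^2$ (i.e., $1-\sqrt{\mathrm F}\le \Dtr{\rho}{\sigma}^2$) with $1-\sqrt{\mathrm F}\le \Dtr{\rho}{\sigma}$, which is the genuine Fuchs--van de Graaf lower bound. Since $\Dtr{\rho}{\sigma}\in[0,1]$ gives $\Dtr{\rho}{\sigma}^2\le\Dtr{\rho}{\sigma}$, your inequality does \emph{not} imply the stated one. In fact the inequality as printed is false: for $\rho=|0\rangle\langle0|$ and $\sigma=\diag(0.9,0.1)$ one has $\Dtr{\rho}{\sigma}^2=0.01$ while $1-\sqrt{\mathrm F}=1-\sqrt{0.9}\approx 0.051$. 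The square on $\Dtr{\rho}{\sigma}$ in the leftmost term of the Fact is thus a typo (and the hierarchy displayed immediately afterward neither uses nor is consistent with it); your proof correctly establishes the intended Fuchs--van de Graaf bound $\tfrac12\DB{\rho}{\sigma}^2\le\Dtr{\rho}{\sigma}$.
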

\begin{notation}
    The \emph{Bures $\chi^2$-divergence} of $\rho$ from $\sigma$ will be denoted $\DBchi{\rho}{\sigma}$.  It is most easily defined by specifying that it is unitarily invariant, $\DBchi{U\rho U^\dagger}{U\sigma U^\dagger} = \DBchi{\rho}{\sigma}$, and then giving the following formula\footnote{All occasions when division-by-zero arises are easily treated via continuity, or the conventions $0/0 = 0$ and $x/0 = \infty$ for $x > 0$.} when $\sigma = \diag(q_1, \dots, q_d)$: 
    \begin{equation}
    \DBchi{\rho}{\sigma}  = \sum_{i,j=1}^d \frac{2}{q_i + q_j} \abs{\rho_{ij}}^2 - 1 = \sum_{i,j=1}^d \frac{2}{q_i + q_j} \abs{\Delta_{ij}}^2,
    \end{equation}
    where $\Delta = \rho - \sigma$.
    In particular, $\DBchi{\rho}{\frac{\Id}{d}} = d \cdot \DHSsq{\rho}{\frac{\Id}{d}}$.
\end{notation}
\begin{fact}
    (\cite{Braunstein1994}.) The Bures $\chi^2$-divergence satisfies the (quantum) data processing inequality.
\end{fact}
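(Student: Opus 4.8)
The plan is to give a short self-contained proof built from a variational (Legendre dual) formula for the Bures $\chi^2$-divergence together with the Kadison--Schwarz operator inequality, specializing the general contractivity argument for quantum $\chi^2$-divergences to the symmetric-logarithmic-derivative case at hand. First I would recast the divergence in superoperator form. Working in an eigenbasis $\sigma = \diag(q_1,\dots,q_d)$, introduce the map $\Omega_\sigma(X) = \tfrac12(\sigma X + X\sigma)$, which is positive and self-adjoint for the Hilbert--Schmidt inner product $\langle A,B\rangle = \Tr[A^\dagger B]$ and which multiplies the $(i,j)$ entry of $X$ by $\tfrac{q_i+q_j}{2}$. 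Then the formula stated in the excerpt is exactly $\DBchi{\rho}{\sigma} = \langle \Delta, \Omega_\sigma^{-1}\Delta\rangle$ with $\Delta = \rho - \sigma$ Hermitian.

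The key step is a dual formula obtained by completing the square in the real Hilbert space of Hermitian matrices. Using that $\langle H, \Omega_\sigma H\rangle = \Tr[\sigma H^2]$, one gets
\[
\DBchi{\rho}{\sigma} = \sup_{H = H^\dagger}\bigl(2\Tr[H\Delta] - \Tr[\sigma H^2]\bigr),
\]
with optimizer $H = \Omega_\sigma^{-1}\Delta$. With this in hand, the data processing inequality follows cleanly. Given a channel $\mathcal{N}$ with Heisenberg adjoint $\mathcal{N}^\dagger$ (which is completely positive and unital, $\mathcal{N}^\dagger(\Id)=\Id$, since $\mathcal{N}$ is CPTP), I would apply the dual formula to $\mathcal{N}(\rho),\mathcal{N}(\sigma)$ and push the channel onto its adjoint via $\Tr[K\,\mathcal{N}(\rho-\sigma)] = \Tr[\mathcal{N}^\dagger(K)(\rho-\sigma)]$ and $\Tr[\mathcal{N}(\sigma)\,K^2] = \Tr[\sigma\,\mathcal{N}^\dagger(K^2)]$. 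Setting $H = \mathcal{N}^\dagger(K)$, the Kadison--Schwarz inequality $\mathcal{N}^\dagger(K^2)\geq \mathcal{N}^\dagger(K)^2 = H^2$ (valid for Hermitian $K$ because $\mathcal{N}^\dagger$ is unital and $2$-positive) together with $\sigma\geq 0$ gives $\Tr[\sigma\,\mathcal{N}^\dagger(K^2)]\geq \Tr[\sigma H^2]$. Hence every candidate value for $\DBchi{\mathcal{N}(\rho)}{\mathcal{N}(\sigma)}$ is bounded by $2\Tr[H\Delta] - \Tr[\sigma H^2] \leq \DBchi{\rho}{\sigma}$, and taking the supremum over Hermitian $K$ yields $\DBchi{\mathcal{N}(\rho)}{\mathcal{N}(\sigma)} \leq \DBchi{\rho}{\sigma}$.

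The real engine of the argument — and the only place complete positivity is used — is the Kadison--Schwarz inequality, so I expect that to be the conceptual crux rather than an obstacle, since it is standard. The one technical point requiring care is the rank-deficient case, where $\Omega_\sigma$ is not invertible on all of matrix space and division-by-zero arises (as flagged in the footnote on the divergence's definition). I would handle this by continuity, replacing $\sigma$ with $(1-\eta)\sigma + \eta\,\Id/d$ and letting $\eta\to 0$; equivalently one checks directly from the dual formula that if $\Delta$ has a nonzero entry where $q_i+q_j=0$ then both sides equal $+\infty$, so the variational identity (and the inequality derived from it) remains valid including the infinite cases. I would remark, finally, that this proof is the $f(x)=\tfrac{1+x}{2}$ instance of the general fact that $\chi^2_f$-divergences attached to operator-monotone $f$ are contractive under channels, which provides an alternative route if one prefers to cite that framework.
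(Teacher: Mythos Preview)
The paper does not prove this fact; it merely records it as a citation to \cite{Braunstein1994} (with the general monotonicity theory for quantum $\chi^2$-divergences in the background). Your argument, by contrast, is a correct self-contained proof: the variational identity
\[
\DBchi{\rho}{\sigma}=\sup_{H=H^\dagger}\bigl(2\Tr[H\Delta]-\Tr[\sigma H^2]\bigr)
\]
follows by completing the square against the positive form $\langle H,\Omega_\sigma H\rangle=\Tr[\sigma H^2]$, and then pushing the channel onto its adjoint and invoking Kadison's inequality finishes the job exactly as you describe. Two minor remarks. First, for Hermitian $K$ the inequality $\mathcal{N}^\dagger(K^2)\ge\mathcal{N}^\dagger(K)^2$ already holds for unital \emph{positive} maps (Kadison's original result), so your proof in fact yields monotonicity under all positive trace-preserving maps, not just completely positive ones---a mild strengthening. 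Second, your handling of the rank-deficient case by continuity is fine; one can equally observe directly that the variational formula returns $+\infty$ precisely when $\Delta$ has a nonzero component in the kernel of $\Omega_\sigma$, so the identity persists in the extended sense.
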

\begin{fact}
    (\cite{Braunstein1994,Temme2015}.) 
    $\DB{\rho}{\sigma}^2 \leq \DBchi{\rho}{\sigma}$.
\end{fact}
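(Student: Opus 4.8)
The plan is to prove the equivalent inequality $2\bigl(1 - \sqrt{\mathrm{F}(\rho,\sigma)}\bigr) \le \DBchi{\rho}{\sigma}$. By the unitary invariance of both the fidelity (hence of the Bures metric) and of $\DBchi{\cdot}{\cdot}$, I would first assume without loss of generality that $\sigma = \diag(q_1,\dots,q_d)$, so that, writing $\Delta = \rho - \sigma$, the right-hand side becomes the explicit quadratic form $\DBchi{\rho}{\sigma} = \sum_{i,j}\frac{2}{q_i+q_j}\abs{\Delta_{ij}}^2 = \langle \Delta, \Omega_\sigma^{-1}(\Delta)\rangle$, where $\Omega_\sigma(X) = \tfrac12(\sigma X + X\sigma)$ is the symmetric-logarithmic-derivative superoperator, inverted on the eigenbasis of $\sigma$ by $\Omega_\sigma^{-1}(X)_{ij} = \frac{2}{q_i+q_j}X_{ij}$ (division by zero handled by the continuity convention already in force).

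Next I would set up two variational characterizations that convert the target into a single clean comparison. For the Bures side, the identity $\|AB\|_1 = \max_{U\text{ unitary}} \Re\Tr[UAB]$ gives
\[
  \DB{\rho}{\sigma}^2 \;=\; 2 - 2\|\sqrt\rho\sqrt\sigma\|_1 \;=\; \min_{U\text{ unitary}} \bigl\|\sqrt\rho - U\sqrt\sigma\bigr\|_2^2 .
\]
For the $\chi^2$ side, completing the square in the (concave) quadratic form yields the dual description
\[
  \DBchi{\rho}{\sigma} \;=\; \max_{H = H^\dagger}\Bigl( 2\Tr[\Delta H] - \Tr[\sigma H^2]\Bigr),
\]
the maximum being attained at $H = \Omega_\sigma^{-1}(\Delta)$. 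Since I must show $\min_U(\cdots) \le \max_H(\cdots)$, it suffices to exhibit \emph{one} unitary $U$ and \emph{one} Hermitian $H$ with $\|\sqrt\rho - U\sqrt\sigma\|_2^2 \le 2\Tr[\Delta H] - \Tr[\sigma H^2]$.

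The natural first attempt is $U = I$, which is exactly the fidelity-optimal choice whenever $\rho$ and $\sigma$ commute; this collapses the goal to the ``square-root'' inequality $\Tr[(\sqrt\rho - \sqrt\sigma)^2] \le \DBchi{\rho}{\sigma}$. In the commuting case this is precisely the classical bound $(\sqrt{p_i}-\sqrt{q_i})^2 \le (p_i-q_i)^2/q_i$ summed over $i$ (valid since $(\sqrt{p_i}+\sqrt{q_i})^2 \ge q_i$), which I would keep as a sanity check. To attack the general case I would feed the representation $\frac{2}{q_i+q_j} = 2\int_0^\infty e^{-t q_i}e^{-t q_j}\,dt$ into the quadratic form to obtain the closed expression
\[
  \DBchi{\rho}{\sigma} \;=\; 2\int_0^\infty \Tr\!\bigl[e^{-t\sigma}\,\Delta\,e^{-t\sigma}\,\Delta\bigr]\,dt ,
\]
and match it against the square-root term, either directly or via the explicit test operator $H = \sigma^{-1/2}(\sqrt\rho - \sqrt\sigma) + (\sqrt\rho - \sqrt\sigma)\sigma^{-1/2}$ in the dual formula (singular $\sigma$ treated by continuity).

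The hard part will be exactly this last non-commutative comparison: unlike the commuting case, $\sqrt{\cdot}$ does not diagonalize simultaneously with $\sigma$, so the scalar inequality $(\sqrt{p}-\sqrt{q})^2 \le (p-q)^2/q$ must be upgraded to an operator statement. I anticipate two possible resolutions. The first is to verify that the candidate $H$ above already makes $2\Tr[\Delta H] - \Tr[\sigma H^2] \ge \Tr[(\sqrt\rho-\sqrt\sigma)^2]$, reducing everything to operator concavity of $x\mapsto\sqrt x$ (equivalently operator convexity of $x\mapsto x^2$) together with the integral representation above. The second, needed if $U = I$ proves too lossy for highly non-commuting pairs, is to instead take $U$ to be the unitary from the polar decomposition of $\sqrt\sigma\sqrt\rho$ (the one achieving the fidelity) and rerun the comparison. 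I regard pinning down this operator inequality --- and confirming that $U = I$ suffices, or else identifying the precise $U$ that does --- as the main obstacle; everything else is bookkeeping with the two variational formulas.
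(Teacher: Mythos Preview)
The paper does not actually prove this statement; it is quoted as a known fact with citations to \cite{Braunstein1994,Temme2015}, so there is no ``paper's own proof'' to compare against.

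Regarding your proposal itself: the two variational identities you set up are both correct, and reducing the goal to ``exhibit one unitary $U$ and one Hermitian $H$'' is a clean strategy. However, the proposal is explicitly a plan rather than a proof, and it stops precisely at the hard step. You correctly flag the non-commutative upgrade of $(\sqrt p-\sqrt q)^2\le (p-q)^2/q$ as the crux and offer two possible attacks, but neither is carried out. In particular, you do not verify that your test operator $H=\sigma^{-1/2}(\sqrt\rho-\sqrt\sigma)+(\sqrt\rho-\sqrt\sigma)\sigma^{-1/2}$ actually makes $2\Tr[\Delta H]-\Tr[\sigma H^2]\ge\Tr[(\sqrt\rho-\sqrt\sigma)^2]$; a direct expansion (using $\Delta=\sqrt\rho\,A+A\sqrt\sigma$ with $A=\sqrt\rho-\sqrt\sigma$) produces mixed terms such as $\Tr[A\sqrt\rho\,\sigma^{-1/2}A]$ and $\Tr[\sigma A\sigma^{-1}A]$ whose signs and magnitudes are not obviously favorable. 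Nor do you say what the integral representation $2\int_0^\infty\Tr[e^{-t\sigma}\Delta e^{-t\sigma}\Delta]\,dt$ is meant to be compared against on the Bures side. So as written there is a genuine gap at exactly the point you yourself label ``the main obstacle,'' and the fallback of switching to the polar-decomposition unitary is only mentioned, not executed.
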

The main takeaway of these facts is the following hierarchy:
\begin{equation}
    0 \leq \Dtr{\rho}{\sigma}^2 \leq \mathrm{Infid}(\rho,\sigma) \underset{(\approx)}{\leq} \DB{\rho}{\sigma}^2 \leq  \DBchi{\rho}{\sigma} \leq \infty.
\end{equation}

\subsection{Bias and variance: the Chebyshev argument}
In our results we have the following standard situation: There is an unknown parameter~$\mu \geq 0$, and we are trying to decide if $\mu \leq .99 \theta$ or $\mu > \theta$, where $\theta$ is a known parameter.
Moreover, we have a real random variable $\bM$ whose mean is close to~$\mu$, and whose standard deviation is small.  Then Chebyshev's inequality shows we can succeed provided $\abs{\E[\bM] - \mu} \ll \mu + \theta$ and $\stddev[\bM] \ll \mu + \theta$.
More precisely:

{

\begin{lemma}\label{lemmaCheb}
    Let $\mu, \theta \geq 0$ and let $0 < c < 1/2$.  Let $\bM$ be a real random variable and assume 
    \begin{equation}
        \textnormal{bias} \coloneqq \E[\bM] - \mu \text{ has } \abs{\textnormal{bias}} \leq \tfrac{c}{4} (\mu+\theta), \qquad \stddev[\bM] \leq \tfrac{c}{4k}(\mu + \theta).
    \end{equation}
    Then

    \begin{align} 
        \mu \leq (1-2c)\theta &\implies \Pr[\bM \geq  (1-c) \theta] \leq \tfrac{1}{k^2},\label{eqn:3}\\
        \mu>\theta &\implies \Pr[\bM< (1-c)\theta] \leq  \tfrac{1}{k^2}. \label{eqn:4}
    \end{align}

    In particular, if $c =.005$ and $k = 10$, an algorithm given $\theta$ and a sample of~$\bM$ can distinguish $\mu \leq .99\theta$ and $\mu > \theta$ whp, by comparing $\bM$ with $(1-c)\theta$.
\end{lemma}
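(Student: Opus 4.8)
The plan is to derive both \eqref{eqn:3} and \eqref{eqn:4} from a one-sided Chebyshev bound, after observing that the single quantity $\mu+\theta$ simultaneously controls the bias, the standard deviation, \emph{and} (up to the factor $c/2$) the distance from $\mu$ to the decision threshold $(1-c)\theta$. Concretely, I would think of the algorithm as accepting ``$\mu \le .99\theta$'' exactly when the observed value of $\bM$ falls below $(1-c)\theta$; correctness then reduces to showing that in each regime $\E[\bM]$ is separated from $(1-c)\theta$ by at least $k$ standard deviations, on the correct side, after which $\Pr[\abs{\bM-\E[\bM]} \ge k\,\stddev[\bM]] \le 1/k^2$ finishes the job.

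For \eqref{eqn:3}, I would assume $\mu \le (1-2c)\theta$. Using the bias hypothesis, $\E[\bM] \le \mu + \tfrac c4(\mu+\theta)$, so the upward slack to the threshold is at least $((1-c)\theta - \mu) - \tfrac c4(\mu+\theta)$. The only elementary fact to verify is that $(1-c)\theta - \mu \ge \tfrac c2(\mu+\theta)$ throughout this regime, which follows from the algebraic identity $(1-c)\theta - \mu - \tfrac c2(\mu+\theta) = c^2\theta + (1+\tfrac c2)((1-2c)\theta - \mu)$, both terms on the right being nonnegative here. Combining, the slack is at least $\tfrac c4(\mu+\theta) \ge k\,\stddev[\bM]$ by the variance hypothesis, and Chebyshev gives $\Pr[\bM \ge (1-c)\theta] \le 1/k^2$.

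For \eqref{eqn:4}, I would assume $\mu > \theta$; now $\E[\bM] \ge \mu - \tfrac c4(\mu+\theta)$, and the analogous (in fact cleaner) identity $\mu - (1-c)\theta - \tfrac c2(\mu+\theta) = (1-\tfrac c2)(\mu-\theta) \ge 0$ shows $\mu - (1-c)\theta \ge \tfrac c2(\mu+\theta)$, so the downward slack $\E[\bM] - (1-c)\theta$ is again at least $\tfrac c4(\mu+\theta) \ge k\,\stddev[\bM]$, and Chebyshev gives $\Pr[\bM < (1-c)\theta] \le 1/k^2$. The final sentence is then immediate: $c = .005$ makes $(1-2c)\theta = .99\theta$ and $k = 10$ makes $1/k^2 = .01$, so comparing a single sample of $\bM$ to $(1-c)\theta = .995\theta$ distinguishes $\mu \le .99\theta$ from $\mu > \theta$ with probability at least $.99$.

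There is no real obstacle here: the content is entirely the two one-line inequalities above. The only point needing a little care is the split of the ``budget'' $\mu+\theta$ — the bias is allowed to consume only a quarter of it (hence the constant $c/4$, not $c/2$, in the hypothesis), leaving a matching quarter as headroom for $k$ standard deviations of fluctuation, which is exactly why both slack estimates come out to $\tfrac c4(\mu+\theta)$ and dovetail with the variance bound. (One tacitly takes $\theta > 0$, the case $\theta = 0$ being a degenerate testing problem.)
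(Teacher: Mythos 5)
Your proposal is correct and follows essentially the same route as the paper: both arguments reduce each case to the elementary inequality that the threshold $(1-c)\theta$ is at least $\tfrac{c}{2}(\mu+\theta)$ away from $\mu$ on the correct side, absorb the bias to leave a $\tfrac{c}{4}(\mu+\theta) \geq k\,\stddev[\bM]$ gap to $\E[\bM]$, and finish with Chebyshev to get probability at most $1/k^2$. The only difference is cosmetic bookkeeping (you compare $\E[\bM]$ to the threshold, the paper phrases it as the bad event forcing a large deviation), so there is nothing further to add.
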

\begin{proof}  
    To establish \Cref{eqn:3}, first assume $\mu\leq (1-2c)\theta$, which implies $\mu < (1-\tfrac32 c)\theta - \tfrac12 c \mu$.
    Then the event $\oM\geq (1-c)\theta$ implies $\oM-\mu \geq \frac{c}{2}(\theta+\mu)$.  In turn, since $\abs{\E[\bM] - \mu} \leq \tfrac{c}{4} (\mu+\theta)$, this implies $\oM-\E[\bM] \geq \frac{c}{4}(\mu+\theta)$. But Chebyshev implies the probability of this is at most 
    \begin{equation}
        \parens*{\frac{\stddev[\bM]}{\tfrac{c}{4}(\mu+\theta)}}^2 \leq \parens*{\frac{\mu + \theta}{(\mu+\theta) k}}^2 =\frac{1}{k^2}\,.
    \end{equation}
   This proves \Cref{eqn:3}.

    To establish \Cref{eqn:4}, we reason similarly.  Assume $\mu > \theta$, which is equivalent to $\mu > (1-\tfrac{1}{2}c)\theta+\frac{c}{2}\mu$, so the event $\bM < (1-c)\theta$ implies $\mu - \bM  > \frac{c}{2} (\mu+\theta)$.  In turn, since $\abs{\E[M] - \mu} \leq \frac{c}{4}(\mu+\theta)$, this implies $\E[\bM] - \bM > \frac{c}{4} \mu$.  But Chebyshev implies the probability of this is at most
    \begin{equation}
        \parens*{\frac{\stddev[\bM]}{\tfrac{c}{4}(\mu+\theta)}}^2 \leq \parens*{\frac{\mu + \theta}{(\mu+\theta) k}}^2 = \frac{1}{k^2},        
    \end{equation}
    similar to before. This proves \Cref{eqn:4}. 
    
    \end{proof}
}
\section{Quantum Efron--Stein} \label{sec:es}

\subsection{Quantum Efron--Stein inequality}
Here we prove a quantum generalization of the classical Efron--Stein inequality (\Cref{eqn:classical-es} below).  It upper-bounds the variance of an observable~$X$ depending on a product state by the sum of the ``local variances'' or ``influences'' of each component.

\medskip

Recall the following three standard formulas for the variance of a classical random variable~$\bx$:
\begin{equation}
    \Var[\bx] = \E[\bx^2] - \E[\bx]^2 = \E\bigl[(\bx - {\E} [\bx])^2\bigr] = \tfrac12\E[(\bx - \bx')^2],
\end{equation}
with $\bx'$ denoting an independent copy of~$\bx$.
We analogously have the following notation/proposition:
\begin{fact} 
\label{fact:var}
    For an observable 
    $X$ with $\mu \coloneqq \E_{\rho}[X]$, we have
    \begin{align}
        \Var_{\varrho}[X] = \E_{\varrho}[X^2] - \mu^2 = \E_{\varrho}[(X - \mu \Id)^2]  &= \E_{\varrho \otimes \varrho}[\tfrac12 (X \otimes \Id - \Id \otimes X)^2]\\
    &= \E_{\varrho \otimes \varrho}[\tfrac12 (X \otimes \Id - S(X \otimes \Id)S],
    \end{align}
    where $S$ denotes the swap operator.  
\end{fact}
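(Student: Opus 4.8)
The plan is to verify the four claimed expressions for $\Var_\varrho[X]$ in turn, each by a one-line direct computation; there is no genuine obstacle here, so the ``proof'' really just records the quantum analogues of the three standard classical variance identities. Throughout I will use only that $\E_\varrho[\cdot] = \Tr[\varrho\,\cdot]$ is linear, that $\E_\varrho[\Id] = \Tr[\varrho] = 1$, and --- for the tensor-product expressions --- that $\Tr[(\varrho\otimes\varrho)(A\otimes B)] = \Tr[\varrho A]\,\Tr[\varrho B]$. In particular no cyclicity of the trace is needed.

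First, $\E_\varrho[X^2] - \mu^2$ is just the definition of $\Var_\varrho[X]$ from the Notation box, so there is nothing to prove there. For the second expression I would expand $(X - \mu\Id)^2 = X^2 - 2\mu X + \mu^2 \Id$ as an operator identity and apply $\E_\varrho[\cdot]$, using linearity together with $\E_\varrho[X] = \mu$ and $\E_\varrho[\Id] = 1$; this yields $\E_\varrho[X^2] - 2\mu^2 + \mu^2 = \E_\varrho[X^2] - \mu^2$, matching the first expression.

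For the third expression, I would expand $\tfrac12(X\otimes\Id - \Id\otimes X)^2 = \tfrac12\,X^2\otimes\Id + \tfrac12\,\Id\otimes X^2 - X\otimes X$, noting that the two cross terms combine because $(X\otimes\Id)(\Id\otimes X) = X\otimes X = (\Id\otimes X)(X\otimes\Id)$. Taking $\E_{\varrho\otimes\varrho}[\cdot]$ and using the tensor-factorization of the trace, the first two summands each contribute $\tfrac12\Tr[\varrho X^2]\Tr[\varrho] = \tfrac12\E_\varrho[X^2]$, and the last contributes $\Tr[\varrho X]^2 = \mu^2$, giving $\E_\varrho[X^2] - \mu^2$ once more. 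Finally, the fourth expression is literally identical to the third: since $S^\dagger = S$, $S^2 = \Id$, and $S(A\otimes B)S = B\otimes A$ for all $A,B$, we have $S(X\otimes\Id)S = \Id\otimes X$, hence $X\otimes\Id - S(X\otimes\Id)S = X\otimes\Id - \Id\otimes X$, and the squared operators --- and therefore their $\varrho\otimes\varrho$-expectations --- coincide. If anything counts as the ``hard part,'' it is merely bookkeeping: being careful that these manipulations are operator identities prior to taking expectations, so that the only analytic facts invoked are linearity of $\Tr[\varrho\,\cdot]$ and its multiplicativity across tensor factors.
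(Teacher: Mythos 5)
Your verification is correct and is exactly the routine computation the paper has in mind when it states this as a ``Fact'' without proof: expand the squares, use linearity of $\E_\varrho$, factorization of $\E_{\varrho\otimes\varrho}$ over tensor products, and $S(X\otimes\Id)S=\Id\otimes X$. You also correctly read the last displayed expression as containing the square $\bigl(X\otimes\Id-S(X\otimes\Id)S\bigr)^2$ (the exponent and a closing parenthesis are missing in the paper's statement), so nothing further is needed.
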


\newcommand{\n}{n}
Recall the classical Efron--Stein inequality states that if $P \coloneqq p_1 \times p_2 \times  \cdots \times p_d$ is a product probability distribution on $K \coloneqq [d_1] \times [d_2] \times \cdots \times [d_\n]$, and $\bx$ is a random variable on the probability space $(K,P)$, then
\begin{equation} \label{eqn:classical-es}
    \Var_P[\bx] \leq \E_{P}\left[\sum_{i=1}^\n {\Var}_{p_i} \bx\right] = \sum_{i=1}^\n \E_{P}[(\bx - {\E}_{p_i} \bx)^2] = \sum_{i=1}^\n \tfrac12 \E_{P}[(\bx - \bx^{(i)})^2],
\end{equation}
where $\bx^{(i)}$ denotes $\bx$ with the $([d_i], p_i)$ outcome rerandomized.

To give a quantum version of the Efron--Stein inequality, we should make sense of the right-hand sides of \Cref{eqn:classical-es}.  To this end, let us consider the following setup:
\begin{notation} \label{not:es}
    For the remainder of this section, let $\varrho$ be a product state, 
    \begin{equation}
    \varrho=\rho_1\otimes\rho_2\otimes \cdots \otimes \rho_{\n},
    \end{equation}
    on a product of finite-dimensional Hilbert spaces $\mathcal{K}=\bigotimes_{i=1}^{\n}\mathcal{H}_i$, and let $X$ be an observable 
    on~$\mathcal{K}$. 
    We also use the standard convention that whenever an operator is ``missing'' components, we understand that~$\Id$ is tensored in these components.
\end{notation}
Let us first define ``marginalizing out the $i$th component'':
\newcommand{\Exx}{\mathcal{E}}
\newcommand{\Dxx}{\mathcal{D}}
\begin{definition} \label{def:Ei}
    For $i \in [\n]$, we define a linear map $\Exx_i$ on $\calB(\calH_i)$ by ${\Exx}_i Y = \Tr[\rho_i Y] \cdot \Id$.  When extended to a map on $\calB(\calK)$ (by tensoring with $\Id$), it may equivalently be written as 
    $
        {\Exx}_{i} X = \Tr_{i}[\rho_i X],
    $    
    where $\Tr_i$ denotes partial trace on the $i$th component.  
    The following diagram illustrates the definition in the case of $\n = i = 3$.
 \begin{center}
     \includegraphics[width=0.2\textwidth]{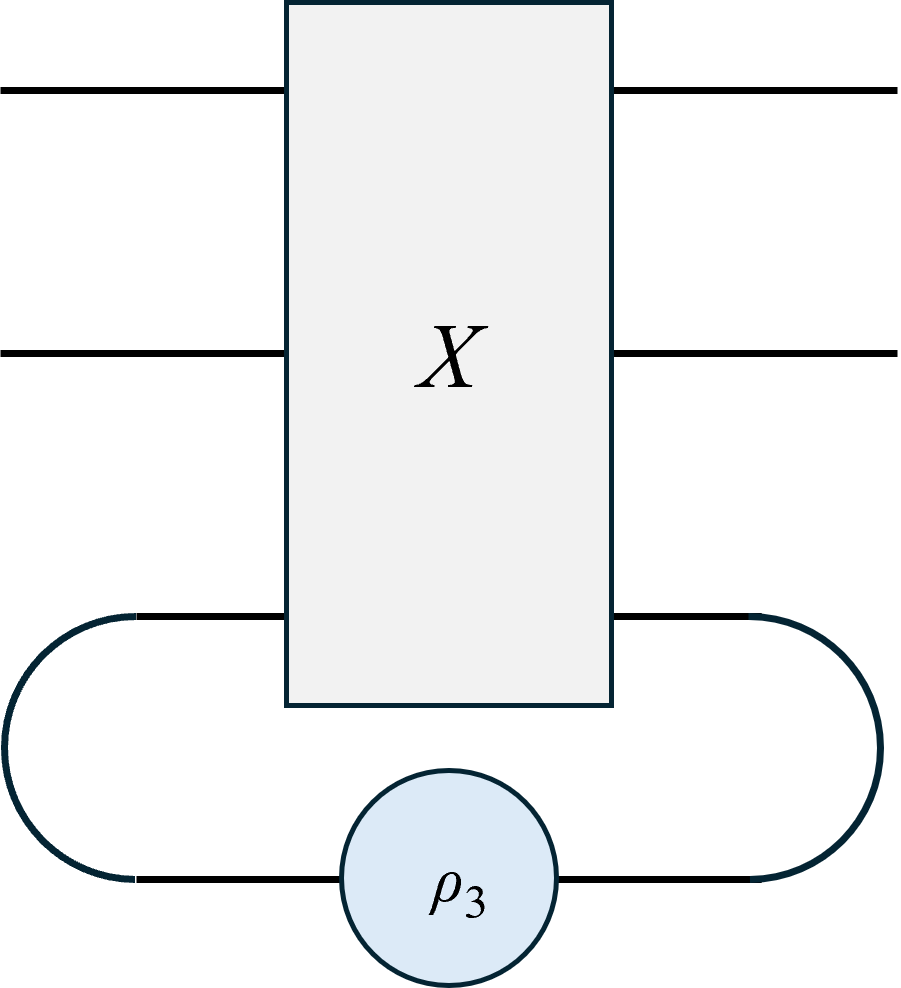}
 \end{center}
\end{definition}

\begin{definition}
    For $i \in [\n]$, we define the linear map $\Dxx_i = \Id - \Exx_i$ on $\mathcal{B}(\calK)$; thus, $\Dxx_i X = X - \Exx_i X$.
\end{definition}

The following quantity will appear on the right-hand side of the quantum Efron--Stein inequality:
\begin{proposition} \label{prop:same}
    Let $i \in [n]$, and write $\Dxx_i = \Dxx_i X$ for brevity.  Then
    \begin{equation}
        \E_{\varrho}[\Dxx_i^2] = \E_{\varrho \otimes \varrho}[\tfrac12 (X\otimes \Id-F_{i}(X\otimes \Id)F_{i})^2] = \E_{\varrho}[X^2] - \E_{\varrho \otimes \varrho}[(X\otimes \Id)F_i (X\otimes \Id)F_i],
    \label{eqn:qes2}
    \end{equation}
    where $F_i$ denotes swap  operator on $\calK \otimes \calK$ that  exchanges the $i$th component in the first half with the $i$th component in the second half.
\end{proposition}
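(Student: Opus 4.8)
The plan is to show that both the leftmost expression $\E_{\varrho}[(\Dxx_i X)^2]$ and the middle expression equal the rightmost one, which I will abbreviate $\E_\varrho[X^2] - Q$ with $Q \coloneqq \E_{\varrho \otimes \varrho}[(X\otimes \Id)F_i(X\otimes \Id)F_i]$. Throughout I would pass to the bipartite picture $\calK = \calH_i \otimes \calK_{-i}$, where $\calK_{-i} = \bigotimes_{j \neq i}\calH_j$, and write $\varrho = \rho_i \otimes \varrho_{-i}$ with $\varrho_{-i} = \bigotimes_{j \neq i}\rho_j$. The single structural input I need is the operator Schmidt decomposition $X = \sum_\alpha A_\alpha \otimes B_\alpha$ across this bipartition, with $A_\alpha$ acting on $\calH_i$ and $B_\alpha$ on $\calK_{-i}$; then $\Exx_i X = \Id_i \otimes Y$, where $Y \coloneqq \sum_\alpha \Tr[\rho_i A_\alpha]\,B_\alpha$ is an operator on $\calK_{-i}$.

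For the identity $\E_\varrho[(\Dxx_i X)^2] = \E_\varrho[X^2] - Q$, I would expand $(\Dxx_i X)^2 = X^2 - X(\Exx_i X) - (\Exx_i X)X + (\Exx_i X)^2$ and evaluate the $\varrho$-expectation of the last three terms using the Schmidt decomposition: pushing $\rho_i$ through the component-$i$ factors and $\varrho_{-i}$ through the rest (plus one relabeling $\alpha \leftrightarrow \beta$ for the middle term) shows that each of $\E_\varrho[X(\Exx_i X)]$, $\E_\varrho[(\Exx_i X)X]$ and $\E_\varrho[(\Exx_i X)^2]$ equals $\sum_{\alpha,\beta}\Tr[\rho_i A_\alpha]\Tr[\rho_i A_\beta]\Tr[\varrho_{-i} B_\alpha B_\beta]$. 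Next, conjugating $X \otimes \Id$ by $F_i$ moves the $\calH_i$-factor of each $A_\alpha$ into the second copy while leaving the $B_\alpha$'s on the first copy, so $(X\otimes\Id)F_i(X\otimes\Id)F_i = \sum_{\alpha,\beta} A_\alpha^{(1)}\,A_\beta^{(2)} \otimes (B_\alpha B_\beta)^{(1)}$ (with $\Id$ on the second copy's complement); taking $\E_{\varrho \otimes \varrho}$ produces exactly the same double sum, so $Q$ equals this common value and the expansion telescopes to $\E_\varrho[X^2] - Q$. (An equivalent, more conceptual phrasing: $\Exx_i$ is an orthogonal projection for the inner product $\langle A, B\rangle_\varrho = \Tr[\varrho A^\dagger B]$ — idempotent because $\Tr[\rho_i] = 1$, self-adjoint because $\rho_i$ is Hermitian — hence so is $\Dxx_i = \Id - \Exx_i$, giving $\E_\varrho[(\Dxx_i X)^2] = \|X\|_\varrho^2 - \|\Exx_i X\|_\varrho^2 = \E_\varrho[X^2] - \E_\varrho[(\Exx_i X)^2]$; only the identity $\E_\varrho[(\Exx_i X)^2] = Q$ then needs the computation above.)

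For the identity between the middle expression and $\E_\varrho[X^2] - Q$, I would expand $\tfrac12(X\otimes\Id - F_i(X\otimes\Id)F_i)^2$ into four terms. Using $F_i^2 = \Id$ and, crucially, that $F_i$ commutes with $\varrho \otimes \varrho$ (both copies carry $\rho_i$ on component $i$, so $F_i(\varrho\otimes\varrho)F_i = \varrho\otimes\varrho$), together with cyclicity of the trace, one gets $\E_{\varrho\otimes\varrho}[(X\otimes\Id)^2] = \E_{\varrho\otimes\varrho}[(F_i(X\otimes\Id)F_i)^2] = \E_\varrho[X^2]$ and $\E_{\varrho\otimes\varrho}[(X\otimes\Id)F_i(X\otimes\Id)F_i] = \E_{\varrho\otimes\varrho}[F_i(X\otimes\Id)F_i(X\otimes\Id)] = Q$. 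Hence the middle expression equals $\tfrac12\bigl(2\E_\varrho[X^2] - 2Q\bigr) = \E_\varrho[X^2] - Q$, completing the chain.

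I expect the only genuine obstacle to be bookkeeping: tracking which tensor slot (copy $1$ vs.\ copy $2$, component $i$ vs.\ its complement) each operator occupies after conjugation by $F_i$, and making sure the orderings in the products $B_\alpha B_\beta$ line up correctly. Fixing the Schmidt decomposition at the outset and adhering to the paper's convention that missing factors are filled with $\Id$ should keep this under control; no individual step is deep.
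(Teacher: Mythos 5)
Your proposal is correct and follows essentially the same route as the paper: expand both $(\mathcal{D}_i X)^2$ and $\tfrac12\bigl(X\otimes \Id - F_i(X\otimes \Id)F_i\bigr)^2$, and identify all the cross terms with the common quantity $\E_{\varrho}[(\mathcal{E}_i X)^2] = \E_{\varrho\otimes\varrho}[(X\otimes \Id)F_i(X\otimes \Id)F_i]$. The only difference is cosmetic: where the paper verifies these identities ``it is easy to see''/diagrammatically, you carry them out explicitly via an operator Schmidt decomposition across $\mathcal{H}_i\otimes\bigotimes_{j\neq i}\mathcal{H}_j$ (and note the Pythagorean reformulation using that $\mathcal{E}_i$ is a $\langle\cdot,\cdot\rangle_\varrho$-orthogonal projection), which is a perfectly valid way to fill in those steps.
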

\begin{proof}
    This is essentially \Cref{fact:var}.
    On one hand, we have
    \begin{equation}
        {\Dxx}_i^2 = X^2 - X \cdot {\Exx}_i X - ({\Exx}_i X) \cdot X + ({\Exx}_i X)^2,
    \end{equation}
    and it is easy to see that $\E_{\varrho} [X \cdot {\Exx}_iX] = \E_{\varrho} [({\Exx}_i X) \cdot X] = \E_{\varrho}[({\Exx}_i X)^2]$, so 
    \begin{equation}
        \E_{\varrho}[{\Dxx}_i^2] = \E_{\varrho}[X^2] - \E_{\varrho}[({\Dxx}_i X)^2]. \label{eqn:me}
    \end{equation}
    On the other hand (leaving out tensored $\Id$'s):
    \begin{equation}
        \tfrac12 (X\otimes \Id-F_{i}(X\otimes \Id)F_{i})^2
        = \tfrac12 X^2   + \tfrac12 F_i X^2 F_i - \tfrac12 XF_iXF_i - \tfrac12 F_iXF_iX. \label{eqn:inside}
    \end{equation}
    We have $\E_{\varrho}[X^2] = \E_{\varrho\otimes \varrho}[X^2] = \E_{\varrho\otimes \varrho}[F_i X^2 F_i]$. Moreover, it is not too hard to check that $\E_{\varrho\otimes \varrho}[X F_i X F_i] = \E_{\varrho\otimes \varrho}[F_i X F_i X] = \E_{\varrho}[({\Exx}_i X)^2]$; diagrammatically, in the case of $\n = i = 3$, all three are:
\begin{center}
    \includegraphics[width=0.5\textwidth]{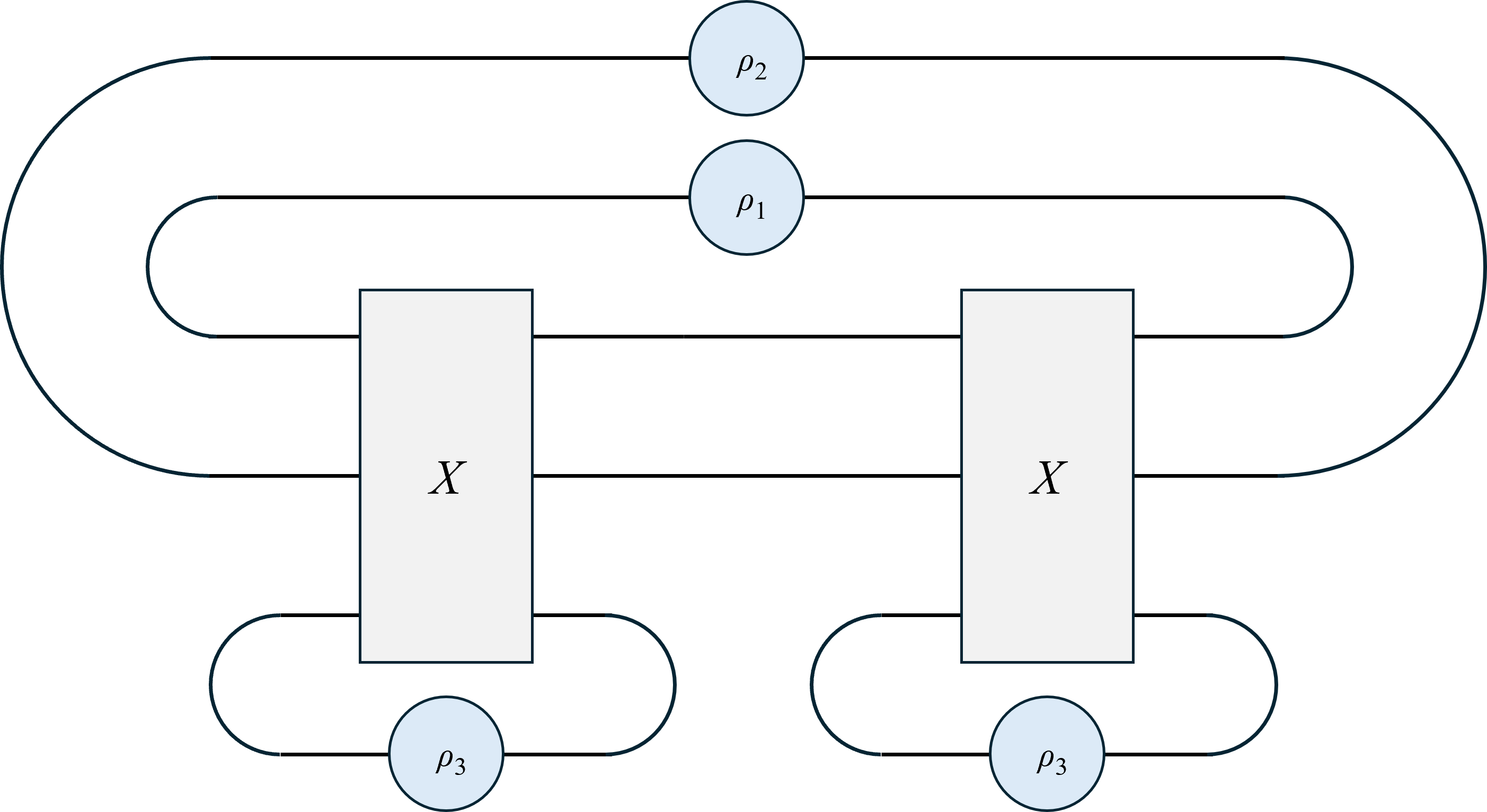}
\end{center}
    Thus indeed, putting \Cref{eqn:inside} inside $\E_{\varrho}[\cdot]$ yields \Cref{eqn:me}.
\end{proof}

We may now state our quantum generalization of the Efron--Stein inequality (which strictly generalizes the classical version):
\begin{theorem}[Quantum Efron-Stein inequality]\label{lem:quantumefron}
Using \Cref{not:es}, $\displaystyle \Var_\varrho[X] \leq \sum_{i=1}^\n \E_{\varrho}[{\Dxx}_i^2].$
\end{theorem}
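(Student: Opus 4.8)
The plan is to mimic the standard inductive/tensorization proof of the classical Efron--Stein inequality, working one Hilbert space factor at a time. For $i \in [\n]$, let $\Exx_{\le i} = \Exx_1 \circ \Exx_2 \circ \cdots \circ \Exx_i$ (marginalizing out the first $i$ components), with $\Exx_{\le 0}$ the identity map, and set $\mu\Id = \Exx_{\le \n}X$ (a scalar multiple of $\Id$, equal to $\E_\varrho[X]\cdot\Id$). The key object is the martingale-style telescoping decomposition
\begin{equation}
    X - \mu\Id = \sum_{i=1}^{\n} \bigl(\Exx_{\le i-1}X - \Exx_{\le i}X\bigr) = \sum_{i=1}^{\n} Y_i, \qquad Y_i \coloneqq \Exx_{\le i-1}\bigl(\Dxx_i X\bigr).
\end{equation}
Each $Y_i$ acts as the identity on components $1,\dots,i-1$ and has been $\Exx_i$-centered on component $i$. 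First I would expand $\Var_\varrho[X] = \E_\varrho[(X-\mu\Id)^2] = \sum_{i,j}\E_\varrho[Y_iY_j]$ using \Cref{fact:var}, and argue that the cross terms vanish: for $i < j$, using that $\rho_j$ is a product factor and $Y_j$ is $\Exx_j$-centered while $Y_i$ does not touch component $j$, partial-tracing out component $j$ kills $\E_\varrho[Y_iY_j]$ (and symmetrically $\E_\varrho[Y_jY_i]$). This requires a small amount of care about operator ordering, since $Y_i$ and $Y_j$ need not commute; but since $Y_j$ is supported (non-trivially) only on components $\ge j$ and $Y_i$ only on components $\le i < j$, in fact $Y_i$ and $Y_j$ commute, and the partial trace over component $j$ of $Y_iY_j$ (and $Y_jY_i$) is $Y_i\cdot\Tr_j[\rho_j Y_j]$-type expressions that vanish by centering. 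Hence $\Var_\varrho[X] = \sum_{i=1}^\n \E_\varrho[Y_i^2]$.

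Next I would bound $\E_\varrho[Y_i^2] \le \E_\varrho[(\Dxx_i X)^2]$ for each $i$. Since $Y_i = \Exx_{\le i-1}(\Dxx_i X)$ and each $\Exx_k$ is a quantum channel (a valid CPTP map — indeed $\Exx_k Y = \Tr_k[\rho_k Y]\otimes\Id$ is the composition of a partial trace with a state-preparation, up to the conventions), composing channels is itself a channel, so $\Exx_{\le i-1}$ is a channel. I would then use the operator convexity of $t\mapsto t^2$ together with the fact that for a channel $\Phi$ one has $\Phi(Z)^2 \preceq \Phi(Z^2)$ in the sense that $\E_\varrho[\Phi(Z)^2] \le \E_\varrho[\Phi(Z^2)]$ whenever $\varrho$ is invariant under the adjoint structure appropriately; more robustly, I would invoke the quantum Cauchy--Schwarz / Kadison--Schwarz inequality for the $2$-positive map $\Exx_{\le i-1}$, giving $\Exx_{\le i-1}(Z)^2 \preceq \Exx_{\le i-1}(Z^2)$ as operators, and then take $\E_\varrho[\cdot]$ of both sides. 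Since $\E_\varrho$ of an operator inequality is preserved ($\varrho \succeq 0$), and since $\E_\varrho[\Exx_{\le i-1}(Z^2)] = \E_\varrho[Z^2]$ because $\varrho$ is a product state and $\Exx_{\le i-1}$ just replaces the first $i-1$ factors by the identity while preserving the trace against $\rho_1\otimes\cdots\otimes\rho_{i-1}$, we get $\E_\varrho[Y_i^2] \le \E_\varrho[(\Dxx_i X)^2]$. Summing over $i$ yields the claim.

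The main obstacle I anticipate is the non-commutativity bookkeeping in the cross-term vanishing and, more subtly, making the Kadison--Schwarz step clean: one must confirm that $\Exx_{\le i-1}$ is genuinely $2$-positive (it is, being a composition of partial traces and tensoring with fixed states, all of which are completely positive and trace/identity compatible with our conventions) and that the ``missing $\Id$'' conventions from \Cref{not:es} make $\Exx_k$ act as $\Tr_k[\rho_k\,\cdot\,]$ followed by $(\cdot)\otimes\Id_k$, which is unital-adjoint to a genuine channel. Once those two points are pinned down, the rest is the routine telescoping-plus-convexity argument, exactly parallel to \cite[Sec.~3]{houdrenote}, and \Cref{prop:same} then re-expresses each term $\E_\varrho[(\Dxx_iX)^2] = \E_\varrho[\Dxx_i^2]$ in the stated swap-operator form.
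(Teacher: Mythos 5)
Your overall strategy --- a telescoping martingale-difference decomposition of $X - \E_{\varrho}[X]\Id$, orthogonality of the differences via centering, and Kadison--Schwarz to compare each difference with $\Dxx_i X$ --- is exactly the paper's proof (the paper conditions in the opposite direction, using $\Exx_{>i}$, i.e.\ marginalizing out the \emph{later} components, and sets $\Delta_i = \Exx_{>i}\Dxx_i X$). However, as written your cross-term argument has an internal inconsistency. With your definition $Y_i = \Exx_{\le i-1}(\Dxx_i X)$, the operator $Y_i$ is averaged over components $1,\dots,i-1$ and therefore acts nontrivially on components $i,\dots,n$; it is \emph{not} supported on components $\le i$, so for $i<j$ the operators $Y_i$ and $Y_j$ both touch components $j,\dots,n$, need not commute, and your proposed kill-step of partial-tracing out component $j$ does not factor: $\Exx_j[Y_i Y_j] \neq Y_i \cdot \Exx_j Y_j$ precisely because $Y_i$ acts on component $j$. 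You appear to have imported the support property of the classical Doob martingale $\E[X \mid x_1,\dots,x_i]$, which corresponds to marginalizing out the later coordinates --- the opposite of your $\Exx_{\le i-1}$.

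The gap is fixable in either of two ways. (a) Keep your $Y_i$ and instead apply $\Exx_i$: for $i<j$, the operator $Y_j$ acts as the identity on component $i$, and $\Exx_i Y_i = \Exx_{\le i-1}(\Exx_i X - \Exx_i X) = 0$, so $\E_{\varrho}[Y_i Y_j] = \E_{\varrho}\bigl[\Exx_i[Y_i Y_j]\bigr] = \E_{\varrho}[(\Exx_i Y_i)\, Y_j] = 0$, with the case $i>j$ following by taking adjoints. (b) Flip the conditioning direction and set $\Delta_i = \Exx_{>i}\Dxx_i X$, which genuinely is supported on the first $i$ components and is $\Exx_i$-centered; this is the paper's route, and then the argument you described (apply $\Exx_j$ for $i<j$) goes through verbatim. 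Your Kadison--Schwarz step, the unitality/positivity of the conditional-expectation maps, and the identity $\E_{\varrho}[\Exx_{\le i-1}(Z^2)] = \E_{\varrho}[Z^2]$ for the product state are all fine, so once the cross-term step is repaired your proof coincides with the paper's inductive/tensorization argument.
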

\begin{proof}
    For $i \in [\n]$, define the operation ${\Exx}_{> i}$ on observables $Y$ via
    \begin{equation}
        {\Exx}_{> i} Y =  {\Exx}_{i+1} \cdots {\Exx}_\n Y = 
        \Tr_{i+1, \dots, \n}[(\rho_{i+1} \otimes \cdots \otimes \rho_{\n}) Y].
    \end{equation}
    Particularly, define the (self-adjoint) operator
    \begin{equation}
        \Delta_i = {\Exx}_{> i} {\Dxx}_i, \quad \text{which satisfies} \quad \E_{\varrho}[\Delta_i^2] \leq \E_{\varrho}\Bigl[{\Exx}_{>i}[{\Dxx}_i^2]\Bigr] = \E_{\varrho}[{\Dxx}_i^2] \label{eqn:kad}
    \end{equation}
    by Kadison--Schwarz.
    Note that  $\Delta_i$ only operates on the first~$i$ components, and it satisfies ${\Exx}_i \Delta_i = {\Exx}_{> i-1} X - {\Exx}_{> i-1} X = 0$.
    Moreover, for $i < j$ we have
    \begin{equation} \label{eqn:cancel}
        \E_{\varrho}[\Delta_i \Delta_j] = \E_{\varrho}\Bigl[{\Exx}_j[\Delta_i \Delta_j]\Bigr] = 
        \E_{\varrho}[\Delta_i \cdot {\Exx}_j\Delta_j] = \E_{\varrho}[\Delta_i \cdot 0] = 0,
    \end{equation}
    and this identity also holds for $i > j$, using $\Delta_i \Delta_j = (\Delta_j \Delta_i)^\dagger$.
    Now
    \begin{equation}
        X - \E_{\varrho}[X] = \sum_{i=1}^n \Delta_i
    \quad\implies\quad
        \Var_{\varrho}[X] = \E_{\varrho}\left[\Bigl(\sum_{i=1}^n \Delta_i\Bigr)^2\right] = \sum_{i=1}^n \E_{\varrho}[\Delta_i^2],
    \end{equation}
    the cross-terms dropping out by \Cref{eqn:cancel}.  The proof is now complete by \Cref{eqn:kad}.

\end{proof}

The Quantum Efron--Stein inequality is particularly helpful for $X$ being a sum of symmetric two-local observables:

\begin{corollary}\label{cor:2loc}
    For $1 \leq i \neq j \leq n$, assume that $X_{ij} = X_{ji}$ is an observable that acts nontrivially only on the $i$th and $j$th tensor components.
    Define $X_i = \sum_{j \neq i} X_{ij}$, and $X = \sum_{i \neq j} X_{ij} = \sum_{i} X_i$.  Then
    \begin{equation}
        \Var_{\varrho}[X] \leq 4\sum_{i=1}^n \E_{\varrho}[(\calD_i X_i)^2] = 4\sum_{i=1}^n \E_{\varrho}[X_i^2] - 4\sum_{i=1}^n \E_{\varrho \otimes \varrho}[(X_i \otimes \Id) F_i (X_i \otimes \Id)F_i] \label{eqn:ineq}
    \end{equation}
\end{corollary}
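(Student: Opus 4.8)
The plan is to apply the Quantum Efron--Stein inequality (\Cref{lem:quantumefron}) to~$X$ and then rewrite each resulting summand using \Cref{prop:same}. The one structural fact I would isolate first is that $\Dxx_i$ annihilates any symmetric two-local term $X_{kl}$ with $i \notin \{k,l\}$: since such an $X_{kl}$ acts as~$\Id$ on the $i$th tensor factor, $\Exx_i X_{kl} = \Tr_i[\rho_i X_{kl}] = \Tr[\rho_i]\cdot X_{kl} = X_{kl}$, and hence $\Dxx_i X_{kl} = X_{kl} - \Exx_i X_{kl} = 0$.

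First I would split $X = \sum_{k \neq l} X_{kl}$ according to whether $i$ occurs among the two indices. The terms containing~$i$ are precisely $\sum_{j\neq i} X_{ij} + \sum_{j \neq i} X_{ji} = 2X_i$, where I use the symmetry $X_{ij} = X_{ji}$ (and where there is no overlap since $k \neq l$); all remaining terms are killed by $\Dxx_i$ by the fact above. Therefore $\Dxx_i X = \Dxx_i(2X_i) = 2\,\Dxx_i X_i$, and \Cref{lem:quantumefron} gives
\[
    \Var_\varrho[X] \;\leq\; \sum_{i=1}^n \E_\varrho[(\Dxx_i X)^2] \;=\; 4\sum_{i=1}^n \E_\varrho[(\Dxx_i X_i)^2],
\]
which is the first inequality of \Cref{eqn:ineq}. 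For the stated equality, I would apply \Cref{prop:same} with the observable $X_i$ in place of~$X$ — this is legitimate even though $X_i$ is not two-local, as \Cref{prop:same} holds for any observable on~$\mathcal{K}$ — obtaining
\[
    \E_\varrho[(\Dxx_i X_i)^2] \;=\; \E_\varrho[X_i^2] - \E_{\varrho \otimes \varrho}[(X_i \otimes \Id)\, F_i\, (X_i \otimes \Id)\, F_i];
\]
summing over $i \in [n]$ and multiplying by~$4$ finishes the argument.

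There is no serious obstacle here: the corollary is pure bookkeeping on top of \Cref{lem:quantumefron} and \Cref{prop:same}. The only two steps deserving a second glance are the locality claim $\Dxx_i X_{kl} = 0$ for $i \notin \{k,l\}$ (immediate from the definition of~$\Exx_i$ together with $\Tr[\rho_i] = 1$) and the harmless observation that \Cref{prop:same} may be invoked with the genuinely multi-local observable~$X_i$ rather than with a two-local one.
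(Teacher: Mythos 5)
Your proposal is correct and follows essentially the same route as the paper: the identity $\Dxx_i X = 2\,\Dxx_i X_i$ (from symmetry and the fact that $\Dxx_i$ annihilates terms not involving the $i$th factor), followed by \Cref{lem:quantumefron} for the inequality and \Cref{prop:same} applied to $X_i$ for the equality. Your write-up merely spells out the locality cancellation and the applicability of \Cref{prop:same} to general observables, which the paper leaves implicit.
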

\begin{proof}
    The inequality in \Cref{eqn:ineq} follows from the Quantum Efron--Stein inequality, using the fact that
    \begin{equation}
        \calD_i X = \calD_i \sum_{j \neq k} X_{jk} = \sum_{j \neq i} \calD_i X_{ji} + \sum_{k \neq i} \calD_i X_{ik} = 2 \calD_i X_i.
    \end{equation}
    The subsequent equality in \Cref{eqn:ineq} is from \Cref{prop:same}.
\end{proof}

\subsection{Quantum Efron--Stein decomposition}
The classical Efron--Stein decomposition (see, e.g.,~\cite[Thm.\ 8.35]{O'D14}) decomposes any random variable~$\boldf$ on an $L_2$ product probability space $p = p_1 \otimes \cdots \otimes p_n$ as $\sum_{J \subseteq [n]} \boldf^{=J}$, where $\boldf^{=J}$ only depends on the components~$J$, and where $\boldf^{=I}, \boldf^{=J}$ are orthogonal ($\E_{p}[\boldf^{=I} \cdot \boldf^{=J}] = 0$) whenever $I \neq J$.  From it, one gets an almost immediate proof of the Efron--Stein inequality. 
Here we generalize the Efron--Stein decomposition to the quantum case.
We continue with \Cref{not:es}, and also introduce the notation $\la Y,Z\ra_\varrho = \E_{\varrho}[Y^\dagger Z]$. 

Let us begin by generalizing the marginalization maps from the previous section:
\begin{definition}
    Recall the operators $\Exx_i$ on $\calB(\calH_i)$.  These are self-adjoint with respect to $\la {\cdot}, {\cdot}\ra_{\rho_i}$, and they commute.  Now for $I \subseteq [n]$, we define $\Exx_I$ to be the operator $\prod_{i \in I} \Exx_i$ on $\calB(\calK)$, which is self-adjoint with respect to $\la {\cdot}, {\cdot}\ra_{\varrho}$. We also define $\Dxx_I = \Id - \Exx_I$, and use the notation $\overline{I} = [n] \setminus I$.
\end{definition}
We can now establish the quantum Efron--Stein decomposition:
\begin{theorem} \label{thm:ES}
    For any product state $\varrho$, any observable $X$ has a unique decomposition as
    \begin{equation} \label{eqn:rep}
        X = \sum_{J \subseteq [n]} X^{=J},
    \end{equation}
    with the following properties:
    \begin{enumerate}
        \item \label{itm:1} $X^{=J}$ only acts nontrivially on the subsystems from~$J$;
        \item \label{itm:2} $\Exx_j X^{=J} = 0$ for all $j \in J$.

    Moreover:
    
        \item \label{itm:3} for all $J$, $X \mapsto X^{=J}$ is linear, and $\displaystyle \sum_{I \subseteq J} X^{= I} =  \Exx_{\overline{J}} X$;
        \item \label{itm:4} $X^{= I}, X^{= J}$ are orthogonal for $I \neq J$: $\la X^{=I}, X^{=J}\ra_\varrho = 0$.
    \end{enumerate}
\end{theorem}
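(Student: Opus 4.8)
The plan is to mimic the classical construction via Möbius inversion over the Boolean lattice, using the commuting marginalization projections $\Exx_i$. First I would observe that each $\Exx_i$ is an orthogonal projection with respect to $\la\cdot,\cdot\ra_\varrho$: it is self-adjoint (stated in the excerpt) and idempotent, since $\Exx_i^2 Y = \Tr[\rho_i Y]\Exx_i\Id = \Tr[\rho_i Y]\Id = \Exx_i Y$. Moreover the $\Exx_i$ pairwise commute (they act on different tensor factors), so $\{\Exx_I = \prod_{i\in I}\Exx_i\}_{I\subseteq[n]}$ is a commuting family of orthogonal projections, $\Exx_I\Exx_J = \Exx_{I\cup J}$. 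The natural definition is then
\begin{equation}
    X^{=J} \coloneqq \Exx_{\overline{J}}\Bigl(\prod_{j\in J}\Dxx_j\Bigr)X \;=\; \sum_{I\subseteq J}(-1)^{|J|-|I|}\,\Exx_{\overline{I}}X,
\end{equation}
the second equality being inclusion--exclusion on $\prod_{j\in J}(\Id-\Exx_j)$ together with $\Exx_{\overline{J}}\Exx_j = \Exx_{\overline{J}\cup\{j\}}$. I would first verify the existence/properties, then uniqueness.

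For Property~\ref{itm:1}: $X^{=J}$ is a composition of $\Exx$-maps over indices outside $J$ applied last—more carefully, $\Exx_{\overline J}$ traces out every factor not in $J$ and replaces it by $\Id$, so the result acts nontrivially only on $J$. For Property~\ref{itm:2}: fix $j\in J$; since $\Dxx_j = \Id-\Exx_j$ appears as a factor and $\Exx_j\Dxx_j = \Exx_j-\Exx_j^2 = 0$, and $\Exx_j$ commutes with every other map in the product, we get $\Exx_j X^{=J}=0$. Property~\ref{itm:3}: linearity is immediate from the formula; for the summation identity, $\sum_{I\subseteq J}X^{=I}$ is a telescoping/Möbius sum that collapses to $\Exx_{\overline J}X$—concretely, $\sum_{I\subseteq J}\Exx_{\overline I}\prod_{i\in I}\Dxx_i = \Exx_{\overline J}\prod_{i\in J}(\Exx_i+\Dxx_i) = \Exx_{\overline J}$, since $\Exx_i+\Dxx_i=\Id$. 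Taking $J=[n]$ gives $\sum_{I\subseteq[n]}X^{=I}=\Exx_\emptyset X = X$, which is Property~\ref{eqn:rep}. Property~\ref{itm:4}: for $I\neq J$, pick $k$ in the symmetric difference, say $k\in J\setminus I$. By Property~\ref{itm:2}, $\Exx_k X^{=J}=0$, i.e.\ $X^{=J}=\Dxx_k X^{=J}$. By Property~\ref{itm:1}, $X^{=I}$ does not act on factor $k$, so it is fixed by $\Exx_k$; since $\Exx_k$ is self-adjoint, $\la X^{=I},X^{=J}\ra_\varrho = \la \Exx_k X^{=I}, X^{=J}\ra_\varrho = \la X^{=I}, \Exx_k X^{=J}\ra_\varrho = 0$. (The case $k\in I\setminus J$ is symmetric, taking daggers.)

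For uniqueness: suppose $X=\sum_J Y^{=J}$ with $Y^{=J}$ satisfying \ref{itm:1} and \ref{itm:2}. Applying $\Exx_{\overline S}$ for $S\subseteq[n]$: if $j\notin S$ then $\Exx_{\overline S}$ includes $\Exx_j$, and by \ref{itm:2} $\Exx_j Y^{=J}=0$ whenever $j\in J$, so $\Exx_{\overline S}Y^{=J}=0$ unless $J\subseteq S$; and if $J\subseteq S$, then by \ref{itm:1} $Y^{=J}$ acts only on $J\subseteq S$ and is unchanged by $\Exx_{\overline S}$. Hence $\Exx_{\overline S}X = \sum_{J\subseteq S}Y^{=J}$, and Möbius inversion over the subset lattice forces $Y^{=S}=\sum_{I\subseteq S}(-1)^{|S|-|I|}\Exx_{\overline I}X = X^{=S}$. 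This pins down the decomposition and simultaneously shows the explicit formula is the only choice.

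The main obstacle I anticipate is purely bookkeeping: keeping straight which factors carry an $\Id$ after a partial trace and confirming that $\Exx_j$ genuinely commutes with $\Exx_i$ and with "acting only on factor $i$" when $i\neq j$ (the paper's convention of tensoring $\Id$ into missing slots makes this true, but it must be invoked carefully). There is no deep analytic input—unlike the Efron--Stein \emph{inequality}, no operator-convexity (Kadison--Schwarz) is needed here, just that the $\Exx_i$ are commuting orthogonal projections and that self-adjointness lets us move them across $\la\cdot,\cdot\ra_\varrho$. Once the projection structure is set up, Properties \ref{itm:1}--\ref{itm:4} and uniqueness follow exactly as in the classical ANOVA decomposition.
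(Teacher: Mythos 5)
Your proposal is correct and follows essentially the same route as the paper: you define $X^{=J}$ by the same inclusion--exclusion formula (written as $\Exx_{\overline J}\prod_{j\in J}\Dxx_j X$), verify Properties 1--3 directly, and prove orthogonality by moving a self-adjoint $\Exx_k$ across $\la\cdot,\cdot\ra_\varrho$, exactly as in the paper. The only (cosmetic) difference is the uniqueness step, where you apply $\Exx_{\overline S}$ and invoke M\"obius inversion, whereas the paper argues via a minimal-cardinality nonzero term; both hinge on the same identities and are equally valid.
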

\begin{proof}
    For each $J \subseteq [n]$, define
    \begin{equation}
        X^{=J} = \sum_{I \subseteq J} (-1)^{|J| - |I|} \Exx_{\overline{I}} X.
    \end{equation}
    From this definition, it is a simple matter to verify \Cref{itm:1,itm:2,itm:3}.  We now show that \Cref{itm:1,itm:2} imply \Cref{itm:4} which implies uniqueness.  To verify \Cref{itm:4} assuming \Cref{itm:1,itm:2}, say without loss of generality that $j \in J \setminus I$. 
    Then
    \begin{align}
        \la X^{=I}, X^{=J}\ra_\varrho &= \la \Exx_{\overline{I}}  X^{=I},  X^{=J}\ra_\varrho \tag*{\text{($\Exx_{\overline{I}} X^{=I} = X^{=I}$ by \Cref{itm:1})}} \\
        &= \la  X^{=I},  \Exx_{\overline{I}} X^{=J}\ra_\varrho \tag*{\text{($\Exx_{\overline{I}}$ is self-adjoint for $\la {\cdot},{\cdot}\ra_{\varrho}$)}} \\
        &= \la  X^{=I}, 0\ra_\varrho = 0 \tag*{\text{(by \Cref{itm:2}, since $j \in \overline{I}, J$).}}
    \end{align}
    Finally, as for uniqueness: if we had two decompositions as in \Cref{eqn:rep} satisfying \Cref{itm:1,itm:2}, by subtracting them we would get a decomposition of~$0 = \sum_J Z^{=J}$ into self-adjoint $Z^{=J}$ satisfying \Cref{itm:1,itm:2}, hence satisfying \Cref{itm:4}.  
    { Then let $I\subseteq[n]$ be a set of minimum cardinality such that $Z^{=I} \neq 0$.
We have
\begin{equation}
    0 = \mathcal{E}_{\bar{I}}\sum_{J\subseteq[n]}Z^{=J} = \sum_{J\subseteq I}\mathcal{E}_{\bar{I}}Z^{=J} = Z^{=I}\,,
\end{equation}
where the last step used \Cref{itm:4}. This is a contradiction, therefore $Z^{=I}=0$ for any $I\subseteq[n]$.}
\end{proof}

We now rederive the quantum Efron--Stein inequality.
As we saw in the last part of this proof, \Cref{itm:4} implies:
\begin{proposition} \label{prop:pars}
    For any observable $X$, $\displaystyle \E_{\varrho}[X^2] = \la X, X\ra_{\varrho} = \sum_{I \subseteq [n]} \la X^{=I}, X^{=I}\ra_{\varrho}.$  
\end{proposition}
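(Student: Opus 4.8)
The plan is to read this off directly from the quantum Efron--Stein decomposition (\Cref{thm:ES}). First I would write $X = \sum_{J \subseteq [n]} X^{=J}$ using the decomposition, recalling that each $X^{=J}$ is self-adjoint (being a real linear combination of the self-adjoint operators $\Exx_{\overline I} X$, since $X$ is an observable and each $\Exx_i$ preserves self-adjointness). Then bilinearity of $\la \cdot, \cdot \ra_\varrho$ gives
\begin{equation}
    \la X, X \ra_\varrho = \Bigl\la \sum_{I} X^{=I}, \sum_{J} X^{=J} \Bigr\ra_\varrho = \sum_{I, J \subseteq [n]} \la X^{=I}, X^{=J} \ra_\varrho.
\end{equation}

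Next I would invoke orthogonality, \Cref{itm:4} of \Cref{thm:ES}: every cross-term with $I \neq J$ vanishes, so the double sum collapses to the diagonal $\sum_{I \subseteq [n]} \la X^{=I}, X^{=I}\ra_\varrho$. Finally I would note that $\la X, X\ra_\varrho = \E_\varrho[X^\dagger X] = \E_\varrho[X^2]$ because $X$ is an observable, which identifies the left-hand side with the claimed expression and finishes the proof.

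I do not anticipate any real obstacle here; this is essentially the Parseval/Pythagoras identity for the orthogonal decomposition, and it is an immediate corollary of \Cref{thm:ES}. The only point that needs a word of care is confirming self-adjointness of the pieces $X^{=J}$ so that $\la X, X\ra_\varrho$ genuinely equals $\E_\varrho[X^2]$ rather than merely $\E_\varrho[X^\dagger X]$ — but this is immediate from the explicit formula $X^{=J} = \sum_{I \subseteq J} (-1)^{|J|-|I|} \Exx_{\overline I} X$ together with the fact that each $\Exx_i$ maps self-adjoint operators to self-adjoint operators.
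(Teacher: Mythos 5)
Your proof is correct and matches the paper's, which likewise obtains this as an immediate Parseval-type consequence of the decomposition $X = \sum_J X^{=J}$ and the orthogonality property (\Cref{itm:4}) of \Cref{thm:ES}. Your extra remark on self-adjointness of the pieces $X^{=J}$ is a fine (if routine) point of care.
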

Since $X^{=\emptyset} = \E_{\varrho}[X] \cdot \Id$, we conclude:
\begin{proposition}
    For any observable $X$, 
    $\displaystyle \Var_{\varrho}[X] = \sum_{I \neq \emptyset} \la X^{=I}, X^{=I}\ra_{\varrho}.$  
\end{proposition}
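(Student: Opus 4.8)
The plan is to read this off immediately from \Cref{prop:pars} by peeling off the $I = \emptyset$ term of the sum. First I would pin down $X^{=\emptyset}$ explicitly. Applying property~\ref{itm:3} of \Cref{thm:ES} with $J = \emptyset$ gives $X^{=\emptyset} = \sum_{I \subseteq \emptyset} X^{=I} = \Exx_{\overline{\emptyset}} X = \Exx_{[n]} X = \Tr[\varrho X]\cdot\Id = \E_{\varrho}[X]\cdot\Id$, using that the composition of all the $\Exx_i$ collapses $X$ to the scalar $\Tr[\varrho X]$ times the identity. Since $X$ is an observable, $\E_{\varrho}[X]$ is a real number, so $\la X^{=\emptyset}, X^{=\emptyset}\ra_{\varrho} = \E_{\varrho}[X]^2 \, \la \Id,\Id\ra_{\varrho} = \E_{\varrho}[X]^2$, because $\la \Id, \Id\ra_{\varrho} = \Tr[\varrho] = 1$.

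With that in hand, the identity follows by subtracting the empty term from \Cref{prop:pars}:
\begin{equation}
    \Var_{\varrho}[X] = \E_{\varrho}[X^2] - \E_{\varrho}[X]^2 = \sum_{I \subseteq [n]} \la X^{=I}, X^{=I}\ra_{\varrho} \;-\; \la X^{=\emptyset}, X^{=\emptyset}\ra_{\varrho} = \sum_{I \neq \emptyset} \la X^{=I}, X^{=I}\ra_{\varrho}.
\end{equation}
There is no real obstacle here; the only points worth stating carefully are the identification $X^{=\emptyset} = \E_{\varrho}[X]\cdot\Id$ (which relies on property~\ref{itm:3} of the quantum Efron--Stein decomposition and the convention that the empty product of the maps $\Exx_i$ is the full marginalization $\Exx_{[n]}$) and the fact that $\E_{\varrho}[X]$ is real, so that the $I=\emptyset$ summand is exactly $\E_{\varrho}[X]^2 = \E_{\varrho}[X\otimes\Id]\cdot\E_{\varrho}[\Id\otimes X]$ and no complex-conjugation subtlety arises.
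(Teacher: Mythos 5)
Your proof is correct and is essentially the paper's own argument: identify $X^{=\emptyset} = \E_{\varrho}[X]\cdot\Id$ (directly from the decomposition, since $\mathcal{E}_{\overline{\emptyset}} = \mathcal{E}_{[n]}$ maps $X$ to $\E_{\varrho}[X]\cdot\Id$) and then subtract the $I=\emptyset$ term from \Cref{prop:pars}. One small terminological slip in your closing remark: $\mathcal{E}_{\overline{\emptyset}} = \mathcal{E}_{[n]}$ is the \emph{full} product $\prod_{i\in[n]}\mathcal{E}_i$, not an ``empty product'' of the $\mathcal{E}_i$ (that would be the identity map $\mathcal{E}_{\emptyset}$), though your displayed chain of equalities uses it correctly.
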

From \Cref{thm:ES}, we  easily see $\Dxx_i X = \sum_{I \ni i} X^{=J}$.  Thus from \Cref{prop:pars}, we conclude:
\begin{proposition}
    For any observable $X$, $\displaystyle \E_{\varrho}[(\Dxx_i X)^2] = \sum_{I \ni i} \la X^{=I}, X^{=I}\ra_{\varrho}.$  
\end{proposition}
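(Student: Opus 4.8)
The plan is to read off the quantum Efron--Stein decomposition of $\Dxx_i X$ itself and then apply the Parseval-type identity of \Cref{prop:pars}, so that the proposition becomes an immediate corollary of \Cref{thm:ES}.

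First I would justify the identity $\Dxx_i X = \sum_{I \ni i} X^{=I}$ already asserted above. Writing $X = \sum_{J \subseteq [n]} X^{=J}$ and applying $\Exx_i$: for $J \not\ni i$ the summand $X^{=J}$ acts trivially on the $i$th tensor factor by \Cref{itm:1}, and since $\Exx_i$ fixes any operator not acting on site~$i$ (as $\Tr[\rho_i \Id] = 1$), we get $\Exx_i X^{=J} = X^{=J}$; for $J \ni i$ we have $\Exx_i X^{=J} = 0$ by \Cref{itm:2}. Hence $\Exx_i X = \sum_{J \not\ni i} X^{=J}$, and therefore $\Dxx_i X = X - \Exx_i X = \sum_{J \ni i} X^{=J}$.

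Next I would note that $\sum_{J \ni i} X^{=J}$ is precisely the Efron--Stein decomposition of $\Dxx_i X$: each term $X^{=J}$ with $J \ni i$ acts nontrivially only on the subsystems in $J$ and is annihilated by $\Exx_j$ for every $j \in J$, so by the uniqueness clause of \Cref{thm:ES} we have $(\Dxx_i X)^{=J} = X^{=J}$ when $i \in J$ and $(\Dxx_i X)^{=J} = 0$ otherwise. Since $\Exx_i$ preserves self-adjointness ($\Tr[\rho_i Y]$ is real for self-adjoint $Y$, and $\Id$ is self-adjoint), $\Dxx_i X$ is again an observable, so \Cref{prop:pars} applies to it and yields $\E_{\varrho}[(\Dxx_i X)^2] = \sum_{J \subseteq [n]} \la (\Dxx_i X)^{=J}, (\Dxx_i X)^{=J}\ra_\varrho = \sum_{J \ni i} \la X^{=J}, X^{=J}\ra_\varrho$. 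Equivalently, one may simply expand $\E_{\varrho}[(\Dxx_i X)^2] = \la \sum_{I \ni i} X^{=I}, \sum_{J \ni i} X^{=J}\ra_\varrho$ and discard the off-diagonal terms using the orthogonality in \Cref{itm:4}.

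There is no genuine obstacle here; the statement follows directly from \Cref{thm:ES}. The only points needing a line of care are the bookkeeping in the first step (that $\Exx_i$ fixes operators not acting on site~$i$ and kills the $X^{=J}$ with $i \in J$) and the remark that $\Dxx_i X$ is self-adjoint, so that $\E_{\varrho}[(\Dxx_i X)^2] = \la \Dxx_i X, \Dxx_i X\ra_\varrho$.
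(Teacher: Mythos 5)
Your proposal is correct and follows the paper's own route exactly: establish $\Dxx_i X = \sum_{I \ni i} X^{=I}$ from the decomposition of \Cref{thm:ES}, then apply the Parseval-type identity of \Cref{prop:pars} (equivalently, expand and use the orthogonality of \Cref{itm:4}). The paper states these two steps without elaboration; your write-up simply fills in the same bookkeeping details.
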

But now the quantum Efron--Stein inequality \Cref{lem:quantumefron} follows immediately:
\begin{equation}
    \sum_{i=1}^n \E_{\varrho}[(\Dxx_i X)^2] = \sum_{i=1}^n \sum_{I \ni i} \la X^{=I}, X^{=I}\ra_{\varrho} = \sum_{I \subseteq [n]} \abs{I} \cdot \la X^{=I}, X^{=I}\ra_{\varrho} \geq \sum_{\abs{I} \neq 0} \la X^{=I}, X^{=I}\ra_{\varrho} = \Var_{\varrho}[X].
\end{equation}

\section{Testing the maximally mixed state} \label{sec:mm}
In this section we give a self-contained proof of our main result in the case that the hypothesis state $\sigma$ is the maximally mixed state.  We prove:

\begin{theorem}     \label{thm:mm2}
    There is an algorithm, getting one copy each of $d$-dimensional states $\rho_1, \dots, \rho_T$ (i.e., getting $\varrho = \rho_1 \otimes \cdots \otimes \rho_T$), that distinguishes (whp) the cases $\DHSsq{\rho_{\textnormal{avg}}}{\frac{\Id}{d}} \leq .99\theta$ and $\DHSsq{\rho_{\textnormal{avg}}}{\frac{\Id}{d}} > \theta$, provided $T \gg \frac{1}{\theta}$.
\end{theorem}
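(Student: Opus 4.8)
The plan is the standard ``build an estimator for the $\chi^2$/Hilbert--Schmidt quantity, then run the Chebyshev argument of \Cref{lemmaCheb}'' strategy, with the variance controlled via the quantum Efron--Stein inequality. Since $\DHSsq{\rho_{\textnormal{avg}}}{\frac{\Id}{d}}=\Tr[\rho_{\textnormal{avg}}^2]-\tfrac1d$, I would measure an observable whose mean estimates $\Tr[\rho_{\textnormal{avg}}^2]$. With two copies of each $\rho_i$ one could use $\tfrac1{T^2}\sum_{i,j}S_{ij}$ and be unbiased; with a single copy only the off-diagonal swaps are accessible, so I would take
\[
X=\frac1{T^2}\sum_{i\neq j}S_{ij}-\frac1d\Id ,
\]
where $S_{ij}$ swaps tensor factors $i$ and $j$. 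Since $\E_{\varrho}[S_{ij}]=\Tr[\rho_i\rho_j]$ for $i\neq j$, writing $\mu:=\DHSsq{\rho_{\textnormal{avg}}}{\frac{\Id}{d}}$ one gets $\E_{\varrho}[X]=\mu-\tfrac1{T^2}\sum_i\Tr[\rho_i^2]$, so the bias is $-\tfrac1{T^2}\sum_i\Tr[\rho_i^2]$, of magnitude at most $1/T$ --- the unavoidable cost of $c=1$, harmless once $T\gg1/\theta$. The algorithm measures $X$ on $\varrho$, obtaining $\bM$ with $\E[\bM]=\E_{\varrho}[X]$ and $\Var[\bM]=\Var_{\varrho}[X]$, and outputs ``$\mu\le.99\theta$'' iff $\bM<(1-c)\theta$.

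Everything then reduces to showing $\Var_{\varrho}[X]=O(\tfrac1{T^2}+\tfrac\mu T)$. Ignoring the constant $-\tfrac1d\Id$ (which does not affect variance), $X=\sum_{i\neq j}X_{ij}$ with $X_{ij}=\tfrac1{T^2}S_{ij}$ a symmetric two-local observable, so \Cref{cor:2loc} gives $\Var_{\varrho}[X]\le 4\sum_i\E_{\varrho}[(\Dxx_i X_i)^2]$ with $X_i=\tfrac1{T^2}\sum_{j\neq i}S_{ij}$. Since $\Exx_i$ and $\Dxx_i$ are complementary orthogonal projections for $\la\cdot,\cdot\ra_{\varrho}$, I would expand $\E_{\varrho}[(\Dxx_i S_{ij})(\Dxx_i S_{ik})]=\E_{\varrho}[S_{ij}S_{ik}]-\E_{\varrho}[(\Exx_i S_{ij})(\Exx_i S_{ik})]$, which --- using the routine partial-trace identity $\Exx_i S_{ij}=\Tr_i[\rho_i S_{ij}]=\rho_i$ (placed on factor $j$) --- turns every summand into a trace of at most three of the $\rho_\ell$. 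The ``diagonal'' terms $j=k$ contribute $\tfrac1{T^4}\sum_i\sum_{j\neq i}(1-\Tr[\rho_i^2\rho_j])\le\tfrac1{T^2}$. For the ``off-diagonal'' terms $j\neq k$, I would extend the summation to all triples $(i,j,k)$ --- an $O(T^2)$, hence $O(1/T^2)$, perturbation since each discarded term is bounded --- and use linearity of the trace, obtaining exactly $\tfrac1{T^4}\bigl(T^3\Tr[\rho_{\textnormal{avg}}^3]-T^2\sum_i\Tr[\rho_i\rho_{\textnormal{avg}}]^2\bigr)$.

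This last expression is where the dimension must cancel. Writing $B=\rho_{\textnormal{avg}}-\tfrac{\Id}{d}$, so $\Tr[B]=0$ and $\|B\|_2^2=\mu$, expansion gives $\Tr[\rho_{\textnormal{avg}}^3]=\Tr[B^3]+\tfrac{3\mu}{d}+\tfrac1{d^2}$, whereas Cauchy--Schwarz gives $\sum_i\Tr[\rho_i\rho_{\textnormal{avg}}]^2\ge T\,\Tr[\rho_{\textnormal{avg}}^2]^2=T(\mu+\tfrac1d)^2$; subtracting, the $\tfrac1{d^2}$ and a $\tfrac{2\mu}{d}$ cancel, leaving at most $\tfrac1T\bigl(\Tr[B^3]+\tfrac\mu d-\mu^2\bigr)\le\tfrac1T(\|B\|_\infty\,\mu+\mu)\le\tfrac{2\mu}{T}$, using $\|B\|_\infty\le 1$ and dropping $-\mu^2$. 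Hence $\Var_{\varrho}[X]=O(\tfrac1{T^2}+\tfrac\mu T)$. Feeding this and the bias bound into \Cref{lemmaCheb} (say $c=.005$, $k=10$): $|\mathrm{bias}|\le\tfrac1T\le\tfrac c4(\mu+\theta)$ once $T\gg1/\theta$, and $\stddev[\bM]=O(\tfrac1T+\sqrt{\mu/T})\le\tfrac c{4k}(\mu+\theta)$ once $T\gg1/\theta$ (for the second term use $(\mu+\theta)^2\ge2\mu\theta$). So $T\gg1/\theta$ suffices, as claimed.

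The hard part is the variance estimate, and specifically the cancellation in the previous paragraph: the naive bound $\Tr[\rho_{\textnormal{avg}}^3]\le\Tr[\rho_{\textnormal{avg}}^2]=\mu+\tfrac1d$ would leave a $\tfrac1{dT}$ term, which is fatal in the regime $\theta\ll\tfrac1d$. The role of the quantum Efron--Stein inequality here is to make this bookkeeping tractable, collapsing what would otherwise be a full fourth-moment computation (as carried out in \cite{BOW17} for the iid case) into a sum of traces of at most three states. The remaining work --- the partial-trace identity for $\Exx_i S_{ij}$, and verifying that restricting the triple sum to distinct indices costs only $O(T^2)$ --- is routine.
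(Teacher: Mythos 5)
Your proof is correct and essentially the same as the paper's: same observable $\frac{1}{T^2}\sum_{i\neq j}S_{ij}-\frac{\Id}{d}$, same $1/T$ bias bound, variance via the two-local quantum Efron--Stein bound of \Cref{cor:2loc}, the same reduction to $\frac1T\Tr[\rho_{\mathrm{avg}}^3]-\frac1{T^2}\sum_i\Tr[\rho_i\rho_{\mathrm{avg}}]^2$ with the crucial $\frac{1}{d^2T}$ cancellation, and the Chebyshev argument of \Cref{lemmaCheb}. Your two small deviations are equivalent to the paper's steps: you evaluate the local Efron--Stein term through the conditional-expectation identity $\Tr_i[\rho_i S_{ij}]=\rho_i$ rather than the doubled-system swap form (these coincide by \Cref{prop:same}), and you obtain the cancellation from Cauchy--Schwarz, $\sum_i\Tr[\rho_i\rho_{\mathrm{avg}}]^2\ge T\Tr[\rho_{\mathrm{avg}}^2]^2$, rather than the paper's expansion in $\Delta=\rho_{\mathrm{avg}}-\frac{\Id}{d}$.
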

The variant of this result for distinguishing $\rho_{\mathrm{avg}} = \frac{\Id}{d}$ and $\Dtr{\rho_{\mathrm{avg}}}{\frac{\Id}{d}} > \eps$ when $T \gg \frac{d}{\eps^2}$, stated earlier as  \Cref{thm:mm}, is an immediate corollary by taking $\theta = \frac{4\eps^2}{d}$; this is because
$\rho_{\mathrm{avg}} = \frac{\Id}{d} \implies \DHSsq{\rho_{\textnormal{avg}}}{\frac{\Id}{d}} = 0 \leq .99\theta$
 and $\Dtr{\rho_{\mathrm{avg}}}{\frac{\Id}{d}} > \eps \implies \DHSsq{\rho_{\textnormal{avg}}}{\frac{\Id}{d}} > \frac{4\eps^2}{d} = \theta$ (\Cref{fact:cs}).\\

\newcommand{\OO}{A}
Given $\varrho$, our algorithm will measure the following observable~$\OO$:
\begin{equation}
    \OO \coloneqq \frac{1}{T} \sum_{1 \leq i \neq j \leq T} S_{ij} - \frac{\Id}{d},
\end{equation}
where $S_{ij}$ denotes the swap operator on the $i$th and $j$th tensor components of $\varrho$.
We will show:
\begin{lemma}\label{lemma_bound_uniform}
    Let $\mu = \DHSsqs{\rho_{\mathrm{avg}}}{\frac{\Id}{d}}$. Then:
\begin{equation} \label{eqn:confi}
 \bigl|\E_{\varrho}[\OO]-\mu\bigr| \leq \frac{1}{T}; \quad 
 \Var_{\varrho}[\OO]\leq 
 O\parens*{\frac{\mu}{T} + \frac{1}{T^2}}.
\end{equation}
\end{lemma}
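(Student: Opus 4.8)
The plan is to bound the bias and the variance separately. Throughout I use the elementary ``swap trick'' $\Tr[(\rho_i\otimes\rho_j)S]=\Tr[\rho_i\rho_j]$, which gives $\E_\varrho[S_{ij}]=\Tr[\rho_i\rho_j]$, together with $\E_\varrho[\Id/d]=1/d$ and the identity $\mu=\DHSsq{\rho_{\mathrm{avg}}}{\Id/d}=\Tr[\rho_{\mathrm{avg}}^2]-1/d$. For the bias I would just expand $\E_\varrho[\OO]=\tfrac1{T^2}\sum_{i\neq j}\Tr[\rho_i\rho_j]-\tfrac1d$: since $\sum_{i\neq j}\Tr[\rho_i\rho_j]=\sum_{i,j}\Tr[\rho_i\rho_j]-\sum_i\Tr[\rho_i^2]=T^2\Tr[\rho_{\mathrm{avg}}^2]-\sum_i\Tr[\rho_i^2]$, one gets $\E_\varrho[\OO]=\mu-\tfrac1{T^2}\sum_i\Tr[\rho_i^2]$, so the bias equals $-\tfrac1{T^2}\sum_i\Tr[\rho_i^2]$, whose modulus is at most $\tfrac1{T^2}\cdot T=\tfrac1T$ since each $\Tr[\rho_i^2]\leq1$.

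For the variance, the additive constant $-\Id/d$ is irrelevant, so I would apply the two-local quantum Efron--Stein inequality \Cref{cor:2loc} with $X_{ij}=\tfrac1{T^2}S_{ij}$ and $X_i=\tfrac1{T^2}\sum_{j\neq i}S_{ij}$, reducing the problem to bounding $4\sum_{i}\E_\varrho[(\calD_i X_i)^2]$. A short computation with the swap operator shows $\Exx_i S_{ij}$ equals $\rho_i$ regarded as an operator on system $j$, so $\calD_i X_i=\tfrac1{T^2}\sum_{j\neq i}(S_{ij}-(\rho_i)_j)$. Expanding $\E_\varrho[(\calD_i X_i)^2]=\tfrac1{T^4}\sum_{j,k\neq i}\E_\varrho[(S_{ij}-(\rho_i)_j)(S_{ik}-(\rho_i)_k)]$ and using the swap trick to evaluate the resulting traces (e.g.\ $\E_\varrho[S_{ij}S_{ik}]=\Tr[\rho_i\rho_j\rho_k]$ for distinct $i,j,k$, $\E_\varrho[S_{ij}(\rho_i)_k]=\Tr[\rho_i\rho_j]\Tr[\rho_i\rho_k]$, $\E_\varrho[S_{ij}^2]=1$, $\E_\varrho[(\rho_i)_j^2]=\Tr[\rho_j\rho_i^2]$), the diagonal $j=k$ terms contribute $1-\Tr[\rho_j\rho_i^2]\in[0,1]$ apiece and the off-diagonal $j\neq k$ terms collapse to $\Tr[\rho_i\rho_j\rho_k]-\Tr[\rho_i\rho_j]\Tr[\rho_i\rho_k]$ apiece. (One could also read off these two pieces directly from \Cref{prop:same}.) Summing over $i$, the diagonal part $\sum_i\sum_{j\neq i}(1-\Tr[\rho_j\rho_i^2])$ lies in $[0,T^2]$, hence contributes $O(1/T^2)$ after the $4/T^4$ prefactor.

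The remaining task is to bound the off-diagonal part $D\coloneqq\sum_i\sum_{j\neq k,\,j,k\neq i}(\Tr[\rho_i\rho_j\rho_k]-\Tr[\rho_i\rho_j]\Tr[\rho_i\rho_k])$, and I expect this to be the main obstacle: each summand has size $\Theta(1)$ and there are $\Theta(T^3)$ of them, so the needed smallness has to come entirely from the subtraction (i.e.\ from the action of $\calD_i$). The plan is to complete the restricted index sums to unrestricted ones --- the excluded triples ($j=k$, or $j=i$, or $k=i$) number $O(T^2)$ and each trace factor has modulus $\leq1$, so completing costs $O(T^2)$ --- and then use $\sum_{i,j,k}\Tr[\rho_i\rho_j\rho_k]=\Tr[(T\rho_{\mathrm{avg}})^3]=T^3\Tr[\rho_{\mathrm{avg}}^3]$ and $\sum_{i,j,k}\Tr[\rho_i\rho_j]\Tr[\rho_i\rho_k]=T^2\sum_i\Tr[\rho_i\rho_{\mathrm{avg}}]^2\geq T^3\Tr[\rho_{\mathrm{avg}}^2]^2$, the last step by Cauchy--Schwarz together with $\sum_i\Tr[\rho_i\rho_{\mathrm{avg}}]=T\Tr[\rho_{\mathrm{avg}}^2]$. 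This yields $D\leq T^3\bigl(\Tr[\rho_{\mathrm{avg}}^3]-\Tr[\rho_{\mathrm{avg}}^2]^2\bigr)+O(T^2)$.

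Finally I would prove the spectral inequality $\Tr[\rho_{\mathrm{avg}}^3]-\Tr[\rho_{\mathrm{avg}}^2]^2\leq 2\mu$. Writing the eigenvalues of $\rho_{\mathrm{avg}}$ as $\lambda_a=\tfrac1d+\delta_a$ with $\sum_a\delta_a=0$ (so that $\mu=\sum_a\delta_a^2$), the $1/d^2$ terms cancel and one finds $\Tr[\rho_{\mathrm{avg}}^3]-\Tr[\rho_{\mathrm{avg}}^2]^2=\tfrac\mu d+\sum_a\delta_a^3-\mu^2\leq\tfrac\mu d+(\max_a|\delta_a|)\,\mu\leq\tfrac\mu d+\mu^{3/2}\leq 2\mu$, using $\max_a|\delta_a|\leq(\sum_a\delta_a^2)^{1/2}=\sqrt\mu$ and $\mu=\Tr[\rho_{\mathrm{avg}}^2]-1/d\leq1$. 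Combining the pieces gives $\Var_\varrho[\OO]\leq\tfrac{4}{T^4}\bigl(T^2+2\mu T^3+O(T^2)\bigr)=\tfrac{8\mu}{T}+O(1/T^2)=O(\mu/T+1/T^2)$, as claimed.
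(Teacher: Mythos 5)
Your proposal is correct and follows essentially the same route as the paper: the identical bias computation, and the variance bounded via the two-local quantum Efron--Stein corollary (\Cref{cor:2loc}) applied to the same $X_i=\frac{1}{T^2}\sum_{j\neq i}S_{ij}$ (your term-by-term evaluation of $\E_\varrho[(\calD_i X_i)^2]$ is just \Cref{prop:same} unpacked). The only difference is the final bookkeeping: where the paper expands both Efron--Stein pieces around $\frac{\Id}{d}$ and cancels the $\frac{1}{d^2T}$ contributions, you lower-bound the subtracted piece by $T^3\Tr[\rho_{\mathrm{avg}}^2]^2$ via Cauchy--Schwarz and then prove $\Tr[\rho_{\mathrm{avg}}^3]-\Tr[\rho_{\mathrm{avg}}^2]^2\leq 2\mu$ spectrally, an equally valid way to realize the same cancellation.
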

Once \Cref{lemma_bound_uniform} is proven, the hypothesis $T \gg \frac{1}{\theta}$ gives $\bigl|\E_{\varrho}[\OO]-\mu\bigr| \ll \theta$ and $\stddev_{\varrho}[\OO] \ll \sqrt{\mu \theta} + \theta$. Since $\sqrt{\mu \theta} \leq \mu + \theta$, we conclude \Cref{thm:mm2} by using \Cref{lemmaCheb}.

\begin{proof}[Proof of \Cref{lemma_bound_uniform}]
We have
\begin{equation}
    \E_{\varrho}[\OO] =\frac{1}{T^2}\sum_{1 \leq i\neq j \leq T} \Tr[\rho_i\rho_j]-\frac{1}{d}
 =\Tr[\rho_{\mathrm{avg}}^2]-\frac{1}{d}-\frac{1}{T^2}\sum_{i=1}^T \Tr[\rho_i^2].
\end{equation}
But 
\begin{equation} \label{eqn:itsmu}
\Tr[\rho_{\mathrm{avg}}^2] - \frac{1}{d} = \left\|\rho_{\mathrm{avg}} - \frac{\Id}{d}\right\|_2^2 = \DHSsqs{\rho_{\mathrm{avg}}}{\frac{\Id}{d}} = \mu,    
\end{equation}
and 
\begin{equation}
    0 \leq \frac{1}{T^2}\sum_{i=1}^T \Tr[\rho_i^2] \leq \frac{1}{T^2}\sum_{i=1}^T 1 = \frac{1}{T}.
\end{equation}

Putting these together confirms the first inequality in \Cref{eqn:confi}.
We turn to bounding 
\begin{equation}
    \Var_{\varrho}[\OO] = \Var_{\varrho}\bracks*{\frac{1}{T^2} \sum_{1 \leq i \neq j \leq T} S_{ij}} = \Var_{\varrho}\bracks*{\sum_{i=1}^T \OO_i},
\quad\text{where }
    \OO_i \coloneqq \frac{1}{T^2} \sum_{j \neq i} S_{ij}.
\end{equation}
We are in a position to use the Quantum Efron--Stein inequality, in the form of \Cref{cor:2loc}:
\begin{equation} \label{eqn:two}
    \frac14 \Var_{\varrho}[C] \leq \sum_{i=1}^T \E_{\varrho}[\OO_i^2] - \sum_{i=1}^T \E_{\varrho \otimes \varrho}[(\OO_i \otimes \Id)F_i (\OO_i \otimes \Id)F_i].
\end{equation}
We bound the two terms here separately.  We have
\begin{equation}
    \sum_{i=1}^T \E_{\varrho}[\OO_i^2] = \frac{T(T-1)}{T^4} + \frac{1}{T^4} \sum_{i \neq j \neq k \neq i} \Tr[\rho_i \rho_j \rho_k] \leq \frac{1}{T^2} + \frac{1}{T} \Tr[\rho_{\mathrm{avg}}^3],
\end{equation}
as all the terms added to achieve $\Tr[\rho_{\mathrm{avg}}^3]$ are of the form $\Tr[\rho_i^3]$ or $\Tr[\rho_i^2 \rho_j]$, hence nonnegative.
Now introducing the traceless matrix $\Delta = \rho_{\mathrm{avg}} - \frac{\Id}{d}$, we have
\begin{equation} \label{eqn:used}
    \Tr[\rho_{\mathrm{avg}}^3] = \Tr[(\tfrac{\Id}{d} + \Delta)^3] = \frac{1}{d^2} + \frac{3}{d} \Tr[\Delta^2] + \Tr[\Delta^3] \leq \frac{1}{d^2} + O(\Tr[\Delta^2]),
\end{equation}
where we used that $\Tr[\Delta^3] \leq \|\Delta\|_{\infty} \Tr[\Delta^2] \leq (1+1/d) \Tr[\Delta^2]$.  But $\Tr[\Delta^2] = \mu$, by \Cref{eqn:itsmu}.  Thus
\begin{equation} \label{eqn:almost}
    \sum_{i=1}^T \E_{\varrho}[\OO_i^2] \leq \frac{1}{d^2 T} + O\parens*{\frac{\mu}{T}}  + \frac{1}{T^2}.
\end{equation}
Let us now consider the second term in \Cref{eqn:two}, involving $\E_{\varrho \otimes \varrho}[(\OO_i \otimes \Id) F_i (\OO_i \otimes \Id) F_i]$.  Here it is convenient to think of the tensor components of $\varrho \otimes \varrho$ as being numbered $1, \dots, T$ and $1', \dots, T'$; the observable $\OO_i \otimes \Id$ involves swaps of tensor components $i,j$ from the first half, and $F_i$ can be written as the swap operator $S_{ii'}$ across halves.  In other words, for fixed $i \in [T]$,
\begin{align}
    T^4 \cdot (\OO_i \otimes \Id) F_i (\OO_i \otimes \Id) F_i = \parens*{\sum_{j \neq i} S_{ij}} S_{ii'} \parens*{\sum_{k \neq i} S_{ik}} S_{ii'} &= \sum_{j,k \neq i} S_{ij} S_{ii'} S_{ik} S_{ii'}  \label{eqn:dip}\\
    &= \sum_{j,k \neq i} S_{ij} S_{i'k} = \sum_{j \neq i} S_{i'ij} + \sum_{i \neq j \neq k \neq i} S_{ij} S_{i'k},
\end{align}
where $S_{i'ij}$ is the operator that cyclically shifts the $i',i,j$ tensor components.  Thus for fixed $i$, 
\begin{equation}
    \E_{\varrho\otimes\varrho}[(\OO_i \otimes \Id) F_i (\OO_i \otimes \Id) F_i] = \frac{1}{T^4} \sum_{j \neq i} \Tr[\rho_i^2 \rho_j] + \frac{1}{T^4} \sum_{i \neq j \neq k \neq i} \Tr[\rho_i \rho_j] \Tr[\rho_i \rho_k].
\end{equation}
Summing this over $i$ yields
\begin{equation}
    \sum_{i=1}^T \E_{\varrho\otimes\varrho}[(\OO_i \otimes \Id) F_i (\OO_i \otimes \Id) F_i] \geq \frac{1}{T^2} \sum_{i = 1}^T \Tr[\rho_i \rho_{\mathrm{avg}}] \Tr[\rho_i \rho_{\mathrm{avg}}] - O\parens*{\frac{1}{T^2}}, \label{eqn:from}
\end{equation}
where we used that there are only $O(T^2)$ ``missing'' terms in the triple sum needed to get $\Tr[\rho_i \rho_{\mathrm{avg}}] \cdot \Tr[\rho_i \rho_{\mathrm{avg}}]$, and all are bounded in $[0,1]$.  Using $\rho_{\mathrm{avg}} = \frac{\Id}{d} + \Delta$ again, we derive
\begin{align} \label{eqn:drop}
    \sum_{i = 1}^T \Tr[\rho_i \rho_{\mathrm{avg}}] \Tr[\rho_i \rho_{\mathrm{avg}}] &= \frac{T}{d^2} + \frac{2T}{d} \Tr[\rho_{\mathrm{avg}} \Delta] + \sum_{i=1}^T \Tr[\rho_i \Delta]^2 
    = \frac{T}{d^2} + \frac{2T}{d} \Tr[\Delta^2] + \sum_{i=1}^T \Tr[\rho_i \Delta]^2,
\end{align}
where the last equation used $\rho_{\mathrm{avg}} = \frac{\Id}{d} + \Delta$ again, and $\Tr[\Delta] = 0$.  Thus the above quantity is at least $\frac{T}{d^2}$. Thus from \Cref{eqn:from} we get
\begin{equation}
    \sum_{i=1}^T \E_{\varrho\otimes\varrho}[(\OO_i \otimes \Id) F_i (\OO_i \otimes \Id) F_i] \geq \frac{1}{d^2T},
\end{equation}
and putting this and \Cref{eqn:almost} into \Cref{eqn:used} yields
\begin{equation}
    \frac14 \Var_{\varrho}[\OO] \leq O\parens*{\frac{\mu}{T}} + \frac{1}{T^2},
\end{equation}
completing the proof.
\end{proof}

\section{Quantum: general hypothesis testing} \label{sec:main}

\newcommand{\qC}{C}
\newcommand{\qM}{M}

\subsection{Proof statements} \label{sec:proofstmt}
In this section we prove our main theorem, which repeat here for convenience:
\begin{theorem}     \label{thm:main2}
    Fix a $d$-dimensional quantum state $\sigma$, and write $\gamma$ for the minimum eigenvalue of~$\sigma$.  For any parameter $\theta \geq 0$, there is an algorithm, getting one copy each of $d$-dimensional states $\rho_1, \dots, \rho_T$ (i.e., getting $\varrho = \rho_1 \otimes \cdots \otimes \rho_T$), that distinguishes (whp) the cases $\DBchi{\rho_{\textnormal{avg}}}{\sigma} \leq .99\theta$ and  $\DBchi{\rho_{\textnormal{avg}}}{\sigma} > \theta$, provided $T \gg \max\{\frac{d}{\theta}, \frac{\sqrt{d}}{\sqrt{\theta \gamma}}\}$.
\end{theorem}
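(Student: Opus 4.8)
The plan is to mirror the structure of the maximally-mixed case (Section~\ref{sec:mm}), but with the swap operators $S_{ij}$ replaced by an operator adapted to the hypothesis state $\sigma$. Working in the eigenbasis of $\sigma = \diag(q_1,\dots,q_d)$, recall that $\DBchi{\rho_{\textnormal{avg}}}{\sigma} = \sum_{k,\ell} \frac{2}{q_k+q_\ell}\abs{(\rho_{\textnormal{avg}})_{k\ell}}^2 - 1$. Following~\cite{BOW17}, I would introduce the two-local observable $G_{ij}$ acting on copies $i,j$ whose expectation against $\rho_i\otimes\rho_j$ is $\sum_{k,\ell}\frac{2}{q_k+q_\ell}\abs{(\rho_i)_{k\ell}}\overline{(\rho_j)_{k\ell}}$-type terms — concretely, $G_{ij} = \sum_{k,\ell}\frac{2}{q_k+q_\ell}\ketbra{k}{\ell}\otimes\ketbra{\ell}{k}$ (a ``$\sigma$-weighted swap''), and set
\[
    \OO \coloneqq \frac{1}{T}\sum_{1\le i\ne j\le T} G_{ij} - \Id,
\]
so that $\E_\varrho[\OO] = \Tr\text{-weighted quantity} = \DBchi{\rho_{\textnormal{avg}}}{\sigma} - \frac{1}{T^2}\sum_i (\text{diagonal self-terms})$, with the self-terms again contributing a bias of size $O(1/T)$ (each self-term being $\sum_{k,\ell}\frac{2}{q_k+q_\ell}\abs{(\rho_i)_{k\ell}}^2 \le \frac{1}{\gamma}$, so the total bias is $O\bigl(\frac{1}{\gamma T}\bigr)$, which is $\ll \theta$ exactly under the hypothesis $T \gg \frac{\sqrt d}{\sqrt{\theta\gamma}}$ combined with $T \gg \frac{d}{\theta}$). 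Then invoke Lemma~\ref{lemmaCheb} once the variance is controlled.

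For the variance I would apply the quantum Efron--Stein corollary~\ref{cor:2loc} with $X_i = \frac{1}{T^2}\sum_{j\ne i} G_{ij}$, reducing to bounding $\sum_i \E_\varrho[X_i^2] - \sum_i \E_{\varrho\otimes\varrho}[(X_i\otimes\Id)F_i(X_i\otimes\Id)F_i]$. Expanding $X_i^2$ produces diagonal terms ($j=k$, contributing $\frac{1}{T^4}\sum_{j\ne i}\Tr[\rho_i G_{ij}^2 \text{-contraction}]$, bounded using $\|G\|_\infty$-type estimates by $O(1/(\gamma^2 T^2))$ or so) and off-diagonal cross terms ($j\ne k$) which should reassemble, up to $O(T^2)$ missing terms, into $\frac{1}{T}$ times a ``$\Tr[\rho_{\textnormal{avg}}^3]$-like'' weighted trace. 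Writing $\rho_{\textnormal{avg}} = \sigma + \Delta$ and expanding as in~\eqref{eqn:used}, the leading $\sigma$-only piece will be a pure constant that must cancel against the corresponding leading piece of the $F_i$-conjugated term (which, as in~\eqref{eqn:drop}, expands with leading term coming from $\sigma$ and a $\Tr[\Delta^2]$-weighted correction), while the remaining terms are $O(\mu/T + 1/(\gamma T^2) + \ldots)$ with $\mu = \DBchi{\rho_{\textnormal{avg}}}{\sigma}$. One has to track the $\frac{2}{q_k+q_\ell}$ weights carefully through the cyclic-shift identities $S_{ij}S_{ii'}S_{ik}S_{ii'} = S_{ij}S_{i'k}$ so that the weighted versions still telescope correctly; the key point is that $G$ has a ``Choi-type'' structure making $G_{ij}G_{ii'}G_{ik}G_{ii'}$ collapse to a product of two-body operators on disjoint pairs just as bare swaps do.

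The main obstacle is the bookkeeping of the $\frac{2}{q_k+q_\ell}$ weights in the variance computation: unlike the bare swaps, $G_{ij}$ is not a unitary, and products like $G_{ij}G_{ik}$ contract indices in a way that entangles the $q$-weights across the three (or four) copies, so one must verify that the weighted analogues of the clean identities used in Section~\ref{sec:mm} still hold and that the error terms are genuinely bounded by $1/\gamma$ rather than $1/\gamma^2$ or worse in the places that matter. A clean way to organize this (and the route I would actually take) is to conjugate by $\sigma^{-1/4}$ on each copy: define $\widetilde\rho_i = \sigma^{-1/4}\rho_i\sigma^{-1/4}$ is not a state, but the Bures $\chi^2$-divergence can be written as a genuine Hilbert--Schmidt norm of an operator-monotone-transformed $\Delta$, which reduces the general case to a computation structurally identical to the maximally-mixed one with extra $\sigma$-dependent norm factors that are uniformly bounded by $1/\gamma$ (or $\sqrt d/\sqrt\gamma$ in $L_1\to L_2$ conversions) — these factors are exactly what produces the second branch $\frac{\sqrt d}{\sqrt{\theta\gamma}}$ of the $\max$ in the sample complexity. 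I would then cite~\cite{BOW17} for the fact that this observable has the claimed mean in the iid case and only re-do the variance bound via quantum Efron--Stein, which is where the present approach is shorter than the original.
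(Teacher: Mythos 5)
Your choice of observable is exactly the one the paper uses (your $G$ is the operator $\qC=\sum_{i,j}\ket{ji}\!\bra{ij}/q(i,j)$ of \Cref{sec:main}), and the overall strategy---quantum Efron--Stein for the variance plus the Chebyshev argument of \Cref{lemmaCheb}---is also the paper's. However, there is a genuine gap in your bias bound, and it sits precisely where the general-$\sigma$ case differs from the maximally mixed warmup. You bound each self-term $\Tr[(\rho_t\otimes\rho_t)\qC]=1+\DBchi{\rho_t}{\sigma}$ by the worst case $1/\gamma$, giving $\abs{\mathrm{bias}}\leq 1/(\gamma T)$, and assert this is $\ll\theta$ under $T\gg\max\{d/\theta,\sqrt{d}/\sqrt{\theta\gamma}\}$. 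That implication is false: take $\gamma=\theta/d$, so that both branches of the max equal $d/\theta$; then with $T=C\,d/\theta$ one gets $1/(\gamma T)=1/C$, a constant, which is not $\ll\theta$ (nor $\ll\mu+\theta$ in the completeness case where $\mu$ may be $0$). The Chebyshev lemma needs $\abs{\mathrm{bias}}\lesssim\mu+\theta$, and the paper achieves this only via the concavity-deficit bound \Cref{lemma_conc_chi}, which shows $\avg_t\{1+\DBchi{\rho_t}{\sigma}\}\leq\sqrt{d/\gamma}\cdot\mu^{1/2}+d$ with $\mu=\DBchi{\rho_{\textnormal{avg}}}{\sigma}$: the average of the per-source divergences must be controlled in terms of $\mu$ itself, not by the state-independent worst case $1/\gamma$. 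This lemma (or some substitute with the same $\mu$-dependence) is a missing idea in your proposal, not mere bookkeeping.

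The same issue recurs in your variance sketch: bounding diagonal terms via $\norm{\qC}_\infty^2\leq 1/\gamma^2$ produces a $1/(\gamma^2T^2)$ contribution, and $1/(\gamma T)\ll\mu+\theta$ fails in the same regime. The paper instead uses the refined inequalities of \Cref{lemma_miscineq} (imported from~\cite{BOW17}), e.g.\ $\Tr[(\rho\otimes\rho)\qC^2]\leq 2d^2+\frac{2d}{\gamma}\mu$, which again trade the worst-case $1/\gamma^2$ for $d^2$ plus a $\mu$-dependent term; these, together with positivity tricks of the form $\Tr[(R\otimes S\otimes S)\qC_{12}\qC_{13}]\geq 0$, are what make the Efron--Stein bound close. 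Your $\sigma^{-1/4}$-conjugation idea does not supply these on its own: it converts the arithmetic-mean weights $2/(q_i+q_j)$ into geometric-mean weights $1/\sqrt{q_iq_j}$ (a different, generally larger $\chi^2$-type divergence) and still leaves you needing exactly the same $\mu$-dependent replacements for the $1/\gamma$ worst-case factors.
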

From this result, one can deduce a $\chi^2$-vs-infidelity testing result that does not mention~$\gamma$, very similar to~\cite{BOW17}.  (Recall that $\DB{\rho_{\mathrm{avg}}}{\sigma}^2$ is the same as $\mathrm{Infid}(\rho_{\mathrm{avg}}, \sigma)$ up to a factor of~$2$.)
\begin{corollary} \label{cor:main0}
    A slight variation on \Cref{thm:main2} distinguishes $\DBchi{\rho_{\textnormal{avg}}}{\sigma} \leq .99\theta$ and  $\DB{\rho_{\textnormal{avg}}}{\sigma}^2 > 1.01\theta$, provided $T \gg \frac{d}{\theta}$.
\end{corollary}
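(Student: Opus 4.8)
The plan is to derive \Cref{cor:main0} from \Cref{thm:main2} by a black-box reduction, the ``slight variation'' being to inject a tiny amount of depolarizing noise into each input and into the hypothesis before invoking the algorithm of \Cref{thm:main2}. Fix a sufficiently small universal constant $c_0 > 0$ and put $\beta = c_0\theta$; we may assume $\theta \le 2$ (so $\beta < 1$), because for $\theta > 2$ the ``far'' case is vacuous --- recall $\DB{\cdot}{\cdot}^2 \le 2$ always, so $\DB{\rho_{\textnormal{avg}}}{\sigma}^2 > 1.01\theta$ is impossible there and the algorithm may output ``close''. Let $\mathcal{N}_\beta$ be the depolarizing channel $\mathcal{N}_\beta(\tau) = (1-\beta)\tau + \beta\frac{\Id}{d}$. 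The algorithm applies $\mathcal{N}_\beta$ to the single copy of each $\rho_i$, obtaining $\tilde\rho_i = (1-\beta)\rho_i + \beta\frac{\Id}{d}$, and runs the algorithm of \Cref{thm:main2} on $\tilde\rho_1 \otimes \cdots \otimes \tilde\rho_T$ with hypothesis $\sigma' = \mathcal{N}_\beta(\sigma) = (1-\beta)\sigma + \beta\frac{\Id}{d}$ and parameter $\theta$, returning its verdict. The point of the noise is that $\sigma'$ has minimum eigenvalue $\gamma' \ge \beta/d = c_0\theta/d$, so $\frac{\sqrt d}{\sqrt{\theta\gamma'}} \le \frac{d}{\sqrt{c_0}\,\theta}$ and \Cref{thm:main2}'s requirement $T \gg \max\{\frac{d}{\theta},\frac{\sqrt d}{\sqrt{\theta\gamma'}}\}$ becomes just $T \gg \frac{d}{\theta}$ (the hidden constant absorbing $1/\sqrt{c_0}$). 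Note also $\tilde\rho_{\textnormal{avg}} = \frac1T\sum_i\tilde\rho_i = \mathcal{N}_\beta(\rho_{\textnormal{avg}})$.

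It remains to check faithfulness: the ``close'' hypothesis $\DBchi{\rho_{\textnormal{avg}}}{\sigma} \le .99\theta$ must imply $\DBchi{\tilde\rho_{\textnormal{avg}}}{\sigma'} \le .99\theta$, and the ``far'' hypothesis $\DB{\rho_{\textnormal{avg}}}{\sigma}^2 > 1.01\theta$ must imply $\DBchi{\tilde\rho_{\textnormal{avg}}}{\sigma'} > \theta$. The close case is immediate from the data-processing inequality for the Bures $\chi^2$-divergence applied to $\mathcal{N}_\beta$: $\DBchi{\tilde\rho_{\textnormal{avg}}}{\sigma'} = \DBchi{\mathcal{N}_\beta(\rho_{\textnormal{avg}})}{\mathcal{N}_\beta(\sigma)} \le \DBchi{\rho_{\textnormal{avg}}}{\sigma} \le .99\theta$; this lossless step is precisely why we depolarize the \emph{input}, not only the hypothesis. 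For the far case, the triangle inequality for the Bures metric (twice) gives $\DB{\tilde\rho_{\textnormal{avg}}}{\sigma'} \ge \DB{\rho_{\textnormal{avg}}}{\sigma} - \DB{\rho_{\textnormal{avg}}}{\tilde\rho_{\textnormal{avg}}} - \DB{\sigma}{\sigma'}$, and the two correction terms are crudely at most $\sqrt2\,\beta$ each, since $\DB{\cdot}{\cdot}^2 \le 2\,\Dtr{\cdot}{\cdot}^2$ (\Cref{sec:prelims}) and $\Dtr{\tau}{(1-\beta)\tau + \beta\frac{\Id}{d}} = \beta\,\Dtr{\tau}{\frac{\Id}{d}} \le \beta$. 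Hence $\DB{\tilde\rho_{\textnormal{avg}}}{\sigma'} \ge \DB{\rho_{\textnormal{avg}}}{\sigma} - 2\sqrt2\,\beta > \sqrt{1.01\theta} - 2\sqrt2\,c_0\theta$, which for a small enough constant $c_0$ (and $\theta \le 2$) exceeds $\sqrt\theta$; therefore $\DBchi{\tilde\rho_{\textnormal{avg}}}{\sigma'} \ge \DB{\tilde\rho_{\textnormal{avg}}}{\sigma'}^2 > \theta$ by the hierarchy $\DB{\cdot}{\cdot}^2 \le \DBchi{\cdot}{\cdot}$. Since the two cases land strictly below $.99\theta$ and strictly above $\theta$, \Cref{thm:main2}'s guarantee applies verbatim and the algorithm succeeds whp.

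I do not expect a genuine obstacle; this is a short reduction. The one real idea beyond \Cref{thm:main2} is the observation that a little depolarizing noise (i) is a free single-copy operation, (ii) is lossless in the $\chi^2$ ``close'' regime by data processing, (iii) perturbs Bures distances by only $O(\beta)$, and (iv) lifts the effective hypothesis's minimum eigenvalue to $\Omega(\theta/d)$ --- exactly what is needed to cancel the $\gamma$-dependent term in \Cref{thm:main2}. The only thing requiring a little care is the bookkeeping of the constants $.99$ and $1.01$ in the far-case inequality, which forces $\beta$ to be a \emph{sufficiently small} constant multiple of $\theta$ and the trivial regime $\theta > 2$ to be split off so that $\beta < 1$.
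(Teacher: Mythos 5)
Your reduction is the same one the paper uses: depolarize both the inputs and the hypothesis with noise parameter $\Theta(\theta)$, invoke \Cref{thm:main2} with the boosted minimum eigenvalue $\gamma'\geq c_0\theta/d$, handle the close case by data processing for $\DBchi{\cdot}{\cdot}$, and handle the far case by the triangle inequality for the Bures metric plus a bound on the perturbations $\DB{\tau}{\mathcal{N}_\beta(\tau)}$. So the structure is correct and matches the paper.

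One step, however, is quantitatively wrong as written: the claim $\DB{\tau}{\mathcal{N}_\beta(\tau)}\leq\sqrt{2}\,\beta$, obtained from ``$\DB{\cdot}{\cdot}^2\leq 2\Dtr{\cdot}{\cdot}^2$'' together with $\Dtr{\tau}{\mathcal{N}_\beta(\tau)}\leq\beta$. For a pure state $\tau$ one computes $F(\tau,\mathcal{N}_\beta(\tau))=1-\beta(1-\tfrac1d)$, so $\DB{\tau}{\mathcal{N}_\beta(\tau)}^2=2\bigl(1-\sqrt{1-\beta(1-1/d)}\bigr)\approx\beta$, not $O(\beta^2)$; the inequality you quoted from the preliminaries (Fuchs--van de Graaf with both sides squared) is itself misstated there --- the true bound is $\tfrac12\DB{\rho}{\sigma}^2=1-\sqrt{F}\leq\Dtr{\rho}{\sigma}$, not $\Dtr{\rho}{\sigma}^2$ (already the commuting pair $\mathrm{diag}(1,0)$ vs.\ $\mathrm{diag}(1-\delta,\delta)$ violates the squared form). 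The paper instead proves directly that $\DB{\tau}{\mathcal{N}_\lambda(\tau)}^2\leq 2\lambda$ (pure states being the worst case by convexity). Fortunately your argument survives with the correct bound: the two correction terms are each at most $\sqrt{2c_0\theta}$, so $\DB{\tilde\rho_{\textnormal{avg}}}{\sigma'}\geq\sqrt{1.01\theta}-2\sqrt{2c_0\theta}>\sqrt{\theta}$ once $c_0$ is chosen small enough that $2\sqrt{2c_0}<\sqrt{1.01}-1$, and then $\DBchi{\tilde\rho_{\textnormal{avg}}}{\sigma'}\geq\DB{\tilde\rho_{\textnormal{avg}}}{\sigma'}^2>\theta$ as you argue. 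With that repair (and your harmless extra case split $\theta>2$), the proof coincides with the paper's.
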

\begin{proof}
    One selects $\lambda = c \theta$ for a suitably small constant $c > 0$ and applies \Cref{thm:main2} to $\rho_1' \otimes \cdots \otimes \rho_T'$ and $\sigma'$, where the primed version of a state denotes passing it through the depolarizing channel with parameter~$\lambda$. Now $\sigma'$ has smallest eigenvalue at least $\frac{\lambda}{d}$, meaning $T \gg \frac{d}{\theta}$ suffices.  Now on one hand, if $\DBchi{\rho_{\mathrm{avg}}}{\sigma} \leq .99\theta$ then also $\DBchi{\rho'_{\mathrm{avg}}}{\sigma'} \leq .99\theta$, by the quantum data processing inequality. On the other hand, we claim
    \begin{equation}
    \DB{\rho_{\mathrm{avg}}}{\sigma}^2 \geq 1.01\theta \quad\implies \quad \DB{\rho'_{\mathrm{avg}}}{\sigma'}^2  \geq \theta;
    \end{equation}
    this claim completes the proof, since $\DBchi{\rho'_{\mathrm{avg}}}{\sigma'} \geq \DB{\rho'_{\mathrm{avg}}}{\sigma'}^2$.
    Since $\DB{\cdot}{\cdot}$ is a metric, it suffices to show $\DB{\rho_{\mathrm{avg}}}{\rho_{\mathrm{avg}}'}, \DB{\sigma}{\sigma'} \ll \sqrt{\theta}$.  But indeed it is easy to check for any state~$\tau$ that $\DB{\tau}{\tau'}^2 \leq 2\lambda = 2c\theta$ (a convexity argument shows a pure state is the worst case, and then one may calculate).
\end{proof}
Following this (and similar to the deduction just after \Cref{thm:mm2}), if we take  $\theta = \frac{1}{1.01} \eps^2$ and use $\rho_{\mathrm{avg}} = \sigma \implies \DBchi{\rho_{\mathrm{avg}}}{\sigma} = 0 \leq .99\theta$ and $\DB{\rho_{\mathrm{avg}}}{\sigma} \geq \Dtr{\rho_{\mathrm{avg}}}{\sigma}$, we conclude:
\begin{corollary}     \label{cor:main2}
    Fix a $d$-dimensional quantum state $\sigma$. For any parameter $\eps > 0$, there is an algorithm, getting one copy each of $d$-dimensional states $\rho_1, \dots, \rho_T$, that distinguishes (whp) the cases $\rho_{\textnormal{avg}} = \sigma$ and $\Dtr{\rho_{\textnormal{avg}}}{\sigma} > \eps$, provided $T \gg \frac{d}{\eps^2}$.
\end{corollary}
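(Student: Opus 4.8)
\textbf{Proof proposal for \Cref{cor:main2}.}
The plan is to obtain \Cref{cor:main2} as a direct corollary of \Cref{cor:main0} by instantiating the divergence threshold~$\theta$ in terms of~$\eps$, exactly the way one passes from a $\chi^2$-vs-infidelity guarantee to a trace-distance guarantee in the iid setting. Concretely, I would set $\theta = \tfrac{1}{1.01}\,\eps^2$ and run the algorithm of \Cref{cor:main0} with this value of~$\theta$ on the input $\varrho = \rho_1 \otimes \cdots \otimes \rho_T$; no new estimator needs to be constructed. Since that algorithm succeeds whenever $T \gg \frac{d}{\theta} = \frac{1.01\,d}{\eps^2}$, the hypothesis $T \gg \frac{d}{\eps^2}$ is exactly what is required.

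It then remains to check that the two cases of \Cref{cor:main2} land on the correct sides of the promise of \Cref{cor:main0}, i.e.\ that $\rho_{\mathrm{avg}} = \sigma$ implies $\DBchi{\rho_{\mathrm{avg}}}{\sigma} \le .99\theta$ and that $\Dtr{\rho_{\mathrm{avg}}}{\sigma} > \eps$ implies $\DB{\rho_{\mathrm{avg}}}{\sigma}^2 > 1.01\theta$. The first is immediate, since $\rho_{\mathrm{avg}} = \sigma$ gives $\DBchi{\rho_{\mathrm{avg}}}{\sigma} = 0 \le .99\theta$. For the second I would invoke the distance hierarchy recorded in the preliminaries, namely $\Dtr{\rho}{\sigma}^2 \le \textnormal{Infid}(\rho,\sigma) \le \DB{\rho}{\sigma}^2$, so that $\Dtr{\rho_{\mathrm{avg}}}{\sigma} > \eps$ forces $\DB{\rho_{\mathrm{avg}}}{\sigma}^2 \ge \Dtr{\rho_{\mathrm{avg}}}{\sigma}^2 > \eps^2 = 1.01\,\theta$. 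Hence the algorithm of \Cref{cor:main0} distinguishes the two cases with high probability, which is precisely the assertion of \Cref{cor:main2}.

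At this level there is no genuine obstacle: all the real work has already been absorbed into \Cref{thm:main2} (the bias/variance analysis of the swap-based estimator, carried out via the quantum Efron--Stein inequality) and into the depolarizing-channel reduction behind \Cref{cor:main0} (which is what removes the dependence on the minimum eigenvalue~$\gamma$ and leaves the clean bound $T \gg d/\theta$). The only points demanding care are bookkeeping: applying the Bures-vs-trace inequalities in the direction that pushes the ``far'' case strictly past $1.01\theta$, and confirming that the $.99$-vs-$1.01$ slack in \Cref{cor:main0} is compatible with the choice $\theta = \eps^2/1.01$ — which it is with room to spare, since $.99\cdot\frac{\eps^2}{1.01} < \frac{\eps^2}{1.01}\cdot 1.01 = \eps^2$. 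One could equivalently take any $\theta$ slightly below $\eps^2$ and absorb the constant $1.01$ into the universal constant hidden in $T \gg d/\eps^2$.
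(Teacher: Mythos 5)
Your proposal is correct and matches the paper's own derivation: the paper likewise obtains \Cref{cor:main2} from \Cref{cor:main0} by setting $\theta = \frac{1}{1.01}\eps^2$, noting $\rho_{\textnormal{avg}} = \sigma \implies \DBchi{\rho_{\textnormal{avg}}}{\sigma} = 0 \leq .99\theta$, and using $\DB{\rho_{\textnormal{avg}}}{\sigma} \geq \Dtr{\rho_{\textnormal{avg}}}{\sigma}$ to place the far case above $1.01\theta$. No gaps.
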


\subsection{Outline of the proof, and the bias bound}

We use some of the setup from \cite{BOW17} in this section.
We may assume without loss of generality that $\sigma$ is a full-rank diagonal density matrix, $\sigma = \diag(q(1), \dots, q(d))$.  For brevity, we write\footnote{Please note the typographic distinction between $\varrho$ and $\rho = \rho_{\mathrm{avg}}$.} $\rho = \rho_{\mathrm{avg}} = \avg_{t \in [T]} \{\rho_t\}$; we also write $p_t$ for the diagonal of $\rho_t$, and $p$ for the diagonal of~$\rho$.
Finally, we define the symmetric matrix $Q \in \R^{d \times d}$ by 
\begin{equation}
    \braket{i|Q|j} = q(i,j) \coloneqq \avg\{q(i), q(j)\}.
\end{equation}

As in \cite{BOW17}, we consider the following observable on $(\C^d)^{\otimes 2}$:
\begin{equation}
    \qC = 
    \sum_{i,j=1}^d \frac{\ket{ji}\!\bra{ij}}{q(i,j)}.
\end{equation}
For $s,t \in [T]$, define $\qC_{st}$ to be the observable $\qC$ applied to the $s$th and $t$th tensor components of $\varrho = \rho_1 \otimes \cdots \otimes \rho_T$.
Our algorithm for \Cref{thm:main2} will measure the observable
\begin{equation}
    \qM \coloneqq \frac{T-1}{T}\cdot\left( \avg_{\{s,t\}} \{\qC_{st}\} - \Id\right).
\end{equation}
Here the notational ambiguity of which element of $\{s,t\}$ is $s$ and which is~$t$ is irrelevant, since $C_{st} = C_{ts}$.\\

To prove \Cref{thm:main2}, we will carefully analyze $\E_{\varrho}[M]$ and $\Var_{\varrho}[M]$ and then apply the Chebyshev bound \Cref{lemmaCheb}.  Let us start with the easier quantity, $\E_{\varrho}[M]$.

\begin{notation}
    The Hadamard product of two matrices $A,B$ with the same dimensions is denoted $A \circ B$, where $(A \circ B)_{ij}=A_{ij}B_{ij}$. 
    We also use notation for ``Hadamard division'': $A \oslash B$ is the matrix with $(A \oslash B)_{ij}=A_{ij} / B_{ij}$. 
\end{notation}

Regarding observable $C$, observe that for any two density matrices $\tau,\tau'$,
\begin{equation}
    \E_{\tau \otimes \tau'}[\qC] =\Tr[(\tau\otimes \tau')C]=\Tr[\tau (\tau' \oslash Q)]=\Tr[\tau' (\tau\oslash Q)],
\end{equation}
and in particular
\begin{equation}
    \E_{\rho_t \otimes \rho_t}[\qC] = 1 + \DBchi{\rho_t}{\sigma}.
\end{equation}
Thus 
\begin{align}
    \E_{\varrho}[M] &= \frac{T-1}{T} \cdot \parens*{\avg_{\{s,t\}} \Tr[\rho_s (\rho_t \oslash Q)] - 1}\\ 
    &= \avg_{s,t} \{\Tr[\rho_s (\rho_t \oslash Q)] - 1\} - \frac{1}{T} \avg_t \braces*{\Tr[\rho_t (\rho_t \oslash Q)] - 1}\\
    &= \parens*{\Tr[\rho (\rho \oslash Q)] - 1} - \frac{1}{T} \avg_t \braces*{\Tr[\rho_t (\rho_t \oslash Q)] - 1} \\
    &= \DBchi{\rho}{\sigma} - \frac{1}{T} \cdot \avg_t \{\DBchi{\rho_t}{\sigma}\}\\
    &\eqqcolon \mu + \mathrm{bias}, \label{eqn:mubias}
\end{align}
to use the notation from \Cref{lemmaCheb}.

We can bound the ``bias'' term with the following lemma, which expresses a kind of concavity deficit for $\chi^2$-divergence\footnote{in fact, the proof also works for any of the quantum generalizations of the classical $\chi^2$-divergence studied in in~\cite{temme2010chi}}:
\begin{lemma}\label{lemma_conc_chi} Assuming $q(i) \geq \gamma$ for all~$i$ (i.e., $\sigma\geq \gamma\Id$), we have
\begin{equation}
\avg\{1+\DBchi{\rho_t}{\sigma}\} \leq \sqrt{\frac{d}{\gamma}} \cdot \DBchi{\rho}{\sigma}^{1/2}+d.
\end{equation}
\end{lemma}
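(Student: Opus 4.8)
The plan is to reduce the inequality to its ``classical diagonal'' content and then close with Cauchy--Schwarz. Since we may take $\sigma = \diag(q(1),\dots,q(d))$, we have the identity $1 + \DBchi{\rho_t}{\sigma} = \sum_{i,j}\frac{2}{q(i)+q(j)}\abs{(\rho_t)_{ij}}^2 = \Tr[\rho_t(\rho_t \oslash Q)]$, and likewise $1+\DBchi{\rho}{\sigma} = \Tr[\rho(\rho \oslash Q)]$ with $\rho = \rho_{\mathrm{avg}}$. The key structural observation I would make first is that \emph{$\rho_t \oslash Q$ is positive semidefinite}: writing $\rho_t \oslash Q = \rho_t \circ R$ with $R_{ij} = \frac{2}{q(i)+q(j)}$, the matrix $R$ is a Gram/Cauchy matrix (because $\frac{1}{q(i)+q(j)} = \int_0^\infty e^{-q(i)s}e^{-q(j)s}\,ds$), hence $R \succeq 0$, and then the Schur product theorem gives $\rho_t\circ R \succeq 0$ since $\rho_t \succeq 0$.

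Granting this, the rest is elementary. Because $0 \preceq \rho_t \preceq \Id$ and $\rho_t \oslash Q \succeq 0$, one gets $\Tr[\rho_t(\rho_t\oslash Q)] \le \|\rho_t\|_\infty\,\Tr[\rho_t \oslash Q] \le \Tr[\rho_t\oslash Q] = \sum_{i} \frac{p_t(i)}{q(i)}$, where $p_t$ is the diagonal of $\rho_t$. Averaging over $t\in[T]$ and writing $p(i) = \rho_{ii}$, $\Delta = \rho - \sigma$, this yields
\[
\avg_t\{1 + \DBchi{\rho_t}{\sigma}\} \le \sum_{i=1}^d \frac{p(i)}{q(i)} = d + \sum_{i=1}^d \frac{\Delta_{ii}}{q(i)}.
\]
For the last sum I would apply Cauchy--Schwarz with weights $1/q(i)$:
\[
\sum_{i=1}^d \frac{\Delta_{ii}}{q(i)} \le \Big(\sum_{i=1}^d \frac{\Delta_{ii}^2}{q(i)}\Big)^{1/2}\Big(\sum_{i=1}^d \frac{1}{q(i)}\Big)^{1/2} \le \DBchi{\rho}{\sigma}^{1/2}\cdot\sqrt{\tfrac{d}{\gamma}},
\]
using on the left factor that $\sum_i \frac{\Delta_{ii}^2}{q(i)} = \sum_i \frac{\abs{\Delta_{ii}}^2}{Q_{ii}} \le \sum_{i,j}\frac{\abs{\Delta_{ij}}^2}{Q_{ij}} = \DBchi{\rho}{\sigma}$, and on the right factor that $q(i)\ge\gamma$. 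Combining the two displays gives exactly the claimed bound.

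The only nonroutine point is the positivity of $\rho_t\oslash Q$ in the first paragraph; this is what lets one collapse the full Bures $\chi^2$-divergence of $\rho_t$ down to the classical $\chi^2$-type quantity $\Tr[\rho_t\oslash Q]=\sum_i p_t(i)/q(i)$, after which only convexity-free estimates remain. (This positivity is also the common feature of all the quantum $\chi^2$-divergences of~\cite{temme2010chi}, which is why the same argument applies to them, as noted in the footnote.)
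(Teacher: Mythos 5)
Your proof is correct, and its skeleton coincides with the paper's: both reduce $\avg_t\{1+\DBchi{\rho_t}{\sigma}\}$ to $\Tr[\sigma^{-1}\rho_{\mathrm{avg}}]=\sum_i p(i)/q(i)$ and then finish with exactly the same Cauchy--Schwarz step $\sum_i \Delta_{ii}/q(i)\le\sqrt{d/\gamma}\cdot\dchisq{p}{q}^{1/2}\le\sqrt{d/\gamma}\cdot\DBchi{\rho}{\sigma}^{1/2}$. The one place you diverge is the reduction itself: the paper simply cites the known ordering of quantum $\chi^2$-divergences ($\DBchi{\rho_t}{\sigma}\le\Tr[\sigma^{-1}\rho_t^2]-1$, ineq.~(20) of~\cite{temme2010chi}) and then uses $\rho_t^2\preceq\rho_t$, whereas you give a self-contained derivation of $1+\DBchi{\rho_t}{\sigma}\le\Tr[\rho_t\oslash Q]$ via the integral representation $\frac{1}{q(i)+q(j)}=\int_0^\infty e^{-q(i)s}e^{-q(j)s}\,ds$ (so $R\succeq 0$) together with the Schur product theorem and $\rho_t\preceq\Id$. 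Both justifications are sound; yours buys a citation-free argument, at the cost of rederiving a standard fact.
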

\begin{proof}
    We use the known upper-bound $\DBchi{\rho_t}{\sigma} \leq \Tr[\sigma^{-1} \rho_t^2]-1$, relating the smallest and largest variants of quantum $\chi^2$-divergence~\cite[ineq.~(20)]{temme2010chi}.  Weakening this upper bound further to $\Tr[\sigma^{-1} \rho_t]-1$, we get
    \begin{equation}
        \avg\{1+\DBchi{\rho_t}{\sigma}\}  \leq \avg\{\Tr[\sigma^{-1} \rho_t]\} = \Tr[\sigma^{-1} \rho] = \sum_{i=1}^d \frac{p(i)}{q(i)} = d + \frac{\delta(i)}{q(i)},
    \end{equation}
    where we wrote $p(i) = q(i) + \delta(i)$.  Now it remains to observe
    \begin{equation}
        \sum_{i=1}^d \frac{\delta(i)}{q(i)} \leq \frac{1}{\sqrt{\gamma}} \sum_{i=1}^d \frac{|\delta(i)|}{\sqrt{q(i)}} \leq \sqrt{\frac{d}{\gamma}} \sqrt{\sum_{i=1}^d \frac{\delta(i)^2}{q(i)}} = \sqrt{\frac{d}{\gamma}} \cdot \dchisq{p}{q}^{1/2}
        \leq \sqrt{\frac{d}{\gamma}} \cdot \DBchi{\rho}{\sigma}^{1/2}. \qedhere
    \end{equation}
\end{proof}
Putting the above lemma together with \Cref{eqn:mubias} yields:
\begin{proposition}\label{prop:biasbound}  In the setting of \Cref{thm:main2}, and writing $\mu = \DBchi{\rho}{\sigma}$, we have
\begin{equation}
\abs*{\E_{\varrho}[\qM]- \mu} \leq \sqrt{\frac{d}{\gamma T}} \sqrt{\mu} + \frac{d-1}{T}.
\end{equation}
\end{proposition}

The most difficult part of our theorem will be proving the following variance bound, which is the content of the next section:
\begin{proposition} \label{prop:var}
    In the setting of \Cref{thm:main2}, and writing $\mu = \DBchi{\rho}{\sigma}$, we have
    \begin{equation}
        \Var_{\varrho}[M] \leq \parens*{\frac{\mu}{T} + \sqrt{\frac{d}{\gamma}} \frac{\mu^{3/2}}{T} + \frac{d^2}{T^2} + \frac{d\mu}{\gamma T^2}}.
    \end{equation}
\end{proposition}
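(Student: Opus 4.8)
The plan is to bound $\Var_\varrho[M]$ by invoking the Quantum Efron--Stein inequality in the two-local form of \Cref{cor:2loc}. Writing $M = \frac{T-1}{T}(\avg_{\{s,t\}}\{C_{st}\} - \Id) = \sum_{t=1}^T M_t$ where $M_t = \frac{1}{2T^2}\sum_{s \neq t} C_{st}$ (absorbing the $\frac{T-1}{T}$ and averaging normalization), \Cref{cor:2loc} gives
\begin{equation}
    \tfrac14 \Var_\varrho[M] \leq \sum_{t=1}^T \E_\varrho[M_t^2] - \sum_{t=1}^T \E_{\varrho\otimes\varrho}[(M_t\otimes\Id)F_t(M_t\otimes\Id)F_t].
\end{equation}
So the task reduces to (i) expanding $\sum_t \E_\varrho[M_t^2]$, which is $\frac{1}{(2T^2)^2}\sum_t \sum_{s\neq t}\sum_{u\neq t}\E_\varrho[C_{ts}C_{tu}]$, and (ii) expanding the ``swap-conjugated'' term, which by the same bookkeeping as in \Cref{sec:mm} (introducing primed copies and writing $F_t = S_{tt'}$) becomes a similar triple sum with one $C_{ts}$ replaced by its partial expectation. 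The diagonal $s=u$ pieces of (i) contribute the ``leading'' $O(\frac{\mu}{T})$ and error terms, while the off-diagonal $s \neq u \neq t$ pieces of (i) and of (ii) should largely cancel, leaving controlled remainders.

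Concretely, the key computational steps are these. First I would record the elementary identities for $C$: for density matrices $\tau,\tau'$ one has $\E_{\tau\otimes\tau'}[C] = \Tr[\tau(\tau'\oslash Q)]$, and more to the point I need traces of products like $\E_\varrho[C_{ts}C_{tu}]$ for distinct indices, which by direct computation in the $\sigma$-eigenbasis unfold into expressions of the form $\sum_{i,j,k} \frac{(\rho_t)_{ij}(\rho_s)_{jk \text{ or } \ldots}}{q(i,j)q(j,k)}$ type sums — the cleanest way is to note $C = \sum_{i,j}\frac{\ket{ji}\bra{ij}}{q(i,j)}$ acts as $C(\tau\otimes\tau') = $ a reshuffling, so $\Tr[(\tau\otimes\tau')C_{ts}C_{tu}]$ collapses to a single trace $\Tr[ (\rho_s \oslash Q)\, \rho_t\, (\rho_u\oslash Q)\, \rho_t ]$-like object. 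Second, I would sum over $t$ and recognize $\avg_s\{\rho_s\} = \rho$, $\avg_s\{\rho_s \oslash Q\} = \rho \oslash Q$, so that the full triple sums assemble into quantities like $\frac{1}{T}\Tr[(\rho\oslash Q)\rho(\rho\oslash Q)\rho]$ plus $O(T^2)$-many ``missing'' terms each bounded by $O(1/\gamma^2)$ or so (using $\|\rho_t \oslash Q\|$-type bounds, e.g. $\Tr[\rho_t(\rho_t\oslash Q)] = 1 + \DBchi{\rho_t}{\sigma}$ and crude operator-norm estimates $\rho_t \oslash Q \preceq \gamma^{-1}\Id$). Third — and this is where the real cancellation lives — I would subtract the swap-conjugated term: writing $\Delta = \rho - \sigma$ and $p(i) = q(i)+\delta(i)$ as in \Cref{lemma_conc_chi}, the off-diagonal parts of both sums produce $\Tr[\Delta\oslash Q \cdot \rho \cdot \Delta \oslash Q \cdot \rho]$-type main terms with opposite signs, and what survives is governed by $\DBchi{\rho}{\sigma} = \|\Delta\oslash\sqrt{Q}\|$-norms, i.e. by $\mu$. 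Bounding the residual cross terms using $\Tr[\Delta^2\oslash Q] \leq \gamma^{-1}\Tr[\Delta^2]$, Cauchy--Schwarz, and $\|\Delta\|_\infty \leq 1$, $\|\rho\|_\infty \leq 1$ should yield exactly the four claimed terms $\frac{\mu}{T} + \sqrt{d/\gamma}\,\frac{\mu^{3/2}}{T} + \frac{d^2}{T^2} + \frac{d\mu}{\gamma T^2}$, with the $\mu^{3/2}$ term arising from a $\Tr[\Delta^3\oslash(\text{something})]$-type cubic contribution bounded via $\|\Delta \oslash \sqrt Q\|^2 \cdot \|\Delta\|_\infty \cdot \sqrt{d/\gamma}$ and the $d^2/T^2$, $d\mu/(\gamma T^2)$ terms from the $O(T^2)$ missing / self-collision terms.

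The main obstacle I anticipate is the careful bookkeeping in step three: matching the positive contributions from $\sum_t\E_\varrho[M_t^2]$ against the negative contributions from $\sum_t\E_{\varrho\otimes\varrho}[(M_t\otimes\Id)F_t(M_t\otimes\Id)F_t]$ term-by-term, so that no term of order $\frac{\mu^0}{T}$ (i.e. $\Omega(1/T)$ with no $\mu$) or worse survives — in the maximally mixed case (\Cref{sec:mm}) this cancellation produced the $\frac{1}{d^2 T}$ matching pair, and here the analogue is a $\Tr[(\sigma\oslash Q)\sigma(\sigma\oslash Q)\sigma]$-type constant that must cancel exactly, with only the $\Delta$-dependent remainder left. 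A secondary technical point is controlling the non-Hermiticity/ordering of $\rho_t$ and $\rho_t\oslash Q$ inside the traces: these matrices do not commute in general (unlike the uniform case where everything was a swap), so I would need Hölder-type inequalities $|\Tr[ABCD]| \leq \|A\|_\infty\|C\|_\infty \Tr[|B||D|]$ or repeated Cauchy--Schwarz in the Hilbert--Schmidt inner product, keeping track of which matrices get the $\gamma^{-1}$ from $\oslash Q$ and which stay bounded by $1$. Once the algebra is organized so that all main terms either telescope into $\mu$-powers divided by $T$ or are manifestly $O(\text{poly}(d,1/\gamma)/T^2)$, the proposition follows by collecting terms.
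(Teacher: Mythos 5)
Your overall route is the paper's: apply \Cref{cor:2loc} to $M=\sum_t M_t$, expand both $\sum_t\E_\varrho[M_t^2]$ and the swap-conjugated sum, and arrange for the $\mu$-independent $\Theta(1/T)$ pieces (the $\sigma\otimes\sigma\otimes\sigma$-type contributions, which equal exactly $\frac1T$ in each sum) to cancel between the two, leaving only $\Delta$-dependent remainders. That is precisely the structure of \Cref{lemmavar1,lemmavar2}, and your identification of which cancellation is load-bearing is correct.

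However, one step as you state it would fail quantitatively. You propose to control the $O(T^2)$ ``missing''/repeated-index terms (e.g.\ $\Tr[(\rho_s\otimes\rho_t)C^2]$, $\Tr[(\rho_t\otimes\rho_t\otimes\rho_s)C_{12}C_{13}]$) by crude worst-case estimates, ``each bounded by $O(1/\gamma^2)$ or so.'' With the $1/T^4$ normalization this contributes $\Theta\bigl(\frac{1}{\gamma^2 T^2}\bigr)$ to the variance; since $\gamma\le 1/d$ one has $\frac{1}{\gamma^2}\ge d^2$ with equality only for the maximally mixed state, so for skewed $\sigma$ (say $\gamma=d^{-2}$) this is $\frac{d^4}{T^2}\gg\frac{d^2}{T^2}+\frac{d\mu}{\gamma T^2}$, and pushing it through the Chebyshev argument would require $T\gg\frac{1}{\gamma\theta}$, strictly worse than the theorem's $\max\{\frac d\theta,\sqrt{d/(\theta\gamma)}\}$. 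The fix --- and what the paper does --- is never to bound these terms one at a time: first re-sum the repeated indices so they reassemble into moments of the \emph{average} state, e.g.\ $\frac{1}{T^4}\sum_{s,t}\Tr[(\rho_s\otimes\rho_t)C^2]=\frac{1}{T^2}\Tr[(\rho\otimes\rho)C^2]\le\frac{2d^2}{T^2}+\frac{2d\mu}{\gamma T^2}$ by \Cref{lemma_miscineq}; the genuinely per-source quantity $\frac{2}{T^3}\sum_t\E_{\rho_t\otimes\rho_t}[C]=\frac{2}{T^2}\avg_t\{1+\DBchi{\rho_t}{\sigma}\}$ is either cancelled exactly against the identical term appearing in the swap-conjugated sum or controlled by the concavity-deficit bound \Cref{lemma_conc_chi}. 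Relatedly, the cross terms such as $\frac{1}{T^3}\sum_t\Tr[(\rho_t\otimes\rho_t\otimes\Delta)C_{12}C_{13}]$ are handled in the paper not by H\"older-type norm bounds but by completing a square and invoking the positivity $\Tr[(R\otimes S\otimes S)C_{12}C_{13}]\ge 0$ of \Cref{eq:threepiecepos}; a Cauchy--Schwarz route can also work, but again only if one re-averages into $\rho$ before invoking the $\mu$-dependent moment bounds. (Two minor slips: the correct normalization is $M_t=\frac{1}{T^2}\sum_{s\neq t}C_{st}$, not $\frac{1}{2T^2}$; and the collapsed trace is $\Tr[R\,(S\oslash Q)(S'\oslash Q)]$ --- three factors, not four.)
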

Let us show now that this lets us complete the proof of \Cref{thm:main2}.  Using the theorem's hypothesis $T \gg \frac{d}{\theta}, \frac{\sqrt{d}}{\sqrt{\theta \gamma}}$, our two propositions give:
\begin{align}
    \abs*{\E_{\varrho}[\qM]- \mu} \ll \sqrt{ \mu \theta} + \theta, \qquad \qquad \Var_{\varrho}[\qM] &\ll \mu\theta  + \mu^{3/2}\theta^{1/2}  + \theta^2 \\
    \implies \stddev_{\varrho}[\qM] &\ll \sqrt{\mu \theta} + \mu^{3/4} \theta^{1/4} + \theta.
\end{align}
As $\sqrt{\theta \mu},\  \mu^{3/4} \theta^{1/4} \leq \mu + \theta$, the proof of \Cref{thm:main2} is completed using the Chebyshev argument \Cref{lemmaCheb}.

\subsection{Bounding the variance}
In this section, we prove \Cref{prop:var}.
First, we need some preparatory work. Begin by noting that 
\begin{equation}\label{Csquare}
    C^2 = \sum_{i,j=1}^d \frac{\ket{ij}\!\bra{ij}}{q(i,j)^2},
\end{equation}
and
\begin{equation}\label{Cprodmix}
    C_{12} C_{13} = \sum_{i,j,k=1}^d \frac{\ket{jki}\!\bra{ijk}}{q(i,k) q(j,k)},
\end{equation}
and similarly for any triple $s,s',t$. From this, we have 
\begin{equation}\label{eq:threepiece}
\Tr[(R\otimes S\otimes T)C_{12}C_{13}]=\Tr[R(S\oslash Q)(T \oslash Q)].
\end{equation}
and for any positive matrix $R$ and any Hermitian matrix $S$
\begin{equation}\label{eq:threepiecepos}
\Tr[(R\otimes S\otimes S)C_{12}C_{13}] = \Tr[R(S\oslash Q)(S \oslash Q)]\geq 0\,,
\end{equation}
since $S\oslash Q$ is Hermitian too, and $(S\oslash Q)^2\geq 0$.

\begin{lemma}\label{lemma_miscineq}
With $\rho=\sigma+\Delta$, we have:

\begin{align}
\Tr[(\sigma\otimes \Delta\otimes\Delta)C_{12}C_{13}]&\leq 2\DBchi{\rho}{\sigma}\\
\Tr[(\Delta\otimes \Delta\otimes\Delta)C_{12}C_{13}]&\leq \sqrt{\frac{d}{\gamma}}\DBchi{\rho}{\sigma}^{3/2}\\
\label{C2ineq}
\Tr[(\rho\otimes \rho)C^2]&\leq 2d^2+ \frac{2d}{\gamma}\DBchi{\rho}{\sigma}\\
\Tr[(\rho\otimes\rho\otimes \rho) C_{12}C_{13}]&\leq 1+4 \DBchi{\rho}{\sigma}+ \sqrt{\frac{d}{\gamma}}\DBchi{\rho}{\sigma}^{3/2}\label{C123ineq}
\end{align}
\end{lemma}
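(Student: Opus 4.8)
The plan is to turn each of the four inequalities into a statement about $d\times d$ matrices via the explicit forms \Cref{Csquare} and \Cref{eq:threepiece}, and then to estimate the resulting Hadamard-division traces. Writing $M \coloneqq \Delta \oslash Q$, I would first record three elementary facts. (i) Since $\sigma = \diag(q(1),\dots,q(d))$ is diagonal and $q(i,i) = q(i)$, we have $\sigma \oslash Q = \Id$. (ii) $q(i,j)^2 = \bigl(\tfrac{q(i)+q(j)}{2}\bigr)^2 \ge q(i)q(j)$, while $q(i) \le 2q(i,j)$ and $q(i,j) \ge \gamma$. (iii) Hadamard multiplication by $Q$ is the map $X \mapsto \tfrac12(\sigma X + X\sigma)$ (because $Q_{ij}X_{ij} = \tfrac12(q(i)+q(j))X_{ij}$), so $M$ is determined by $\sigma M + M\sigma = 2\Delta$; in particular $\Tr[\Delta M] = \Tr[\tfrac12(\sigma M + M\sigma)M] = \Tr[\sigma M^2]$, and this equals $\DBchi{\rho}{\sigma}$ since $\Tr[\Delta M] = \sum_{i,j}\abs{\Delta_{ij}}^2/q(i,j)$.

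\emph{The two easy bounds.} For the first inequality, \Cref{eq:threepiece} gives $\Tr[(\sigma\otimes\Delta\otimes\Delta)C_{12}C_{13}] = \Tr[\sigma M^2]$, which by fact~(iii) equals $\DBchi{\rho}{\sigma}$ (alternatively, bound $\sum_{i,j} q(i)\abs{\Delta_{ij}}^2/q(i,j)^2 \le 2\sum_{i,j}\abs{\Delta_{ij}}^2/q(i,j)$ directly using $q(i)\le 2q(i,j)$). For the third inequality, \Cref{Csquare} gives $\Tr[(\rho\otimes\rho)C^2] = \sum_{i,j} p(i)p(j)/q(i,j)^2$, and $q(i,j)^2 \ge q(i)q(j)$ bounds this by $\bigl(\sum_i p(i)/q(i)\bigr)^2$. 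Writing $p(i) = q(i) + \delta(i)$, the Cauchy--Schwarz estimate already used in the proof of \Cref{lemma_conc_chi} gives $\sum_i p(i)/q(i) = d + \sum_i\delta(i)/q(i) \le d + \sqrt{d/\gamma}\,\DBchi{\rho}{\sigma}^{1/2}$; squaring this and applying AM--GM to the cross term $2d\sqrt{d/\gamma}\,\DBchi{\rho}{\sigma}^{1/2} \le d^2 + \tfrac{d}{\gamma}\DBchi{\rho}{\sigma}$ produces exactly $2d^2 + \tfrac{2d}{\gamma}\DBchi{\rho}{\sigma}$.

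\emph{The main obstacle.} The second inequality, $\Tr[(\Delta\otimes\Delta\otimes\Delta)C_{12}C_{13}] = \Tr[\Delta M^2] \le \sqrt{d/\gamma}\,\DBchi{\rho}{\sigma}^{3/2}$, is the crux: because $\Delta$ is not positive semidefinite, a direct bound such as $\Tr[\Delta M^2] \le \|\Delta\|_\infty\Tr[M^2]$ is too lossy. Here I would exploit fact~(iii): since $2\Delta = \sigma M + M\sigma$, cyclicity of the trace gives $\Tr[\Delta M^2] = \tfrac12\Tr[(\sigma M + M\sigma)M^2] = \Tr[\sigma M^3]$. Now $\Tr[\sigma M^3] = \Tr[(M\sigma^{1/2})^\dagger\, M\, (M\sigma^{1/2})] \le \|M\|_\infty\Tr[\sigma^{1/2}M^2\sigma^{1/2}] = \|M\|_\infty\,\DBchi{\rho}{\sigma}$ (using that $M$ is Hermitian), while $\|M\|_\infty \le \|M\|_2 = \bigl(\sum_{i,j}\abs{\Delta_{ij}}^2/q(i,j)^2\bigr)^{1/2} \le \gamma^{-1/2}\DBchi{\rho}{\sigma}^{1/2}$ by fact~(ii). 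Multiplying gives $\Tr[\Delta M^2] \le \gamma^{-1/2}\DBchi{\rho}{\sigma}^{3/2} \le \sqrt{d/\gamma}\,\DBchi{\rho}{\sigma}^{3/2}$ since $d\ge 1$.

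Finally, for the fourth inequality I would substitute $\rho = \sigma + \Delta$ in all three tensor slots and expand multilinearly into eight terms, evaluating each by \Cref{eq:threepiece}. Using $\sigma\oslash Q = \Id$, almost all collapse: the pure-$\sigma$ term is $\Tr[\sigma] = 1$; each of the three terms carrying a single $\Delta$ reduces to $\Tr[\Delta] = 0$ or to $\Tr[\sigma M] = \sum_i\delta(i) = 0$ (as $q(i)M_{ii} = \delta(i)$); each of the three terms carrying two $\Delta$'s reduces to $\Tr[\sigma M^2]$ or $\Tr[\Delta M]$, i.e.\ to $\DBchi{\rho}{\sigma}$ (bounding the $\Tr[\sigma M^2]$ term by $2\DBchi{\rho}{\sigma}$ via the first inequality), for a total of at most $4\DBchi{\rho}{\sigma}$; and the pure-$\Delta$ term is exactly the left-hand side of the second inequality. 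Summing yields $1 + 4\DBchi{\rho}{\sigma} + \sqrt{d/\gamma}\,\DBchi{\rho}{\sigma}^{3/2}$. In short, the only step that is not routine bookkeeping is the all-$\Delta$ estimate; everything else rests on the identity $\sigma\oslash Q = \Id$ together with the entrywise expression for $\DBchi{\rho}{\sigma}$.
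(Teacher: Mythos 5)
Your proposal is correct. For the fourth inequality it is exactly the paper's argument: substitute $\rho=\sigma+\Delta$, expand multilinearly, kill the single-$\Delta$ terms via $\sigma\oslash Q=\Id$ and $\Tr[\Delta]=0$, evaluate the two-$\Delta$ terms as $\DBchi{\rho}{\sigma}$, and absorb the all-$\Delta$ term into the second inequality. For the first three inequalities the paper simply cites Propositions 6.13--6.15 of~\cite{BOW17}, whereas you supply direct proofs; these are sound, and the anti-commutator identity $2\Delta=\sigma M+M\sigma$ (equivalently, $Q\circ X=\tfrac12(\sigma X+X\sigma)$) is exactly the right lever. Your argument for the crux term is valid: $\Tr[\Delta M^2]=\Tr[\sigma M^3]=\Tr[(M\sigma^{1/2})^\dagger M(M\sigma^{1/2})]\le\|M\|_\infty\Tr[\sigma M^2]$, and $\|M\|_\infty\le\|M\|_2\le\gamma^{-1/2}\DBchi{\rho}{\sigma}^{1/2}$. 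In fact your versions are slightly sharper than the stated bounds: symmetrizing over $i\leftrightarrow j$ gives the exact identity $\Tr[\sigma M^2]=\Tr[\Delta M]=\DBchi{\rho}{\sigma}$ (no factor~$2$ needed), and your bound for the all-$\Delta$ term has no $\sqrt{d}$; both are of course consistent with the lemma as stated.
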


\begin{proof}
 The first three inequalities are Propositions 6.13, 6.14, 6.15 in \cite{BOW17}, respectively. For the last one, by  simple calculations using the relations above, we have
 \begin{align}\label{miscineq}
\Tr[(\sigma\otimes \sigma\otimes\sigma)C_{12}C_{13}]&=\Tr[\sigma]=1\,,\\
\Tr[(\Delta\otimes \sigma\otimes\sigma)C_{12}C_{13}]&=\Tr[(\sigma\otimes \Delta\otimes\sigma)C_{12}C_{13}]=\Tr[(\sigma\otimes \sigma\otimes\Delta)C_{12}C_{13}]=\Tr[\Delta]=0\,,\\
\Tr[(\Delta\otimes \Delta\otimes\sigma)C_{12}C_{13}]&= \Tr[(\Delta\otimes \sigma\otimes\Delta)C_{12}C_{13}]=\DBchi{\rho}{\sigma}\,.\\
\end{align}
Substituting $\rho=\sigma+\Delta$, one obtains the last inequality in the lemma.
\end{proof}

We are now ready to start bounding
\begin{equation}
    \Var_{\varrho}[M] =  
    \Var_{\varrho}\bracks*{\frac{T-1}{T} \cdot \avg_{\{s,t\}} \{C_{st}\}} = 
    \Var_{\varrho}\bracks*{\avg_{\{s,t\}} \{C_{st}\}} = \Var_{\varrho}\bracks*{\sum_{t=1}^T M_t},
\end{equation}
where for fixed $t \in [T]$ we define
\begin{equation}
    M_t = \frac{1}{T^2} \sum_{s \neq t} C_{st}.
\end{equation}
We now employ the quantum Efron--Stein inequality, in the form of \Cref{cor:2loc}, to get
\begin{equation}
    \frac14 \Var_{\varrho}[M] \leq \sum_{t=1}^T \E_{\varrho}[M_t^2] - \sum_{t=1}^T \E_{\varrho \otimes \varrho}[(M_t \otimes \Id) F_t (M_t \otimes \Id) F_t].
\end{equation}
Now our goal, \Cref{prop:var}, follows immediately from subtracting the bounds in the below two lemmas:
\begin{lemma}\label{lemmavar1}
$\displaystyle \sum_{t=1}^T \E_{\varrho}[M_t^2] \leq \parens*{\frac1T - \frac{2}{T^3}\sum_{t=1}^T \E_{\rho_t \otimes \rho_t}[C]}
     + O\parens*{\frac{\mu}{T} + \sqrt{\frac{d}{\gamma}} \frac{\mu^{3/2}}{T} + \frac{d^2}{T^2} + \frac{d\mu}{\gamma T^2}}.$
\end{lemma}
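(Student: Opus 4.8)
The plan is to expand $\sum_{t=1}^T \E_\varrho[M_t^2] = \frac{1}{T^4}\sum_{t}\sum_{s,s'\neq t}\E_\varrho[C_{st}C_{s't}]$ and split off the diagonal $s=s'$ from the off-diagonal $s\neq s'$. The diagonal part is $\frac1{T^4}\sum_t\sum_{s\neq t}\E_{\rho_s\otimes\rho_t}[C^2]$, and by \eqref{Csquare} each term equals $\sum_{i,j}\frac{p_s(i)p_t(j)}{q(i,j)^2}\geq 0$; so dropping the restriction $s\neq t$ and using bilinearity in $(\rho_s,\rho_t)$ gives the upper bound $\frac1{T^2}\E_{\rho\otimes\rho}[C^2]$, which \eqref{C2ineq} turns into $O(\frac{d^2}{T^2}+\frac{d\mu}{\gamma T^2})$. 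Thus the diagonal part lands entirely in the error term, and the ``main'' part $\frac1T-\frac{2}{T^3}\sum_t\E_{\rho_t\otimes\rho_t}[C]$ must come wholly from the off-diagonal part.

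For the off-diagonal part the three tensor factors $s,s',t$ are distinct, so \eqref{eq:threepiece} writes each term as $\Tr[\rho_t(\rho_s\oslash Q)(\rho_{s'}\oslash Q)]$. Abbreviating $P_s=\rho_s\oslash Q$ and $P=\rho\oslash Q=\frac1T\sum_sP_s$, we have $\sum_{s,s'\neq t}\Tr[\rho_t P_sP_{s'}]=\Tr[\rho_t(TP-P_t)^2]$, and reimposing $s\neq s'$ only subtracts the nonnegative quantities $\Tr[\rho_tP_s^2]$ (nonnegative since $\rho_t\geq 0$ and $P_s$ is Hermitian); hence the off-diagonal part is at most $\frac{1}{T^4}\sum_t\Tr[\rho_t(TP-P_t)^2]$. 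Expanding $(TP-P_t)^2=T^2P^2-T(PP_t+P_tP)+P_t^2$ and summing over $t$: the $T^2P^2$ term gives $T^3\Tr[\rho P^2]=T^3\Tr[(\rho\otimes\rho\otimes\rho)C_{12}C_{13}]$, whose leading contribution $1$ in \eqref{C123ineq} supplies $\frac1T$ and whose remainder is $O(\frac\mu T+\sqrt{d/\gamma}\frac{\mu^{3/2}}{T})$; the $P_t^2$ term gives $\sum_t\Tr[\rho_tP_t^2]=\sum_t\Tr[(\rho_t\otimes\rho_t\otimes\rho_t)C_{12}C_{13}]$, bounded via \eqref{C123ineq} applied to each $\rho_t$ and summed using $\DBchi{\rho_t}{\sigma}\leq\gamma^{-1}$ together with \Cref{lemma_conc_chi} (to control $\sum_t\DBchi{\rho_t}{\sigma}$ and $\sum_t\DBchi{\rho_t}{\sigma}^{3/2}$); after dividing by $T^4$, the cross terms that arise, such as $\frac{\sqrt{d/\gamma}\sqrt\mu}{T^3}$, fit the allowed error via the AM--GM estimate $\frac{\sqrt{d/\gamma}\sqrt\mu}{T^3}=\sqrt{\tfrac1{T^4}\cdot\tfrac{d\mu}{\gamma T^2}}\leq\tfrac12\bigl(\tfrac1{T^4}+\tfrac{d\mu}{\gamma T^2}\bigr)$.

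The heart of the argument is the cross term $-\frac{2}{T^3}\sum_t\mathrm{Re}\,\Tr[\rho_tPP_t]$. Writing $\rho_t=\sigma+\Delta_t$, $P=\Id+\Delta\oslash Q$, $P_t=\Id+\Delta_t\oslash Q$ and multiplying out, using $\Tr[\sigma(\Delta_s\oslash Q)]=\Tr[\Delta_s]=0$, $\sum_t\Delta_t=T\Delta$, and the identity $\sum_t(1+\DBchi{\rho_t}{\sigma})=\sum_t\E_{\rho_t\otimes\rho_t}[C]$, one finds $\sum_t\mathrm{Re}\,\Tr[\rho_tPP_t]=\sum_t\E_{\rho_t\otimes\rho_t}[C]+T\,\Tr[\sigma(\Delta\oslash Q)^2]+T\mu+\sum_t\mathrm{Re}\,\Tr[\Delta_t(\Delta\oslash Q)(\Delta_t\oslash Q)]$. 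The first summand, times $-\frac2{T^3}$, is exactly the quantity in the lemma. The next two contribute the nonpositive terms $-\frac{2}{T^2}\Tr[\sigma(\Delta\oslash Q)^2]$ and $-\frac{2\mu}{T^2}$, which may be dropped from an upper bound. The last summand is controlled by Cauchy--Schwarz, $|\Tr[\Delta_t(\Delta\oslash Q)(\Delta_t\oslash Q)]|\leq\|\Delta\oslash Q\|_2\,\|\Delta_t\oslash Q\|_\infty\,\|\Delta_t\|_2\leq\sqrt{\mu/\gamma}\cdot\tfrac2\gamma\cdot\sqrt{\DBchi{\rho_t}{\sigma}}$ (using $\Tr[(\Delta\oslash Q)^2]\leq\gamma^{-1}\mu$, the Schur-product bound $\|\Delta_t\oslash Q\|_\infty\leq 2\gamma^{-1}$, and $\Tr[\Delta_t^2]\leq\DBchi{\rho_t}{\sigma}$), followed by a Cauchy--Schwarz over $t$ and \Cref{lemma_conc_chi}; this too feeds into the claimed error.

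The main obstacle is exactly this accounting: there are many ``boundary'' correction terms (from removing the restrictions $s\neq t$ and $s\neq s'$, and from the numerous cross-products of the expansion), and one must check that, once the single negative quantity $-\frac{2}{T^3}\sum_t\E_{\rho_t\otimes\rho_t}[C]$ has been isolated, every remaining term is dominated by one of $\frac\mu T$, $\sqrt{d/\gamma}\frac{\mu^{3/2}}{T}$, $\frac{d^2}{T^2}$, $\frac{d\mu}{\gamma T^2}$, with the AM--GM step above and the uniform bound $\DBchi{\rho_t}{\sigma}\leq\gamma^{-1}$ as the load-bearing estimates. The conceptual work has already been done by the quantum Efron--Stein inequality \Cref{cor:2loc}, which reduced $\Var_\varrho[M]$ to these one-body quantities; what remains is bounding them, and \Cref{lemma_miscineq} and \Cref{lemma_conc_chi} are tailor-made for that.
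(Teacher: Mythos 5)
Your overall skeleton is sound and, after unwinding the notation $P_s=\rho_s\oslash Q$, it is essentially the paper's computation: the diagonal ($s=s'$) piece is absorbed into $\tfrac{1}{T^2}\Tr[(\rho\otimes\rho)C^2]$ via \eqref{C2ineq}, the $T^2P^2$ piece gives $\tfrac1T\Tr[(\rho\otimes\rho\otimes\rho)C_{12}C_{13}]$ handled by \eqref{C123ineq}, and your expansion of the cross term correctly isolates the exact quantity $-\tfrac{2}{T^3}\sum_t\E_{\rho_t\otimes\rho_t}[C]$ plus droppable nonpositive terms. The gap is in the quantitative accounting of the two remaining pieces, and it is real: the lemma is an unconditional inequality (no hypothesis ties $T$ to $\gamma$; the assumption $T\gg\max\{d/\theta,\sqrt d/\sqrt{\theta\gamma}\}$ is only invoked after \Cref{prop:var}), but your bounds introduce powers of $1/\gamma$ that the four allowed error terms do not dominate. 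Concretely: (i) for the $P_t^2$ piece, applying \eqref{C123ineq} per $t$ together with $\DBchi{\rho_t}{\sigma}\le\gamma^{-1}$ and \Cref{lemma_conc_chi} produces terms such as $\tfrac{d^{3/2}}{\gamma T^3}$ and $\tfrac{d\sqrt\mu}{\gamma^{3/2}T^3}$; with $\mu$ small, $\gamma\ll 1/d$, and $T\ll\tfrac{1}{\gamma\sqrt d}$ the first of these exceeds $\tfrac{d^2}{T^2}+\tfrac{d\mu}{\gamma T^2}$ by an unbounded factor, so the claimed error is not recovered. (ii) For the residual cross term, your chain $|\Tr[\Delta_t(\Delta\oslash Q)(\Delta_t\oslash Q)]|\le\sqrt{\mu/\gamma}\cdot\tfrac2\gamma\cdot\sqrt{\DBchi{\rho_t}{\sigma}}$ followed by Cauchy--Schwarz over $t$ yields a contribution of order $\tfrac{\sqrt{d\mu}}{\gamma^{3/2}T^2}$ (plus worse), whose $\gamma^{-3/2}$ dependence again beats the worst allowed dependence $\gamma^{-1}$ (take $\mu$, $d$, $T$ fixed and $\gamma\to0$). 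So as written you prove a strictly weaker inequality than the lemma states, and even in the regime of \Cref{thm:main2} absorbing these extra terms would need a further case analysis you have not supplied.

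The paper avoids both problems by never taking per-$t$ operator-norm bounds. For the triple-$\rho_t$ term it uses Cauchy--Schwarz together with $\rho_t\le T\rho$ as in \eqref{ineq:three_coarse}, giving $\tfrac{1}{T^4}\sum_t\Tr[(\rho_t\otimes\rho_t\otimes\rho_t)C_{12}C_{13}]\le\tfrac{1}{T^2}\Tr[(\rho\otimes\rho)C^2]=O(\tfrac{d^2}{T^2}+\tfrac{d\mu}{\gamma T^2})$. For the cross terms it uses the positivity trick $\Tr[(\rho_t\otimes(\rho_t+T\Delta)\otimes(\rho_t+T\Delta))C_{12}C_{13}]\ge0$ (i.e.\ $\Tr[\rho_t(P_t+T\,\Delta\oslash Q)^2]\ge0$ in your notation, a special case of \eqref{eq:threepiecepos}), which trades the mixed $\Delta$ terms for the triple-$\rho_t$ term just bounded plus $\tfrac1T\Tr[(\rho\otimes\Delta\otimes\Delta)C_{12}C_{13}]\le\tfrac{2\mu}{T}$ from \Cref{lemma_miscineq}. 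Replacing your steps (i) and (ii) with these two devices repairs the argument; the rest of your proposal (including the AM--GM absorption of lower-order terms and the identification of the exact $-\tfrac{2}{T^3}\sum_t\E_{\rho_t\otimes\rho_t}[C]$ term) is fine.
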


\begin{lemma}\label{lemmavar2}
$\displaystyle \sum_{t=1}^T\E_{\varrho\otimes \varrho}\left[(M_t\otimes\Id) F_t (M_t \otimes \Id) F_t\right]\geq \parens*{\frac1T - \frac{2}{T^3}\sum_{t=1}^T \E_{\rho_t \otimes \rho_t}[C]}
     - O\parens*{\frac{d^2}{T^2} + \frac{d\mu}{\gamma T^2}}$.
\end{lemma}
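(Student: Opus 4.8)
Mirroring the proof of \Cref{lemma_bound_uniform}, the plan is to compute $\E_{\varrho\otimes\varrho}[(M_t\otimes\Id)F_t(M_t\otimes\Id)F_t]$ explicitly. Numbering the tensor components of $\varrho\otimes\varrho$ as $1,\dots,T,1',\dots,T'$, the swap $F_t$ is just $S_{tt'}$; since $s\neq t$ in the sum defining $M_t$ and $C$ is symmetric under exchanging its two legs, conjugating $C_{st}$ by $S_{tt'}$ merely replaces the index $t$ by $t'$, so $(M_t\otimes\Id)F_t(M_t\otimes\Id)F_t=\tfrac1{T^4}\sum_{s,s'\neq t}C_{st}C_{s't'}$. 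Split this into the diagonal ($s=s'$) and off-diagonal ($s\neq s'$) parts. The diagonal terms $C_{st}C_{st'}$ act on the three components $s,t,t'$, into which $\rho_s,\rho_t,\rho_t$ are plugged, so by \Cref{eq:threepiece} their expectation is $\Tr[\rho_s(\rho_t\oslash Q)^2]$; the off-diagonal terms act on the disjoint pairs $\{s,t\},\{s',t'\}$, hence factor as $\E_{\varrho\otimes\varrho}[C_{st}C_{s't'}]=a_{ts}\,a_{ts'}$, where $a_{ts}\coloneqq\E_{\rho_s\otimes\rho_t}[C]=\Tr[\rho_t(\rho_s\oslash Q)]=\Tr[\rho_s(\rho_t\oslash Q)]$ is symmetric in $s,t$. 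Thus $\sum_t\E_{\varrho\otimes\varrho}[(M_t\otimes\Id)F_t(M_t\otimes\Id)F_t]=\tfrac1{T^4}\sum_t\sum_{s\neq t}\Tr[\rho_s(\rho_t\oslash Q)^2]+\tfrac1{T^4}\sum_t\bigl((\sum_{s\neq t}a_{ts})^2-\sum_{s\neq t}a_{ts}^2\bigr)$.

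The crucial observation is that the diagonal sum is \emph{not} to be discarded: by Cauchy--Schwarz (nonnegativity of the variance of $\rho_t\oslash Q$ in the state $\rho_s$) one has $\Tr[\rho_s(\rho_t\oslash Q)^2]\ge\Tr[\rho_s(\rho_t\oslash Q)]^2=a_{ts}^2$ term by term, so the diagonal sum dominates $\tfrac1{T^4}\sum_t\sum_{s\neq t}a_{ts}^2$ and we are left with $\sum_t\E_{\varrho\otimes\varrho}[(M_t\otimes\Id)F_t(M_t\otimes\Id)F_t]\ge\tfrac1{T^4}\sum_t(\sum_{s\neq t}a_{ts})^2$. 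I would then expand the right-hand side. Since $\sigma$ is diagonal with $q(i,i)=q(i)$, we have $\sigma\oslash Q=\Id$, so $a_{ts}=1+\Tr[\rho_t(\Delta_s\oslash Q)]$ with $\Delta_s\coloneqq\rho_s-\sigma$; using the elementary identities $\sum_s(\Delta_s\oslash Q)=T(\Delta\oslash Q)$, $\Tr[\sigma(\Delta\oslash Q)]=\Tr[\Delta]=0$, $\Tr[\Delta(\Delta\oslash Q)]=\DBchi{\rho}{\sigma}=\mu$, and $\Tr[\rho_t(\Delta_t\oslash Q)]=\DBchi{\rho_t}{\sigma}$, one finds $\sum_{s\neq t}a_{ts}=T(1+\Tr[\rho_t(\Delta\oslash Q)])-\E_{\rho_t\otimes\rho_t}[C]$. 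Squaring, summing over $t$, and dividing by $T^4$ produces exactly the ``shared'' main term $\tfrac1T-\tfrac2{T^3}\sum_t\E_{\rho_t\otimes\rho_t}[C]$, plus the manifestly nonnegative leftovers $\tfrac{2\mu}{T}$, $\tfrac1{T^2}\sum_t\Tr[\rho_t(\Delta\oslash Q)]^2$ and $\tfrac1{T^4}\sum_t\E_{\rho_t\otimes\rho_t}[C]^2$, plus the single cross term $-\tfrac2{T^3}\sum_t\Tr[\rho_t(\Delta\oslash Q)]\,\E_{\rho_t\otimes\rho_t}[C]$.

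It then remains only to show this cross term is $O(\tfrac{d^2}{T^2}+\tfrac{d\mu}{\gamma T^2})$ in absolute value, after which discarding the nonnegative leftovers finishes the proof. For this I would use $|\Tr[\rho_t(\Delta\oslash Q)]|\le\Tr[\rho_t(\Delta\oslash Q)^2]^{1/2}\le\Tr[(\Delta\oslash Q)^2]^{1/2}\le(\mu/\gamma)^{1/2}$, where the last step uses $q(i,j)\ge\gamma$ so that $\sum_{i,j}|\Delta_{ij}|^2/q(i,j)^2\le\gamma^{-1}\sum_{i,j}|\Delta_{ij}|^2/q(i,j)=\mu/\gamma$, together with $\sum_t\E_{\rho_t\otimes\rho_t}[C]=T(1+\avg_t\{\DBchi{\rho_t}{\sigma}\})\le T(d+\sqrt{d/\gamma}\,\mu^{1/2})$ from the concavity-deficit bound \Cref{lemma_conc_chi}. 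Multiplying these and simplifying by AM--GM (e.g.\ $d\sqrt{\mu/\gamma}=\sqrt{d^2\cdot(\mu/\gamma)}\le d^2+d\mu/\gamma$) gives the claimed error bound. The main obstacle is precisely the retention of the three-body diagonal term: at first glance it looks discardable, being nonnegative while only a lower bound is sought, but if one drops it the best available bound on $\tfrac1{T^4}\sum_t\sum_{s\neq t}a_{ts}^2$ alone is only $O(\avg_t\{\DBchi{\rho_t}{\sigma}\}/(\gamma T^2))$, and since the individual $\DBchi{\rho_t}{\sigma}$ can be as large as $\Theta(d)$ even when $\rho_{\mathrm{avg}}=\sigma$, this would introduce a spurious $\Theta(d/(\gamma T^2))$ term falling outside the error budget of \Cref{prop:var}.
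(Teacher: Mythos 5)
Your proof is correct, and it takes a genuinely different route from the paper's. Both arguments start from the same expansion $(M_t\otimes\Id)F_t(M_t\otimes\Id)F_t=\frac1{T^4}\sum_{s,s'\neq t}C_{st}C_{s't'}$, but you then \emph{keep} the diagonal three-body part and use the per-term variance inequality $\Tr[\rho_s(\rho_t\oslash Q)^2]\ge\Tr[\rho_s(\rho_t\oslash Q)]^2$ to absorb the $-\sum_{s\neq t}a_{ts}^2$ piece (with $a_{ts}=\E_{\rho_s\otimes\rho_t}[C]$), leaving the clean perfect square $\frac1{T^4}\sum_t\bigl(\sum_{s\neq t}a_{ts}\bigr)^2$; your error term then comes from the single cross term, which you control via $\abs{\Tr[\rho_t(\Delta\oslash Q)]}\le\sqrt{\mu/\gamma}$ together with the concavity-deficit bound \Cref{lemma_conc_chi}. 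The paper instead discards the nonnegative three-body term immediately (via \Cref{eq:threepiecepos}), completes the remaining off-diagonal product sum to full sums, absorbs the resulting cross terms with the completed square $\sum_t(\Tr[(\rho_t\otimes(\rho_t-T\Delta))C])^2\ge0$, and incurs the $O(d^2/T^2+d\mu/(\gamma T^2))$ loss from the subtracted diagonal $\sum_t\sum_{s\neq t}\Tr[(\rho_t\otimes\rho_s)C]^2$, bounded by the state-level Cauchy--Schwarz $\E[C]^2\le\E[C^2]$ and the second-moment estimate \eqref{C2ineq}. So your version invokes \Cref{lemma_conc_chi} inside the variance bound (the paper reserves it for the bias bound, \Cref{prop:biasbound}) but avoids \eqref{C2ineq}, while the paper does the opposite; both routes are valid, all your intermediate identities (the factorization for $s\neq s'$, $\sigma\oslash Q=\Id$, $\sum_{s\neq t}a_{ts}=T(1+\Tr[\rho_t(\Delta\oslash Q)])-\E_{\rho_t\otimes\rho_t}[C]$, and $\Tr[(\Delta\oslash Q)^2]\le\mu/\gamma$) check out, and the final AM--GM step gives exactly the claimed error.

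One remark in your closing paragraph is off, though it does not affect your proof: retaining the three-body term is \emph{not} necessary. You assert that after dropping it, the best available bound on $\frac1{T^4}\sum_t\sum_{s\neq t}a_{ts}^2$ is $O(\avg_t\{\DBchi{\rho_t}{\sigma}\}/(\gamma T^2))$, which could be as large as $\Theta(d/(\gamma T^2))$. In fact, bounding each term by $\Tr[(\rho_t\otimes\rho_s)C]^2\le\Tr[(\rho_t\otimes\rho_s)C^2]$ and summing gives $\frac1{T^2}\Tr[(\rho\otimes\rho)C^2]\le\frac{2d^2}{T^2}+\frac{2d\mu}{\gamma T^2}$ by \eqref{C2ineq}, squarely within the error budget---this is precisely how the paper proceeds. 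So the three-body term is a convenience (it yields your tidier perfect-square structure) rather than a necessity.
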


\begin{proof}[Proof of \Cref{lemmavar1}.]
We have
\begin{align}
 \sum_{t=1}^{T}\E_{\varrho}[M_t^2]&=\frac{1}{T^4}\sum_{t=1}^{T}\sum_{s\neq t}\Tr[(\rho_s\otimes \rho_t)C^2]+\sum_{t=1}^{T}\sum_{t\neq s\neq s'\neq t}\frac{\Tr[(\rho_t\otimes \rho_s\otimes \rho_{s'})C_{ts} C_{ts'}]}{T^4}.
\end{align}
To bound this, the first step is 
\begin{align}
 \frac{1}{T^4}\sum_{t=1}^{T}\sum_{s\neq t}\Tr[(\rho_s\otimes \rho_t)C^2]&= \frac{1}{T^4}\sum_{s, t=1}^T\Tr[(\rho_s\otimes \rho_t)C^2]-\frac{1}{T^4}\sum_{t=1}^T\Tr[(\rho_t \otimes \rho_t)C^2]\\
 &=\frac{1}{T^2}\Tr[(\rho\otimes \rho)C^2]-\frac{1}{T^4}\sum_{t=1}^T\Tr[(\rho_t \otimes \rho_t)C^2].
\end{align}
Then, by keeping track of added and subtracted terms, we have
\begin{align}
 \sum_{t=1}^{T}\sum_{t\neq s\neq s'\neq t}\frac{\Tr[(\rho_t\otimes \rho_s\otimes \rho_{s'})C_{ts} C_{ts'}]}{T^4}&=\frac{1}{T}\Tr[(\rho\otimes\rho\otimes \rho) C_{12}C_{13}]\label{eqn:i}\\
 &-\frac{1}{T^4}\sum_{t=1}^{T}\sum_{s\neq t}\Tr[(\rho_t\otimes \rho_s\otimes \rho_{s})C_{12} C_{13}] \label{eqn:ii}\\
 &-\frac{1}{T^4}\sum_{t=1}^{T}\sum_{s=1}^ T\Tr[(\rho_t\otimes \rho_t\otimes \rho_{s})C_{12} C_{13}]
  \label{eqn:iii}\\
 &-\frac{1}{T^4}\sum_{t=1}^{T}\sum_{s=1}^ T\Tr[(\rho_t\otimes \rho_s\otimes \rho_{t})C_{12} C_{13}] \label{eqn:iv}\\
 &+\frac{1}{T^4}\sum_{t=1}^{T}\Tr[(\rho_t\otimes \rho_t\otimes \rho_{t})C_{12} C_{13}]. \label{eqn:v}
\end{align}
Using \Cref{eq:threepiecepos},
 we see that the second term on the right-hand side is negative:
\[\eqref{eqn:ii} =-\frac{1}{T^4}\sum_{t=1}^{T}\sum_{s\neq t}\Tr[(\rho_t\otimes \rho_s\otimes \rho_{s})C_{12} C_{13}]\leq 0\,.\] 
Moreover, the last term satisfies, by Cauchy--Schwarz and $\rho_t\leq T\rho$,
\begin{align}\label{ineq:three_coarse}
\eqref{eqn:v}= \frac{1}{T^4}\sum_{t=1}^{T}\Tr[(\rho_t\otimes \rho_t\otimes \rho_t)C_{12} C_{13}]\leq \frac{1}{T^4}\sum_{t=1}^{T}\Tr[(\rho_t\otimes \rho_t)C^2_{12}]\leq \frac{1}{T^2}\Tr[(\rho \otimes \rho)C^2_{12}].
\end{align}
For the remaining terms we have:
\begin{align}
\eqref{eqn:iii}&=-\frac{1}{T^4}\sum_{t=1}^{T}\sum_{s=1}^ T\Tr[(\rho_t\otimes \rho_t\otimes \rho_{s})C_{12} C_{13}]=-\frac{1}{T^3}\sum_{t=1}^{T}\Tr[(\rho_t\otimes \rho_t\otimes \rho)C_{12} C_{13}]\\
&=-\frac{1}{T^3}\sum_{t=1}^{T}\Tr[(\rho_t\otimes \rho_t)C_{12}]-\frac{1}{T^3}\sum_{t=1}^{T}\Tr[(\rho_t\otimes \rho_t\otimes \Delta)C_{12} C_{13}],
\end{align}
and
\begin{align}
\eqref{eqn:iv}&=-\frac{1}{T^4}\sum_{t=1}^{T}\sum_{s=1}^ T\Tr[(\rho_t\otimes \rho_s\otimes \rho_{t})C_{12} C_{13}]=-\frac{1}{T^3}\sum_{t=1}^{T}\Tr[(\rho_t\otimes \rho\otimes \rho_t)C_{12} C_{13}]\\
&=-\frac{1}{T^3}\sum_{t=1}^{T}\Tr[(\rho_t\otimes \rho_t)C_{12}]-\frac{1}{T^3}\sum_{t=1}^{T}\Tr[(\rho_t\otimes \Delta\otimes \rho_{t})C_{12} C_{13}].
\end{align}

Now, using again \Cref{eq:threepiecepos},
\begin{align}
0&\leq\frac{1}{T^4}\sum_{t=1}^{T}\Tr[(\rho_t\otimes (\rho_t+T \Delta)\otimes (\rho_t+T \Delta))C_{12} C_{13}]\\
&=\frac{1}{T^4}\sum_{t=1}^{T}\Tr[(\rho_t\otimes \rho_t\otimes \rho_t)C_{12} C_{13}]+\frac{1}{T^2}\sum_{t=1}^{T}\Tr[(\rho_t\otimes  \Delta\otimes  \Delta)C_{12} C_{13}]\\&+\frac{1}{T^3}\sum_{t=1}^{T}\Tr[(\rho_t\otimes  \Delta\otimes \rho_t)C_{12} C_{13}]+\frac{1}{T^3}\sum_{t=1}^{T}\Tr[(\rho_t\otimes \rho_t \otimes \Delta)C_{12} C_{13}],
\end{align}
we have
\begin{align}
\eqref{eqn:iii} + \eqref{eqn:iv}&=-\frac{2}{T^3}\sum_{t=1}^{T}\Tr[(\rho_t\otimes \rho_t)C_{12}]\\
&\quad {} -\frac{1}{T^3}\sum_{t=1}^{T}\Tr[(\rho_t\otimes \rho_t\otimes \Delta)C_{12} C_{13}]-\frac{1}{T^3}\sum_{t=1}^{T}\Tr[(\rho_t\otimes \Delta \otimes \rho_{t})C_{12} C_{13}]\\
 &=-\frac{2}{T^3}\sum_{t=1}^{T}\E_{\rho_t \otimes \rho_t}[C]-\frac{1}{T^4}\sum_{t=1}^{T}\Tr[(\rho_t\otimes (\rho_t+T \Delta)\otimes (\rho_t+T \Delta))C_{12} C_{13}]\\
 &\quad {} +\frac{1}{T^4}\sum_{t=1}^{T}\Tr[(\rho_t\otimes \rho_t\otimes \rho_t)C_{12} C_{13}]+\frac{1}{T}\Tr[(\rho \otimes \Delta \otimes \Delta)C_{12} C_{13}]\\
 &\leq -\frac{2}{T^3}\sum_{t=1}^{T}\E_{\rho_t \otimes \rho_t}[C]+\frac{1}{T^2}\sum_{t=1}^{T}\Tr[(\rho\otimes \rho)C^2_{12}]+\frac{1}{T}\Tr[(\rho \otimes \Delta \otimes \Delta)C_{12} C_{13}],
\end{align}

where in the last step we ignored the negative term and used again~\eqref{ineq:three_coarse}.

At this point, putting the bounds together and simplifying, we have

\begin{align}
 \sum_{t=1}^{T}\E_{\varrho}[M_t^2]&\leq \frac{1}{T}\Tr[(\rho\otimes\rho\otimes \rho) C_{12}C_{13}]+ \frac{2}{T^2}\Tr[(\rho\otimes \rho)C^2]+\frac{1}{T}\Tr[(\rho \otimes \Delta \otimes \Delta)C_{12} C_{13}]\\
 &\quad {}-\frac{2}{T^3}\sum_{t=1}^{T}\E_{\rho_t \otimes \rho_t}[C].
\end{align}
We can finally use \Cref{C2ineq,C123ineq,miscineq} to express the bound in terms of $\mu = \DBchi{\rho}{\sigma}$ as in the statement of the lemma.
\end{proof}

\begin{proof}[Proof of \Cref{lemmavar2}.]
By a calculation similar to \Cref{eqn:dip}, we have
\begin{align}
\sum_{t=1}^T\E_{\varrho\otimes \varrho}\left[(M_t \otimes \Id) F_t (M_t \otimes \Id) F_t\right]&=\frac{1}{T^4}\sum_{t=1}^T\sum_{s\neq t}\Tr[(\rho_t\otimes \rho_t\otimes \rho_s) C_{13}C_{23}]\\&\quad{}+\frac{1}{T^4}\sum_{s\neq t\neq s'\neq s}\Tr[(\rho_t\otimes \rho_s) C]\Tr[(\rho_t\otimes \rho_{s'})C]\\
&\geq \frac{1}{T^4}\sum_{s\neq t\neq s'\neq s}\Tr[(\rho_t\otimes \rho_s) C]\Tr[(\rho_t\otimes \rho_{s'})C],
\end{align}
as the first term is nonnegative, from \Cref{eq:threepiecepos}.  We now have
\begin{align}
\frac{1}{T^4}\sum_{s\neq t\neq s'\neq s}\Tr[(\rho_t\otimes \rho_s) C]\Tr[(\rho_t\otimes \rho_{s'})C]
&=\frac{1}{T^2}\sum_{t=1}^T\Tr[(\rho_t\otimes \rho) C]\Tr[(\rho_t\otimes \rho)C]\label{eq:i}\\
&-\frac{1}{T^4}\sum_{t=1}^T\sum_{s\neq t}\Tr[(\rho_t\otimes \rho_s) C]\Tr[(\rho_t\otimes \rho_{s})C]\label{eq:ii}\\
&-\frac{2}{T^4}\sum_{t=1}^T\sum_{s= 1}^T\Tr[(\rho_t\otimes \rho_t) C]\Tr[(\rho_t\otimes \rho_{s})C]\label{eq:iii}\\
&+\frac{1}{T^4}\sum_{t=1}^T\Tr[(\rho_t\otimes \rho_t) C]\Tr[(\rho_t\otimes \rho_{t})C].\label{eq:iv}
\end{align}
By writing $\rho=\sigma+\Delta$, we have
\begin{align}
\eqref{eq:i}&=\frac{1}{T^2}\sum_{t=1}^T\Tr[(\rho_t\otimes (\sigma+\Delta)) C]\Tr[(\rho_t\otimes (\sigma+\Delta))C]\\
&=\frac{1}{T}+\frac{2}{T}\Tr[(\rho\otimes \Delta) C]+\frac{1}{T^2}\sum_{t=1}^T\Tr[(\rho_t\otimes \Delta) C]\Tr[(\rho_t\otimes \Delta)C]\\
&=\frac{1}{T}+\frac{2\mu}{T}+\frac{1}{T^2}\sum_{t=1}^T\Tr[(\rho_t\otimes \Delta) C]\Tr[(\rho_t\otimes \Delta)C],
\end{align}
and also
\begin{align}
\eqref{eq:iii}&=-\frac{2}{T^3}\sum_{t=1}^T\Tr[(\rho_t\otimes \rho_t) C]\Tr[(\rho_t\otimes (\sigma+\Delta)C]\\
&=-\frac{2}{T^3}\sum_{t=1}^T\E_{\rho_t \otimes \rho_t}[C]-\frac{2}{T^3}\sum_{t=1}^T\Tr[(\rho_t\otimes \rho_t) C]\Tr[(\rho_t\otimes \Delta)C].
\end{align}
Via Cauchy--Schwarz, we have
\begin{align}
\Tr[(\rho_t\otimes \rho_s) C]\Tr[(\rho_t\otimes \rho_{s})C]\leq \Tr[(\rho_t\otimes \rho_{s})]\Tr[(\rho_t\otimes \rho_{s})C^2]=\Tr[(\rho_t\otimes \rho_{s})C^2],
\end{align}
so we can bound \eqref{eq:ii} as
\begin{align}
-\eqref{eq:ii}&=\frac{1}{T^4}\sum_{t=1}^T\sum_{s\neq t}\Tr[(\rho_t\otimes \rho_s) C]\Tr[(\rho_t\otimes \rho_{s})C]\\&\leq \frac{1}{T^4}\sum_{t=1}^T\sum_{s=1} ^T\Tr[(\rho_t\otimes \rho_s) C]\Tr[(\rho_t\otimes \rho_{s})C]\\
&\leq \frac{1}{T^4}\sum_{t=1}^T\sum_{s=1}^ T\Tr[(\rho_t\otimes \rho_{s})C^2]\\
&=\frac{1}{T^2}\Tr[(\rho\otimes \rho)C^2]\\
&\leq \frac{2d^2}{T^2} + \frac{2d\mu}{\gamma T^2},
\end{align}
where in the last inequality we used~\eqref{C2ineq}.  Now, since $\Tr[(\rho_t\otimes (\rho_t-T\Delta) C]$ is real,
\begin{align}
0&\leq\frac{1}{T^4}\sum_{t=1}^T(\Tr[(\rho_t\otimes (\rho_t-T\Delta) C])^2\\
&=\frac{1}{T^4}\sum_{t=1}^{T}\Tr[(\rho_t\otimes \rho_t)C]\Tr[(\rho_t\otimes \rho_t)C]+\frac{1}{T^2}\sum_{t=1}^{T}\Tr[(\rho_t\otimes  \Delta)C]\Tr[(\rho_t\otimes  \Delta)C]\\&\quad{}-\frac{2}{T^3}\sum_{t=1}^{T}\Tr[(\rho_t\otimes  \Delta)C_{12}]\Tr[(\rho_t\otimes  \rho_t)C_{12}].
\end{align}
We can thus show
\begin{align}
\eqref{eq:i}+\eqref{eq:iii}+\eqref{eq:iv}
&=\frac{1}{T}+\frac{2\mu}{T}+\frac{1}{T^2}\sum_{t=1}^T\Tr[(\rho_t\otimes \Delta) C]\Tr[(\rho_t\otimes \Delta)C]\\
&\quad{}-\frac{2}{T^3}\sum_{t=1}^T\E_{\rho_t \otimes \rho_t}[C] -\frac{2}{T^3}\sum_{t=1}^T\Tr[(\rho_t\otimes \rho_t) C]\Tr[(\rho_t\otimes \Delta)C]\\
&\quad{}+\frac{1}{T^4}\sum_{t=1}^T\Tr[(\rho_t\otimes \rho_t) C]\Tr[(\rho_t\otimes \rho_{t})C]\\
&=\frac{1}{T}+\frac{2\mu}{T}-\frac{2}{T^3}\sum_{t=1}^T\E_{\rho_t \otimes \rho_t}[C]+\frac{1}{T^4}\sum_{t=1}^T(\Tr[(\rho_t\otimes (\rho_t-T\Delta) C])^2\\
&\geq \frac{1}{T}-\frac{2}{T^3}\sum_{t=1}^T\E_{\rho_t \otimes \rho_t}[C] +\frac{2\mu}{T}.
\end{align}
Adding these lower bounds on $\eqref{eq:i}+\eqref{eq:iii}+\eqref{eq:iv}$ and $\eqref{eq:ii}$ completes the proof.
\end{proof}

{
\section{Identity testing for unknown states} \label{sec:us}
We restate the first part of \Cref{thm:unknownstates}:
\begin{theorem}     \label{thm:unknownstates2}
    For any parameter $\theta \geq 0$, there is an algorithm getting one copy each of $d$-dimensional states $\rho_1, \dots, \rho_T$, $\sigma_1, \dots, \sigma_T$ (i.e., getting $\varrho = \rho_1 \otimes \cdots \otimes \rho_T\otimes  \sigma_1 \otimes \cdots \otimes \sigma_T $), that distinguishes (whp) the cases $\DHSsq{\rho_{\textnormal{avg}}}{\sigma_{\textnormal{avg}}} \leq .99\theta$ and $\DHSsq{\rho_{\textnormal{avg}}}{\sigma_{\textnormal{avg}}} > \theta$, provided $T \gg \frac{1}{\theta}$.
\end{theorem}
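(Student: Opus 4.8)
The plan is to mirror the proof of \Cref{thm:mm2}, using the natural swap-based estimator of $\DHSsq{\rho_{\mathrm{avg}}}{\sigma_{\mathrm{avg}}} = \Tr[\rho_{\mathrm{avg}}^2] - 2\Tr[\rho_{\mathrm{avg}}\sigma_{\mathrm{avg}}] + \Tr[\sigma_{\mathrm{avg}}^2]$ in place of the collision estimator of $\Tr[\rho_{\mathrm{avg}}^2]$.  Label the $2T$ tensor factors of $\varrho$ as $A_1,\dots,A_T$ (carrying $\rho_1,\dots,\rho_T$) and $B_1,\dots,B_T$ (carrying $\sigma_1,\dots,\sigma_T$), and for two factors $u,v$ write $S_{uv}$ for the swap between them.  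The algorithm measures
\[
A \;\coloneqq\; \frac{1}{T^2}\Bigl(\sum_{i \ne j} S_{A_iA_j} \;-\; 2\sum_{i,j} S_{A_iB_j} \;+\; \sum_{i \ne j} S_{B_iB_j}\Bigr)
\]
(all indices ranging over $[T]$) and compares the outcome against $(1-c)\theta$.  Since $\E_\varrho[S_{A_iA_j}] = \Tr[\rho_i\rho_j]$, $\E_\varrho[S_{B_iB_j}] = \Tr[\sigma_i\sigma_j]$ and $\E_\varrho[S_{A_iB_j}] = \Tr[\rho_i\sigma_j]$, one reads off $\E_\varrho[A] = \DHSsq{\rho_{\mathrm{avg}}}{\sigma_{\mathrm{avg}}} - \tfrac{1}{T^2}\sum_{t=1}^{T}\bigl(\Tr[\rho_t^2] + \Tr[\sigma_t^2]\bigr)$, so, writing $\mu = \DHSsq{\rho_{\mathrm{avg}}}{\sigma_{\mathrm{avg}}}$, the bias has absolute value at most $2/T$, which is $\ll \theta \le \mu + \theta$ whenever $T \gg 1/\theta$.

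For the variance I would write $A = \sum_u X_u$ over the $2T$ components, where $X_u$ collects the two-local swap terms touching $u$ --- explicitly $X_{A_t} = \tfrac{1}{T^2}\bigl(\sum_{j\ne t}S_{A_tA_j} - \sum_j S_{A_tB_j}\bigr)$, and symmetrically for $X_{B_t}$ --- and apply the quantum Efron--Stein inequality via \Cref{cor:2loc}.  Using $\E_\varrho[(\mathcal D_u X_u)^2] = \E_\varrho[X_u^2] - \E_\varrho[(\mathcal E_u X_u)^2] \le \E_\varrho[X_u^2]$ (see the proof of \Cref{prop:same}), this reduces everything to bounding $\sum_u \E_\varrho[X_u^2]$ --- the $F_u$-conjugated term never needs to be evaluated --- which is precisely what makes the argument shorter than the iid one in \cite{BOW17}.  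It then remains to show $\sum_u \E_\varrho[X_u^2] \le \tfrac{2\mu}{T} + O(1/T^2)$, the trivial estimate $O(1/T)$ being too weak when $\mu$ is small.

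This last estimate is the only delicate point, and it is pure bookkeeping.  Expanding $X_{A_t}^2$, its terms split into ``degenerate'' ones coming from $S^2=\Id$, which sum to $O(1/T^2)$ over all components, and genuine three-cycle terms, whose $\varrho$-expectations are traces of products of three of the matrices in $\{\rho_t\}\cup\{\sigma_t\}$.  Summing the three-cycle terms over all $u$ and completing the index ranges (each correction is a sum of $O(T)$ bounded quantities carrying weight $\tfrac1{T^4}$, hence $O(1/T^2)$ in total), the $A_t$-contributions collapse --- by cyclicity of the trace --- to $\tfrac1T\Tr[\rho_{\mathrm{avg}}(\rho_{\mathrm{avg}}-\sigma_{\mathrm{avg}})^2]$ and, symmetrically, the $B_t$-contributions to $\tfrac1T\Tr[\sigma_{\mathrm{avg}}(\sigma_{\mathrm{avg}}-\rho_{\mathrm{avg}})^2]$, for a total of $\tfrac1T\Tr[(\rho_{\mathrm{avg}}+\sigma_{\mathrm{avg}})\Omega^2]$ with $\Omega = \rho_{\mathrm{avg}}-\sigma_{\mathrm{avg}}$.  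Since $0 \preceq \rho_{\mathrm{avg}}+\sigma_{\mathrm{avg}} \preceq 2\Id$ and $\Omega^2 \succeq 0$ with $\Tr[\Omega^2]=\mu$, this is at most $2\mu/T$, so $\Var_\varrho[A] \le 8\mu/T + O(1/T^2)$ and $\stddev_\varrho[A] \ll \sqrt{\mu\theta}+\theta \le \mu+2\theta$ when $T \gg 1/\theta$.  Feeding the bias and standard-deviation bounds into \Cref{lemmaCheb} proves the theorem; the ``in particular'' statement follows by taking $\theta = 4\eps^2/d$ and applying \Cref{fact:cs}, since $\rho_{\mathrm{avg}}=\sigma_{\mathrm{avg}}$ gives $\DHSsq{\rho_{\mathrm{avg}}}{\sigma_{\mathrm{avg}}}=0$ while $\Dtr{\rho_{\mathrm{avg}}}{\sigma_{\mathrm{avg}}}>\eps$ gives $\DHSsq{\rho_{\mathrm{avg}}}{\sigma_{\mathrm{avg}}} > 4\eps^2/d = \theta$.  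The main obstacle is exactly the three-cycle bookkeeping: arranging the cancellations so the bound is proportional to $\mu$ rather than to a constant.
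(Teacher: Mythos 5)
Your proposal is correct and follows essentially the same route as the paper's proof: the same swap-based observable $Z$, the same $O(1/T)$ bias computation, the same decomposition into $2T$ one-local pieces fed into \Cref{cor:2loc}, and the same collapse of the three-cycle terms to $\tfrac1T\Tr[(\rho_{\mathrm{avg}}+\sigma_{\mathrm{avg}})(\rho_{\mathrm{avg}}-\sigma_{\mathrm{avg}})^2]\le 2\mu/T$. The only (immaterial) difference is how the $F_u$-conjugated term is discarded --- you use the positivity of $\E_\varrho[(\mathcal{E}_u X_u)^2]$, while the paper bounds it by $\E_\varrho[X_u^2]$ via Cauchy--Schwarz, costing an extra factor of $2$ in the variance constant.
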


The second part of~\Cref{thm:unknownstates}, i.e.\ distinguishing $\rho_{\mathrm{avg}} = \sigma_{\mathrm{avg}}$ and $\Dtr{\rho_{\mathrm{avg}}}{\sigma_{\mathrm{avg}}} > \eps$ when $T \gg \frac{d}{\eps^2}$, is an immediate corollary by taking $\theta = \frac{4\eps^2}{d}$. This is because
$\rho_{\mathrm{avg}} =\sigma_{\mathrm{avg}} \implies \DHSsq{\rho_{\textnormal{avg}}}{\sigma_{\mathrm{avg}}} = 0 \leq .99\theta$
 and $\Dtr{\rho_{\mathrm{avg}}}{\sigma_{\mathrm{avg}}} > \eps \implies \DHSsq{\rho_{\textnormal{avg}}}{\sigma_{\mathrm{avg}}} > \frac{4\eps^2}{d} = \theta$ (\Cref{fact:cs}).\\

Given $\varrho$, our algorithm will measure the following observable~$Z$:
\begin{equation}
    Z \coloneqq \frac{1}{T^2} \sum_{1 \leq i \neq j \leq T} S^{A}_{ij}+  \frac{1}{T^2} \sum_{1 \leq i \neq j \leq T} S^{B}_{ij} - \frac{2}{T^2} \sum_{1 \leq i, j \leq T} S^{AB}_{ij},
\end{equation}
where 
\begin{itemize}
\item $S^A_{ij}$ denotes the swap operator on the $i$th and $j$th tensor components of $\varrho$,
\item $S^B_{ij}$  denotes the swap operator on the $(i+T)$th and $(j+T)$th tensor components of $\varrho$ \linebreak (i.e., the $i$th and $j$th component of the second half),
\item $S^{AB}_{ij}$  denotes the swap operator on the $i$th and $(j+T)$th tensor components of $\varrho$ \linebreak (i.e., the $i$th component of the first half and the $j$th component of the second half).
\end{itemize}

We will show:
\begin{lemma}\label{lemma_bound_unknown}
    Let $\mu = \DHSsqs{\rho_{\mathrm{avg}}}{\sigma_{\textnormal{avg}}}$. Then:
\begin{align}
 \bigl|\E_{\varrho}[Z]-\mu\bigr| &\leq \frac{2}{T}, \label{eqn:newmu}\\
 \Var_{\varrho}[Z]&\leq \frac{16}{T}\DHSsq{\rho_{\mathrm{avg}}}{\sigma_{\mathrm{avg}}}+O\parens*{\frac{1}{T^2}}. \label{eqn:varnew}
\end{align}
\end{lemma}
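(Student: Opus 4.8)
The plan is to prove the bias bound \Cref{eqn:newmu} and the variance bound \Cref{eqn:varnew} separately, the first being essentially a one-liner and the second carrying all the work. For the bias: since $\E_{\tau\otimes\tau'}[S] = \Tr[\tau\tau']$ for the swap $S$ — and this holds verbatim for each of $S^A_{ij}$, $S^B_{ij}$, and the cross-swap $S^{AB}_{ij}$ — expanding $\E_{\varrho}[Z]$ and completing the ``$i\neq j$'' sums to full sums gives
\[
\E_{\varrho}[Z] = \Tr[\rho_{\mathrm{avg}}^2] + \Tr[\sigma_{\mathrm{avg}}^2] - 2\Tr[\rho_{\mathrm{avg}}\sigma_{\mathrm{avg}}] - \tfrac{1}{T^2}\sum_{t=1}^T\bigl(\Tr[\rho_t^2] + \Tr[\sigma_t^2]\bigr) = \mu - \tfrac{1}{T^2}\sum_{t=1}^T\bigl(\Tr[\rho_t^2] + \Tr[\sigma_t^2]\bigr),
\]
where $\mu = \DHSsq{\rho_{\mathrm{avg}}}{\sigma_{\mathrm{avg}}}$; since these $2T$ traces lie in $[0,1]$, the correction lies in $[-\tfrac{2}{T},0]$, which is \Cref{eqn:newmu}.

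For the variance, the key move is to view $Z$ as a single \emph{signed} collision estimator on all $2T$ input systems at once. Relabel the inputs $\tau_1,\dots,\tau_{2T}$ (the first $T$ being $\rho_1,\dots,\rho_T$ and the last $T$ being $\sigma_1,\dots,\sigma_T$), attach signs $\eta_a = +1$ for $a\le T$ and $\eta_a = -1$ for $a>T$, and note $\Delta := \rho_{\mathrm{avg}} - \sigma_{\mathrm{avg}} = \tfrac{1}{T}\sum_{a=1}^{2T}\eta_a\tau_a$. A quick check — bookkeeping the three groups of swaps in $Z$, with ``same-half'' pairs carrying sign $+1$ and ``cross-half'' pairs sign $-1$ — shows $Z = \tfrac{1}{T^2}\sum_{a\neq b}\eta_a\eta_b S_{ab}$, where $S_{ab}$ swaps systems $a,b$ of $\varrho$. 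So $Z$ is exactly the maximally-mixed-case observable of \Cref{sec:mm} (without the $\tfrac{\Id}{d}$ shift) decorated with signs, and I would write $Z = \sum_{a=1}^{2T} Z_a$ with $Z_a = \tfrac{\eta_a}{T^2}\sum_{b\neq a}\eta_b S_{ab}$ and apply the quantum Efron--Stein inequality via \Cref{cor:2loc}; in fact only its crude consequence $\Var_{\varrho}[Z]\le 4\sum_a\E_{\varrho}[(\mathcal{D}_a Z_a)^2]\le 4\sum_a\E_{\varrho}[Z_a^2]$ is needed.

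It then remains to bound $\sum_a\E_{\varrho}[Z_a^2]$. Expanding the square, the diagonal $b=c$ terms give $\tfrac{1}{T^4}\sum_a(2T-1) = O(1/T^2)$ (as $S_{ab}^2 = \Id$), while for distinct $b\neq c$, $\E_{\varrho}[S_{ab}S_{ac}]$ is a single cyclic product of $\tau_a,\tau_b,\tau_c$; since $\sum_{b,c}\eta_b\eta_c\E_{\varrho}[S_{ab}S_{ac}] = \Tr[\tau_a(T\Delta)^2]$ and the $O(T)$ ``boundary'' pairs excluded from the sum (those with $b=c$, $b=a$ or $c=a$) each contribute $O(1)$, we obtain
\[
\sum_{a=1}^{2T}\E_{\varrho}[Z_a^2] \le \tfrac{1}{T^2}\sum_{a=1}^{2T}\Tr[\tau_a\Delta^2] + O(1/T^2) = \tfrac{1}{T}\Tr\bigl[(\rho_{\mathrm{avg}}+\sigma_{\mathrm{avg}})\Delta^2\bigr] + O(1/T^2) \le \tfrac{2\mu}{T} + O(1/T^2),
\]
using $\|\rho_{\mathrm{avg}}+\sigma_{\mathrm{avg}}\|_\infty\le 2$ and $\Tr[\Delta^2] = \mu$ at the last step; multiplying by $4$ gives \Cref{eqn:varnew} with room to spare. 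I expect the main obstacle to be purely organizational — sorting which of the $O(T^2)$--$O(T^4)$ many expanded terms land in the $O(1/T^2)$ remainder versus the $O(\mu/T)$ main term. The conceptual point behind why this is shorter than the corresponding \cite{BOW17} calculation (and even than \Cref{sec:mm}): in the maximally-mixed case $\sum_i\E_{\varrho}[\OO_i^2]$ carries a genuinely large $\Theta(1/(d^2T))$ piece that must be cancelled against the Efron--Stein $F_i$-term, whereas here everything is controlled by $\Delta = \rho_{\mathrm{avg}}-\sigma_{\mathrm{avg}}$, which vanishes in the null case, so no such dominant term appears and the $F_a$-term can simply be dropped.
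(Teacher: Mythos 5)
Your proof is correct and follows essentially the same route as the paper: the identical bias computation, the same per-system decomposition $Z=\sum_a Z_a$ fed into the quantum Efron--Stein inequality with the $F_a$-term discarded, and the same final bound $\Tr[(\rho_{\mathrm{avg}}+\sigma_{\mathrm{avg}})\Delta^2]\le 2\mu$. The signed relabeling $Z=\tfrac{1}{T^2}\sum_{a\neq b}\eta_a\eta_b S_{ab}$ is a tidy bookkeeping device that compresses the paper's term-by-term accounting (and even saves a factor of $2$ over its Cauchy--Schwarz step), but it is a repackaging of the same argument rather than a different one.
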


Once \Cref{lemma_bound_unknown} is proven, the hypothesis $T \gg \frac{1}{\theta}$ gives $\bigl|\E_{\varrho}[Z]-\mu\bigr| \ll \theta$ and $\stddev_{\varrho}[Z] \ll \sqrt{\mu \theta} + \theta$. Since $\sqrt{\mu \theta} \leq \mu + \theta$, we conclude \Cref{thm:unknownstates2} by using \Cref{lemmaCheb}.

\begin{proof}[Proof of \Cref{lemma_bound_unknown}]
We have
\begin{align}
    \E_{\varrho}[Z] &=\frac{1}{T^2}\sum_{1 \leq i\neq j \leq T} \Tr[\rho_i\rho_j]+\frac{1}{T^2}\sum_{1 \leq i\neq j \leq T} \Tr[\sigma_i\sigma_j]-\frac{2}{T^2}\sum_{1 \leq i,j \leq T} \Tr[\rho_i\sigma_j]\\
    &=\Tr[(\rho_{\mathrm{avg}}-\sigma_{\mathrm{avg}})^2]-\frac{1}{T^2}\sum_{i=1}^T \Tr[\rho_i^2]-\frac{1}{T^2}\sum_{i=1}^T \Tr[\sigma_i^2] \\
    &=\mu-\frac{1}{T^2}\sum_{i=1}^T \Tr[\rho_i^2]-\frac{1}{T^2}\sum_{i=1}^T \Tr[\sigma_i^2].
\end{align}
But 

\begin{equation}
    0 \leq \frac{1}{T^2}\sum_{i=1}^T \Tr[\rho_i^2] \leq \frac{1}{T^2}\sum_{i=1}^T 1 = \frac{1}{T},
\end{equation}
and same for the $\sigma$ term. \Cref{eqn:newmu} follows.

Next, observe that $Z=\sum_{i\in[2T]}Z_i$, where 
\begin{equation}
    Z_t = \begin{cases*}
      \frac{1}{T^2} \sum_{s \neq t} S^{A}_{ts}- \frac{1}{T^2} \sum_{1 \leq s \leq T} S^{AB}_{ts}  & if $1\leq t \leq T$;\\
      \frac{1}{T^2} \sum_{s \neq t} S^{B}_{ts}- \frac{1}{T^2} \sum_{1 \leq s \leq T} S^{AB}_{st}     & if $T+1 \leq t \leq 2T$.
    \end{cases*} 
\end{equation}
We now employ the quantum Efron--Stein inequality, in the form of \Cref{cor:2loc}, to get
\begin{align}
    \frac14 \Var_{\varrho}[Z] &\leq \sum_{t=1}^{2T} \E_{\varrho}[Z_t^2] - \sum_{t=1}^{2T} \E_{\varrho \otimes \varrho}[(Z_t \otimes \Id) F_t (Z_t \otimes \Id) F_t].\\
    &\leq2\sum_{t=1}^{2T} \E_{\varrho}[Z_t^2],
\end{align}
where the second inequality is a consequence of Cauchy--Schwarz. We then have
\begin{align}
    \sum_{t=1}^T \E_{\varrho}[Z_t^2] &= \frac{T-1}{T^3} + \frac{1}{T^4} \sum_{t \neq s \neq s' \neq t} \Tr[\rho_t \rho_s \rho_{s'}] 
    +\frac{T-1}{T^3} + \frac{1}{T^4}  \sum_{t=1}^{T}\sum_{s\neq s'} \Tr[\rho_t \sigma_s \sigma_{s'}] \\
    &\quad{}  -\frac{1}{T^4} \sum_{t \neq s}\sum_{1\leq s' \leq T}( \Tr[\rho_t \rho_s \sigma_{s'}]+ \Tr[\rho_t \sigma_{s'} \rho_{s}] )\\
    &\leq \frac{1}{T^2} + \frac{1}{T} \Tr[\rho_{\mathrm{avg}}^3]
     +\frac{1}{T^2} + \frac{1}{T} \Tr[\rho_{\mathrm{avg}} \sigma^2_{\mathrm{avg}}] 
     -\frac{2}{T} \Tr[\rho_{\mathrm{avg}}^2 \sigma_{\mathrm{avg}}] +O\parens*{\frac{1}{T^2}}\\
    &=\frac{1}{T}\Tr[\rho_{\mathrm{avg}}(\rho_{\mathrm{avg}}-\sigma_{\mathrm{avg}})^2]+O\parens*{\frac{1}{T^2}}.
\end{align}

By symmetry, we also have $\sum_{t=T+1}^{2T} \E_{\varrho}[Z_t^2]=\frac{1}{T}\Tr[\sigma_{\mathrm{avg}}(\rho_{\mathrm{avg}}-\sigma_{\mathrm{avg}})^2]+O\parens*{\frac{1}{T^2}}$. And since $\Tr[(\rho_{\mathrm{avg}}+\sigma_{\mathrm{avg}})(\rho_{\mathrm{avg}}-\sigma_{\mathrm{avg}})^2]\leq 2 \DHSsq{\rho_{\mathrm{avg}}}{\sigma_{\mathrm{avg}}}$, \Cref{eqn:varnew} follows.

\end{proof}
}

\section{Acknowledgments}
M.F.~was supported by the European Research Council (ERC) under Agreement 818761 and by VILLUM FONDEN via the QMATH Centre of Excellence (Grant No. 10059). GDP has been supported by the HPC Italian National Centre for HPC, Big Data and Quantum Computing -- Proposal code CN00000013 -- CUP J33C22001170001 and by the Italian Extended Partnership PE01 -- FAIR Future Artificial Intelligence Research -- Proposal code PE00000013 -- CUP J33C22002830006 under the MUR National Recovery and Resilience Plan funded by the European Union -- NextGenerationEU.
Funded by the European Union -- NextGenerationEU under the National Recovery and Resilience Plan (PNRR) -- Mission 4 Education and research -- Component 2 From research to business -- Investment 1.1 Notice Prin 2022 -- DD N. 104 del 2/2/2022, from title ``understanding the LEarning process of QUantum Neural networks (LeQun)'', proposal code 2022WHZ5XH -- CUP J53D23003890006.
GDP is a member of the ``Gruppo Nazionale per la Fisica Matematica (GNFM)'' of the ``Istituto Nazionale di Alta Matematica ``Francesco Severi'' (INdAM)''.

 \bibliographystyle{alpha}
 \bibliography{references}

 \appendix

 \section{Improved non-iid classical hypothesis testing} \label{sec:classical}
In this section, we prove \Cref{thm:classical2}, which we restate below for convenience.
Recall that for probability distributions $p,q$ on $[d]$,
\begin{equation}
    \dchisq{p}{q} = \sum_{j = 1}^{d} \frac{(p(j) - q(j))^2}{q(j)^2} = \sum_{j = 1}^d \frac{p(j)^2}{q(j)} - 1.
\end{equation}
\begin{theorem} \label{thm:classical3}
    Fix distribution $q$ on $[d]$, and write $\gamma = \min\{q(j) : j \in [d]\}$.  For any parameter $\theta \geq 0$ there is an algorithm, getting $c = 2$ samples each from distributions $p_1, \dots, p_T$ on $[d]$, that distinguishes (whp) the cases $\dchisq{p_{\textnormal{avg}}}{q} \leq .99\theta$ and $\dchisq{p_{\textnormal{avg}}}{q} > \theta$, provided $T \gg \max\{\frac{\sqrt{d}}{\theta}, \frac{1}{\sqrt{\theta \gamma}}\}$.  Here $p_{\textnormal{avg}} = \frac{1}{T} \sum_{i=1}^T p_i$.
\end{theorem}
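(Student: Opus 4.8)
The plan is to run the standard collision-based $\chi^2$-tester, using the two samples per source to obtain an \emph{unbiased} estimator of $\dchisq{p_{\textnormal{avg}}}{q}$, and then to bound its variance carefully.

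First I would fix the estimator. Write $p = p_{\textnormal{avg}}$, $\delta = p - q$, $\gamma = \min_j q(j)$, and let $a_i, b_i \in [d]$ be the two independent samples drawn from $p_i$. Since $\dchisq{p}{q} = \sum_j p(j)^2/q(j) - 1$ and $p(j)^2 = \tfrac{1}{T^2}\sum_{i,i'} p_i(j)p_{i'}(j)$, I would measure
\[
    X \;=\; \frac{1}{T^2}\sum_{i \neq i'}\frac{\mathbf 1[a_i = a_{i'}]}{q(a_i)} \;+\; \frac{1}{T^2}\sum_{i}\frac{\mathbf 1[a_i = b_i]}{q(a_i)} \;-\; 1 ,
\]
i.e.\ the $1/q(\cdot)$-weighted count of cross-source collisions among the first samples, plus the within-source collision of each source. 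Since $\E\bigl[\mathbf 1[a_i=a_{i'}]/q(a_i)\bigr] = \sum_j p_i(j)p_{i'}(j)/q(j)$ for $i\ne i'$ and $\E\bigl[\mathbf 1[a_i=b_i]/q(a_i)\bigr] = \sum_j p_i(j)^2/q(j)$ (here the second sample $b_i$ is essential), summing gives $\E[X] = \sum_j p(j)^2/q(j) - 1 = \dchisq{p}{q} =: \mu$. Thus $X$ is unbiased --- the bias term of \Cref{lemmaCheb} is $0$ --- so by \Cref{lemmaCheb} it remains to show that $T \gg \max\{\sqrt d/\theta,\, 1/\sqrt{\theta\gamma}\}$ forces $\stddev[X] \ll \mu + \theta$.

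The variance bound is the heart of the matter, and where I expect the difficulty. Writing $W_i = \mathbf 1[a_i=b_i]/q(a_i)$ and $Y_{ii'} = \mathbf 1[a_i=a_{i'}]/q(a_i)$, I would expand $\Var[X] = \tfrac{1}{T^4}\Var\bigl[\sum_i W_i + \sum_{i\ne i'}Y_{ii'}\bigr]$ into covariances; each organizes by the (bounded) set of source indices it depends on, so the result is a finite sum of sums of products of quantities $\sum_j p_{i_1}(j)\cdots p_{i_r}(j)/q(j)^a$. I would bound these using the elementary $\sum_i p_i(j)^2 \le Tp(j)$ and $\sum_i p_i(j)^2 \le (Tp(j))^2$, the identity $p(j) = q(j) + \delta(j)$ with $q(j) \ge \gamma$, and the Cauchy--Schwarz estimates $\sum_j |\delta(j)|/q(j) \le \sqrt{d/\gamma}\cdot\dchisq{p}{q}^{1/2}$ and $\sum_j |\delta(j)|^3/q(j)^2 \le \dchisq{p}{q}^{3/2}/\sqrt\gamma$ --- the first move is exactly that of \Cref{lemma_conc_chi}, which here also supplies the classical concavity deficit $\avg_i\dchisq{p_i}{q} \le d - 1 + \sqrt{d/\gamma}\,\mu^{1/2}$. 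The goal is a bound of roughly the shape $\Var[X] = O\left(\tfrac{\mu}{T} + \tfrac{\mu^{3/2}}{\sqrt\gamma\,T} + \tfrac{d}{T^2} + \tfrac{\mu}{\gamma T^2}\right)$ (plus lower-order terms), each summand of which is $\ll (\mu+\theta)^2$ once $T \gg \max\{\sqrt d/\theta,\, 1/\sqrt{\theta\gamma}\}$, since $\sqrt{\mu\theta}$ and $\mu^{3/4}\theta^{1/4}$ are both $\le \mu+\theta$ --- precisely as in the deduction following \Cref{prop:var} for \Cref{thm:main2}. Then \Cref{lemmaCheb} (say $c = .005$, $k = 10$) closes the argument.

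The delicate point --- and the main obstacle --- is the within-source contribution $\tfrac{1}{T^4}\sum_i\Var[W_i]$. Bounded crudely by $\tfrac{1}{T^4}\sum_i\E[W_i^2]$ it can be of order $1/(\gamma^2 T^3)$, which is \emph{not} $\ll(\mu+\theta)^2$ under the hypothesis in general; this worst case occurs exactly when the individual $p_i$ are ``spiky'' (e.g.\ each uniform on two atoms of tiny $q$-mass), so that every $\dchisq{p_i}{q}$ is enormous. The resolution is that such spikiness is self-limiting: if the $p_i$ are that concentrated then $p_{\textnormal{avg}}$ must be supported on few atoms, or sit where $q$ is tiny, and in either case $\mu = \dchisq{p_{\textnormal{avg}}}{q}$ is itself large, so $(\mu+\theta)^2$ dominates. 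Making this quantitative --- bounding $\sum_i\Var[W_i]$ by an expression in $\mu, d, \gamma, T$ by keeping the subtracted $(\E[W_i])^2$ in $\Var[W_i] = \E[W_i^2] - (\E[W_i])^2$ and feeding in the concavity-deficit inequalities above --- is the technical crux, and mirrors the (similarly intricate) iid analysis of Daskalakis--Kamath--Wright~\cite{DKW2018} and our quantum \Cref{lemmavar1,lemmavar2}. This is also the only place the $1/\sqrt{\theta\gamma}$ term of the hypothesis is used.
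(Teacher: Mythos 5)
Your overall route is the same as the paper's: an unbiased $1/q$-weighted collision statistic built from the two samples per source (your estimator is a minor variant of the paper's $\oM = \avg_{s,t}\{\oC_{st}\}-1$, differing only in that your cross-source terms pair the two \emph{first} samples rather than sample~1 of $s$ with sample~2 of $t$, which costs you a couple of extra covariance classes but is otherwise fine), followed by a variance bound and the Chebyshev argument of \Cref{lemmaCheb}. However, the heart of the theorem --- the analogue of \Cref{prop:classicalvar} --- is not actually proven in your write-up: you list plausible ingredients and state the target shape of the bound, but none of the covariance classes is actually estimated, so as it stands there is a genuine gap exactly where all the work lies.

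More importantly, the one place where you engage with the analysis in detail, you misdiagnose the difficulty. The within-source contribution is \emph{not} a crux and needs no ``self-limiting spikiness'' argument, nor the retained $-(\E[W_i])^2$: since $\sum_i p_i(j)^2 \le \bigl(\sum_i p_i(j)\bigr)^2 = T^2 p(j)^2$ (an inequality you yourself list), one gets
\begin{equation}
    \frac{1}{T^4}\sum_{i=1}^T \E[W_i^2] \;=\; \frac{1}{T^4}\sum_{i=1}^T\sum_{j=1}^d \frac{p_i(j)^2}{q(j)^2} \;\le\; \frac{1}{T^2}\sum_{j=1}^d \frac{p(j)^2}{q(j)^2} \;\le\; \frac{2\mu}{\gamma T^2} + \frac{2d}{T^2},
\end{equation}
where the last step writes $p = q + \delta$ and uses $q \ge \gamma$ (this is exactly the paper's bound $\|\vphi/\sqrt{q}\|_2^2 \le 2\mu/\gamma + 2d$). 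Under $T \gg \max\{\sqrt d/\theta,\,1/\sqrt{\theta\gamma}\}$ this is $O(\mu\theta + \theta^2) = O((\mu+\theta)^2)$, so your claimed worst case of order $1/(\gamma^2 T^3)$ escaping the hypothesis never materializes; the crude pointwise bound $1/\gamma^2$ is simply the wrong move. The genuinely nontrivial part, which your proposal only gestures at, is the cross-covariance classes sharing one source index: after a Cauchy--Schwarz step (in the paper, $\la\vphi_s,\vphi_t\ra^2 \le \la\vphi_s/\sqrt q,\vphi_t/\sqrt q\ra$) these reduce to a term of the form $\frac1T\bigl(\|\vphi\|_3^3 - \|\vphi\|_2^4\bigr)$, which must then be controlled by the third-moment estimate $\E[\wt{\vphi}^3] \le \tfrac{2}{\sqrt\gamma}\mu^{3/2}$ via $\E[q\,\wt{\vphi}^4]\le\mu^2$; carrying this out (for your estimator, including the extra $\Cov[W_i,Y_{ii'}]$ and $\Cov[Y_{ii'},Y_{ii''}]$ classes) is what is missing from the proposal.
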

We have the following immediate corollaries, just as in \Cref{sec:proofstmt}:
\begin{corollary} \label{cor:cmain0}
    A slight variation on \Cref{thm:classical3} distinguishes  $\dchisq{p_{\textnormal{avg}}}{q} \leq .99\theta$ and  $\dhellsq{p_{\textnormal{avg}}}{q} > 1.01\theta$ (Hellinger-squared distance), provided $T \gg \frac{\sqrt{d}}{\theta}$.
\end{corollary}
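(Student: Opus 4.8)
The plan is to transcribe the proof of \Cref{cor:main0} into the classical setting, with ``mixing a distribution with the uniform distribution'' playing the role of the depolarizing channel. Fix a small universal constant $c>0$ (pinned down below), set $\lambda = c\theta$, and for any distribution $r$ on $[d]$ write $r' = (1-\lambda)r + \lambda u$, where $u$ is uniform on $[d]$. Since $r\mapsto r'$ applies a fixed stochastic channel (the kernel $K_{ji} = (1-\lambda)\delta_{ij} + \lambda/d$, independent of $r$), a tester with $c=2$ samples from each $p_i$ can simulate $c=2$ samples from each $p_i'$ by independently post-processing each sample (keep it with probability $1-\lambda$, else replace it by a uniformly random element of $[d]$); running the algorithm of \Cref{thm:classical3} on the simulated samples against hypothesis $q'$ is thus a legitimate ``slight variation'' of that algorithm. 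Since averaging commutes with the channel, $p_{\mathrm{avg}}' = (1-\lambda)p_{\mathrm{avg}} + \lambda u$, so everything reduces to tracking how the two promised cases transform.

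Two facts make this go through. First, $\min_j q'(j) \geq \lambda/d = c\theta/d$, so with $\gamma' = \min_j q'(j)$ the requirement $T \gg \max\{\tfrac{\sqrt d}{\theta}, \tfrac{1}{\sqrt{\theta\gamma'}}\}$ of \Cref{thm:classical3} becomes $T \gg \tfrac{\sqrt d}{\theta\sqrt c}$, i.e.\ merely $T \gg \tfrac{\sqrt d}{\theta}$ since $c$ is a constant. Second, in the ``close'' case the data-processing inequality for $\chi^2$-divergence gives $\dchisq{p_{\mathrm{avg}}}{q} \leq .99\theta \implies \dchisq{p_{\mathrm{avg}}'}{q'} \leq .99\theta$.

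For the ``far'' case we use that $H(p,q) := \sqrt{\dhellsq{p}{q}}$ is a metric. One checks the elementary bound $\dhellsq{r}{r'} \leq \lambda$: since $r'(j) \geq (1-\lambda)r(j)$ pointwise, $\sqrt{r(j)r'(j)} \geq \sqrt{1-\lambda}\,r(j)$, whence $\dhellsq{r}{r'} = 1 - \sum_j \sqrt{r(j)r'(j)} \leq 1-\sqrt{1-\lambda} \leq \lambda = c\theta$. Applying the triangle inequality to $p_{\mathrm{avg}}, p_{\mathrm{avg}}', q', q$ gives $H(p_{\mathrm{avg}}', q') \geq H(p_{\mathrm{avg}}, q) - H(p_{\mathrm{avg}}, p_{\mathrm{avg}}') - H(q,q') > \sqrt{1.01\,\theta} - 2\sqrt{c\theta}$, which exceeds $\sqrt\theta$ once $c$ is small enough that $2\sqrt c \leq \sqrt{1.01}-1$; hence $\dhellsq{p_{\mathrm{avg}}'}{q'} > \theta$, and the standard inequality $\dhellsq{\cdot}{\cdot} \leq \dchisq{\cdot}{\cdot}$ then yields $\dchisq{p_{\mathrm{avg}}'}{q'} > \theta$. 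So the algorithm of \Cref{thm:classical3} run with parameter $\theta$ on the simulated samples against $q'$ distinguishes the two transformed cases whp, proving \Cref{cor:cmain0}. There is no genuine obstacle here beyond bookkeeping: the only mildly delicate point is checking that a single constant $c$ can simultaneously absorb the Hellinger triangle-inequality slack and keep the copy bound at $O(\sqrt d/\theta)$ — which it can, since every quantity involved is an absolute constant. (The further corollary distinguishing $p_{\mathrm{avg}} = q$ from $\dtv{p_{\mathrm{avg}}}{q} > \eps$ with $T \gg \tfrac{\sqrt d}{\eps^2}$ then follows by taking $\theta = \Theta(\eps^2)$ and using $\dhellsq{\cdot}{\cdot} \geq \Omega(\dtv{\cdot}{\cdot}^2)$, exactly as \Cref{cor:main2} follows from \Cref{cor:main0}.)
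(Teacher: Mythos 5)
Your proof is correct and follows essentially the same route the paper intends: the paper derives \Cref{cor:cmain0} by transcribing the proof of \Cref{cor:main0} to the classical setting, i.e., mixing every distribution with the uniform distribution at rate $\lambda = c\theta$ (simulable by post-processing each sample), invoking $\chi^2$ data processing for the close case, and using that the square-rooted Hellinger divergence is a metric with $\dhellsq{r}{r'} \leq O(\lambda)$ for the far case. Your bookkeeping of the constant $c$ and the floor $\gamma' \geq \lambda/d$ matches the paper's argument.
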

\begin{corollary}     \label{cor:cmain2}
    Fix distribution $q$ on $[d]$.  For any parameter $\eps > 0$, there is an algorithm, getting $c = 2$ samples each from distributions $p_1, \dots, p_T$, that distinguishes (whp) the cases $p_{\textnormal{avg}} = q$ and $\dtv{p_{\textnormal{avg}}}{q} > \eps$, provided $T \gg \frac{\sqrt{d}}{\eps^2}$.
\end{corollary}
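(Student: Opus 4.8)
The plan is to get the final statement, \Cref{cor:cmain2}, from \Cref{cor:cmain0}, and \Cref{cor:cmain0} from \Cref{thm:classical3}, exactly as the quantum \Cref{cor:main2} and \Cref{cor:main0} were obtained from \Cref{thm:main2} in \Cref{sec:proofstmt}. For the first step I would take $\theta \asymp \eps^2$ and use that $p_{\textnormal{avg}} = q$ forces $\dchisq{p_{\textnormal{avg}}}{q} = 0 \le .99\theta$ while $\dtv{p_{\textnormal{avg}}}{q} > \eps$ forces $\dhellsq{p_{\textnormal{avg}}}{q} \gtrsim \eps^2 \ge 1.01\theta$. For the second step I would ``flatten'' each $p_i$ and $q$ by mixing in a $\lambda = c\theta$ fraction of the uniform distribution: the mixed hypothesis then has minimum mass $\ge \lambda/d$, so the $T \gg 1/\sqrt{\theta\gamma}$ term of \Cref{thm:classical3} degrades only to $T \gg \sqrt{d}/\theta$; the ``$\le .99\theta$'' side is preserved because $\dchisq{\cdot}{\cdot}$ satisfies data processing; and the ``$> 1.01\theta$'' side survives because $\sqrt{\dhellsq{\cdot}{\cdot}}$ is a metric with $\dhellsq{\tau}{(1-\lambda)\tau + \lambda u} = O(\lambda)$, together with $\dchisq{\cdot}{\cdot} \ge \dhellsq{\cdot}{\cdot}$. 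All the substance therefore lies in \Cref{thm:classical3}, which I would prove as follows.

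\emph{Step 1: an exactly unbiased estimator.} Writing the two samples from source $i$ as $\bx_i^1, \bx_i^2$ and setting $\langle a, b\rangle_q := \sum_{j=1}^d a(j)b(j)/q(j)$, note that $1 + \dchisq{p_{\textnormal{avg}}}{q} = \langle p_{\textnormal{avg}}, p_{\textnormal{avg}}\rangle_q = \frac{1}{T^2}\sum_{i,i'} \langle p_i, p_{i'}\rangle_q$. I would estimate each $\langle p_i, p_{i'}\rangle_q$ by collision-counting: by $[\bx_i^1 = \bx_i^2]/q(\bx_i^1)$ when $i = i'$, and by $\frac14 \sum_{a,b \in \{1,2\}} [\bx_i^a = \bx_{i'}^b]/q(\bx_i^a)$ when $i \ne i'$. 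Summing, dividing by $T^2$, and subtracting $1$ gives a statistic $\bM$ with $\E[\bM] = \dchisq{p_{\textnormal{avg}}}{q} =: \mu$ \emph{exactly} --- each piece being unbiased for the matching inner product. This zero bias is exactly what $c = 2$ buys (contrast the quantum $c = 1$ case, where a bias term has to be fought); by \Cref{lemmaCheb} with $\textnormal{bias} = 0$ it now suffices to show $\stddev[\bM] \ll \mu + \theta$ whenever $T \gg \sqrt{d}/\theta$ and $T \gg 1/\sqrt{\theta\gamma}$.

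\emph{Step 2: the variance, via Efron--Stein.} Viewing $\bM$ as a function of the $T$ independent blocks $(\bx_i^1, \bx_i^2)$, I would bound $\Var[\bM]$ by the right-hand side of the classical Efron--Stein inequality \eqref{eqn:classical-es}, i.e.\ by $\sum_i \E[\Var_{p_i} \bM]$, which involves only the terms of $\bM$ that touch block $i$. Since the within-source terms of $\bM$ are $1$-local in the blocks while the cross-source terms are symmetric and $2$-local, the cross-source part is controlled by the classical specialization of \Cref{cor:2loc}; crucially, its subtracted ``$F_i$'' term produces the same kind of cancellation of the largest ($\mu$-free) pieces as the cancelling $\frac{1}{d^2 T}$ terms in the proof of \Cref{lemma_bound_uniform}. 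After expanding and simplifying --- re-expressing every sum in $\mu$ via $\langle p_{\textnormal{avg}}, q\rangle_q = 1$, $\langle p_{\textnormal{avg}}, p_{\textnormal{avg}}\rangle_q = 1 + \mu$, $p_i \le T\, p_{\textnormal{avg}}$, and $\sum_i \langle p_i, p_{\textnormal{avg}}\rangle_q = T(1+\mu)$ --- one obtains a variance bound that, under the stated hypotheses on $T$, is $\ll (\mu + \theta)^2$. Here the $T \gg \sqrt{d}/\theta$ requirement comes out of a Cauchy--Schwarz step over the $d$ coordinates $j$ (turning sums like $\sum_j a(j)/q(j)$ into $\sqrt{d}\,(\sum_j a(j)^2/q(j))^{1/2}$), and the $T \gg 1/\sqrt{\theta\gamma}$ requirement comes out of the $1/q(j)$ weights in the collision-counting, whose second moments bring in $q(j)^{-2} \le \gamma^{-1} q(j)^{-1}$ --- hence the dependence on $\gamma = \min_j q(j)$. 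Given the variance bound, \Cref{lemmaCheb} finishes \Cref{thm:classical3}.

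\emph{Main obstacle.} Essentially all the difficulty is in Step 2. The $2$-local, U-statistic-type variance expansion has many cross-terms, which must be sorted by how many of their two source-indices coincide; and because \Cref{thm:classical3} is meant to be sharp simultaneously in $d$, in $\gamma$, and in $\theta$ --- this sharpness is exactly what makes it stronger than the iid Hellinger-vs-$\chi^2$ bound of \cite{DKW2018} --- one cannot afford to be lossy in the Cauchy--Schwarz steps, in the passage from per-source quantities $\langle p_i, p_i\rangle_q$ to the global $\mu$, or in the cancellations supplied by the $F_i$-term. The Efron--Stein reduction is what keeps this tractable, since it discards at the outset every cross-term between pairs of sources that share no index. (As a side remark, \Cref{thm:classical3} cannot be shortcut by specializing the $c = 1$ quantum result \Cref{thm:main2} to diagonal states: that route yields only the weaker $T \gg d/\theta$, consistent with the classical $c = 1$ lower bound \Cref{thm:classical-lower}.)
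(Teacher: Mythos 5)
Your reduction from \Cref{thm:classical3} to \Cref{cor:cmain0} and then to \Cref{cor:cmain2} (flatten with $\lambda=c\theta$, data processing for the small-$\chi^2$ side, the Hellinger metric property for the far side, then $\theta\asymp\eps^2$) is exactly the paper's route, which transplants the argument of \Cref{sec:proofstmt} to the classical setting. Where you genuinely diverge is in the proof of \Cref{thm:classical3} itself: the paper does \emph{not} use Efron--Stein there. It expands $T^4\cdot\Var[\oM]=\sum_{s,t,s',t'}\Cov[\oC_{st},\oC_{s't'}]$ directly, drops the disjoint-index terms by independence, and computes each surviving covariance in closed form; the cancellation of the $\mu$-free $\Theta(1/T)$ contribution (the term responsible for Garg et al.'s extra $1/\eps^4$) comes from the subtracted products $\la \vphi_s,\vphi_t\ra\la\vphi_{s'},\vphi_{t'}\ra$ together with the Jensen step $\avg_s\{\la\vphi_s,\vphi\ra^2\}\geq\la\vphi,\vphi\ra^2$, after which $\|\vphi\|_3^3-\|\vphi\|_2^4=\E[\wt{\vphi}^3]+\mu-\mu^2$ is tamed by Cauchy--Schwarz. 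Your plan to instead invoke the classical Efron--Stein inequality (the classical specialization of \Cref{cor:2loc}) is viable and would buy uniformity with the quantum sections at the cost of a factor of~$4$; the subtracted $\E[(\E_{p_i}X_i)^2]$ term does supply the same cancellation you identify. Your estimator (symmetrizing over all four sample pairings when $s\neq t$) also differs harmlessly from the paper's $\bJ_s^{(1)}$-versus-$\bJ_t^{(2)}$ convention; both are exactly unbiased, as you note.

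The one substantive caveat is that everything that makes \Cref{thm:classical3} nontrivial is compressed into ``after expanding and simplifying \dots one obtains a variance bound.'' The target is \Cref{prop:classicalvar}, namely $\Var[\oM]\leq \frac{4\mu}{T^2\gamma}+\frac{4d}{T^2}+\frac{2\mu^{3/2}}{T\sqrt{\gamma}}+\frac{\mu}{T}$, and obtaining the $\mu^{3/2}/(T\sqrt{\gamma})$ and $\mu/T$ terms (rather than a bare $1/T$) requires the specific estimates $\E[\wt{\vphi}^3]\leq\|\wt{\vphi}/\sqrt{q}\|_2\cdot\|\sqrt{q}\,\wt{\vphi}^2\|_2$ and $\E[q\wt{\vphi}^4]\leq\mu^2$. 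Your attribution of the $\sqrt{d}$ requirement to a Cauchy--Schwarz over coordinates is also slightly off: since the estimator is unbiased there is no bias bound in play, and $d$ enters instead through the second-moment term $\|1/\sqrt{q}\|_2^2=d$, giving $d/T^2\ll\theta^2$. These are executable details rather than conceptual gaps, but as written the proposal asserts the variance bound rather than proving it.
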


We commence with the proof of \Cref{thm:classical3}, henceforth abbreviating $p_{\textnormal{avg}}$ to just~$p$.
We will define
\begin{equation}
    \vphi_t(j) = \frac{p_t(j)}{q(j)}, \quad \vphi(j) = \avg_{t \in [T]} \{\vphi_t(j)\} = \frac{p(j)}{q(j)}, \quad \wt{\vphi}_t = \vphi_t - 1, \quad \wt{\vphi} = \vphi - 1.
\end{equation}
For functions $f, g : [d] \to \R$ we will use the notation
\begin{equation}
    \la f, g \ra_q = \E_{\bj \sim q}[f(\bj) g(\bj)], \qquad \|f\|_2 = \sqrt{\la f, f\ra};
\end{equation}
and, in this section we will abbreviate $\la {\cdot}, {\cdot} \ra_q$ to $\la {\cdot}, {\cdot} \ra$ and $\E_{\bj \sim q}[h(\bj)]$ to $\E[h]$.

Our model is that we can get $c = 2$ independent samples $\bJ_t^{(1)}$, $\bJ_t^{(2)}$ from each~$p_t$, and we wish to test whether $p$ is close to~$q$ or far from~$q$.
Our algorithm will compute the following statistic from the samples:
\begin{equation}
    \oM \coloneqq \avg_{s,t \in [T]}\braces*{\oC_{st}} - 1, 
    \qquad \text{ where} \quad 
    \oC_{st} \coloneqq \sum_{j = 1}^d  \frac{1[\bJ_s^{(1)} = j = \bJ_t^{(2)}]}{q(j)}.
\end{equation}
We have
\begin{equation}
    \E[\oC_{st}] = \sum_{j = 1}^d  \frac{p_s(j) p_t(j)}{q(j)} 
    = \la \vphi_s, \vphi_t \ra 
\end{equation}
and hence
\begin{equation}    
    \mu \coloneqq \E[\oM] = \la \vphi, \vphi \ra - 1 
    = \|\wt{\vphi}\|_2^2
    = \dchisq{p}{q}
\end{equation}
(where we used $\la 1, \wt{\vphi} \ra = \la \wt{\vphi}, 1\ra = \E_{\bj \sim q} [\wt{\vphi}(\bj)] = 1 - 1 = 0$).\\

Our goal will now be to prove the following variance bound:
\begin{proposition} \label{prop:classicalvar}
    $\displaystyle
    \Var[\oM] \leq \frac{4\mu}{T^2 \gamma}  + \frac{4d}{T^2} + \frac{2\mu^{3/2}}{T\sqrt{\gamma}} + \frac{\mu}{T}.$
\end{proposition}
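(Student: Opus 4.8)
The plan is to expand $\oM$ as a sum of ``local'' contributions and apply the classical Efron--Stein inequality (\Cref{eqn:classical-es}), treating the $2T$ samples $\bJ_1^{(1)},\bJ_1^{(2)},\dots,\bJ_T^{(1)},\bJ_T^{(2)}$ as the independent coordinates. Write $\oM = \mathrm{const} + \frac{1}{T^2}\sum_{s,t} \oC_{st}$, and group the terms so that $\oM$ depends on the sample $\bJ_t^{(1)}$ only through the observable $\oM_t^{(1)} \coloneqq \frac{1}{T^2}\sum_{s} \oC_{ts}$ and on $\bJ_t^{(2)}$ only through $\oM_t^{(2)} \coloneqq \frac{1}{T^2}\sum_{s}\oC_{st}$. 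Since each $\oC_{st}$ is two-local (depending only on $\bJ_s^{(1)}$ and $\bJ_t^{(2)}$), rerandomizing a single coordinate $\bJ_t^{(1)}$ changes $\oM$ by exactly the change in $\oM_t^{(1)}$, and similarly for $\bJ_t^{(2)}$. Efron--Stein then gives
\begin{equation}
    \Var[\oM] \leq \sum_{t=1}^T \tfrac12 \E\bigl[(\oM_t^{(1)} - \widehat{\oM}_t^{(1)})^2\bigr] + \sum_{t=1}^T \tfrac12 \E\bigl[(\oM_t^{(2)} - \widehat{\oM}_t^{(2)})^2\bigr] \leq 2\sum_{t=1}^T \E\bigl[(\oM_t^{(1)})^2\bigr] + 2\sum_{t=1}^T \E\bigl[(\oM_t^{(2)})^2\bigr],
\end{equation}
using $\tfrac12\E[(Y-Y')^2] = \Var[Y] \leq \E[Y^2]$ for the rerandomized copies; by symmetry between the two samples, it suffices to bound $\sum_t \E[(\oM_t^{(1)})^2]$.

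The second step is the calculation of $\E[(\oM_t^{(1)})^2]$. Expanding, $\E[(\oM_t^{(1)})^2] = \frac{1}{T^4}\sum_{s,s'} \E[\oC_{ts}\oC_{ts'}]$. For $s \neq s'$ the events factor over the three distinct samples $\bJ_t^{(1)}, \bJ_s^{(2)}, \bJ_{s'}^{(2)}$, giving $\E[\oC_{ts}\oC_{ts'}] = \sum_j \frac{p_t(j) p_s(j) p_{s'}(j)}{q(j)^2}$; for $s = s'$ one gets the ``collision'' term $\E[\oC_{ts}^2] = \sum_j \frac{p_t(j) p_s(j)}{q(j)^3}$ (since $\bJ_s^{(2)}$ appears squared in the indicator, collapsing one factor $p_s(j)$ but introducing an extra $1/q(j)$). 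Summing over $s,s'$ and then over $t$, and passing to averages $\vphi = \avg_t \vphi_t$, the off-diagonal part contributes $\frac1T \sum_j p_t(j)\vphi(j)^2 / q(j)$ summed appropriately, which after a further average over $t$ becomes $\frac1T \langle \vphi, \vphi^2\rangle$-type expressions; the diagonal part contributes $O(\frac{1}{T^2})$ times $\sum_j \vphi(j)/q(j)$-type expressions. The bookkeeping here parallels the quantum proof of \Cref{lemmavar1}: one adds and subtracts the diagonal terms to complete the sums $\sum_{s}, \sum_{s'}$ to full ranges, replacing individual $p_t, p_s$ by $p = p_{\mathrm{avg}}$ at the cost of lower-order error terms, each bounded in $[0,1]$ and hence costing $O(1/T^2)$.

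The final step is to convert the resulting expressions into the four terms in the claimed bound using only $\wt{\vphi} = \vphi - 1$, $\langle 1, \wt{\vphi}\rangle = 0$, $\|\wt\vphi\|_2^2 = \mu$, and $q(j) \geq \gamma$. The key elementary inequalities are: $\langle \wt\vphi, \wt\vphi^{\,2}\rangle \le \|\wt\vphi\|_\infty \|\wt\vphi\|_2^2$ combined with a Cauchy--Schwarz bound $\|\wt\vphi\|_\infty \le \sqrt{\mu/\gamma}$ (since $|\wt\vphi(j)|^2 q(j) \le \sum_j |\wt\vphi(j)|^2 q(j) = \mu$, giving $|\wt\vphi(j)| \le \sqrt{\mu/\gamma}$), which yields the $\mu^{3/2}/(T\sqrt\gamma)$ term; $\E[\wt\vphi^{\,2}] = \mu$ directly gives a $\mu/T$ term from $\langle \vphi, \vphi^2\rangle = 1 + 3\mu + \langle\wt\vphi,\wt\vphi^2\rangle$ after expanding $\vphi = 1 + \wt\vphi$ (the constant $1$ here is cancelled by the $-1$ in $\oM$ and the completion-of-sums bookkeeping, exactly as the analogous cancellation happens in \Cref{lemmavar1}); and for the $O(1/T^2)$ diagonal piece, $\sum_j \vphi(j)/q(j) = \sum_j p(j)/q(j)^2$, which one splits via $p = q + \delta$ into $\sum_j 1/q(j) \le d/\gamma$ — wait, this needs care: the cleaner route, matching the statement, is to note the diagonal term is $\frac{1}{T^2}\sum_j p(j)/q(j)^2$, and bound $\sum_j p(j)/q(j)^2 = d + \sum_j \delta(j)/q(j)^2 \le d + \frac1\gamma \sum_j |\delta(j)|/q(j) \le d + \frac1\gamma \sqrt{d}\sqrt{\sum_j \delta(j)^2/q(j)} = d + \frac{\sqrt d}{\gamma}\sqrt{\mu}\,\langle\cdot\rangle$, and then $\frac{\sqrt d}{\gamma}\sqrt\mu \le \frac12(\frac{d}{\gamma} + \frac{\mu}{\gamma}) \cdot$const, absorbing into the $d/T^2$ and $\mu/(T^2\gamma)$ terms. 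I expect this last reconciliation — tracking exactly which cross-terms survive the completion-of-sums step and verifying the constants — to be the main obstacle, though it is entirely mechanical; the structural content is just Efron--Stein plus the $\ell_\infty$-via-$\gamma$ trick.
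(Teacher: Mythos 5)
Your overall plan — classical Efron--Stein over the $2T$ independent samples, with the two-local structure of the $\oC_{st}$'s — is a reasonable starting point, but the step where you bound $\tfrac12\E[(\oM_t^{(1)}-\widehat{\oM}_t^{(1)})^2]$ by $\E[(\oM_t^{(1)})^2]$ (dropping the centering) is fatal. The quantity Efron--Stein actually gives you is a \emph{conditional} variance, i.e.\ the analogue of \emph{both} terms in \Cref{prop:same}/\Cref{cor:2loc}; in the quantum proofs this subtracted term is estimated separately (\Cref{lemmavar2}) precisely so that the leading $\tfrac1T$-scale pieces cancel against those in \Cref{lemmavar1}. Once you throw it away, no later bookkeeping can recover the cancellation: $\E[\oM_t^{(1)}] = \tfrac1T\la \vphi_t,\vphi\ra$, so even in the null case $p_1=\cdots=p_T=q$ (where $\mu=0$) you have $\E[(\oM_t^{(1)})^2]\ge 1/T^2$, and your bound $2\sum_t\E[(\oM_t^{(1)})^2]+2\sum_t\E[(\oM_t^{(2)})^2]$ is at least of order $1/T$. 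The proposition claims $4d/T^2$ when $\mu=0$, which is far smaller for $T\gg d$, and a residual $\Theta(1/T)$ term would also break \Cref{thm:classical3} for small $\theta$ (one needs $\Var[\oM]\ll(\mu+\theta)^2$, and $1/T\ll\theta/\sqrt d$ does not imply $\ll\theta^2$). Your proposed rescue — that ``the constant $1$ is cancelled by the $-1$ in $\oM$'' — is not available: an additive constant does not affect the variance, and inside a pure second-moment bound of $\oM_t^{(1)}$ nothing subtracts off the products of means $\la\vphi_s,\vphi_t\ra\la\vphi_s,\vphi_{t'}\ra$ that produce the cancellation.

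The paper's proof avoids this by never invoking Efron--Stein here: it expands $T^4\Var[\oM]=\sum\Cov[\oC_{st},\oC_{s't'}]$, notes that disjoint index pairs contribute zero by independence, and computes the three overlapping cases exactly, keeping the subtracted squared-mean terms; these give $\Var[\oM]\le \tfrac{2}{T^2}\|\vphi/\sqrt q\|_2^2+\tfrac1T(\|\vphi\|_3^3-\|\vphi\|_2^4)$, and only then does one pass to $\wt\vphi$ (your $\ell_\infty$-via-$\gamma$ trick for the $\mu^{3/2}/\sqrt\gamma$ term is fine and close in spirit to the paper's Cauchy--Schwarz step, and your bound on the diagonal piece is also essentially right). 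If you want to keep the Efron--Stein framing you must retain the conditional centering, i.e.\ prove a classical analogue of \Cref{lemmavar2}, which is a substantially different (and longer) argument than what you wrote. One further arithmetic slip: $\E[\oC_{ts}^2]=\sum_j p_t(j)p_s(j)/q(j)^2$, not $\sum_j p_t(j)p_s(j)/q(j)^3$; the indicator squared collapses to a single $1/q(j)^2$ factor.
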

Once we have this, \Cref{thm:classical3} follows immediately from the Chebyshev argument \Cref{lemmaCheb}, because $T \gg \frac{\sqrt{d}}{\theta}, \frac{1}{\sqrt{\theta \gamma}}$ implies
\begin{equation}
    \Var[\oM] \ll  \mu \theta +  \theta^2 +  \mu^{3/2} \theta^{1/2} +  \frac{1}{d} \mu \theta \ll (\mu + \theta)^2,
\end{equation}
and the right-hand side is $O((\mu + \theta)^2)$, so we have $\stddev[\bM] \ll \mu + \theta$, as needed.
    
\begin{proof}[Proof of \Cref{prop:classicalvar}.]
We have
\begin{align} 
    T^4 \cdot \Var[\oM] = \sum_{s,t,s',t'} \Cov[\oC_{st}, \oC_{s't'}]
    &{}= \sum_{s} \sum_{t} \Cov[\oC_{st}, \oC_{st}] \label{eqn:st}\\ 
    &{}+ \sum_{s} \sum_{t \neq t'} \Cov[\oC_{st}, \oC_{st'}] \label{eqn:st1}\\
    &{}+ \sum_{s\neq s'} \sum_{t} \Cov[\oC_{st}, \oC_{s't}] \label{eqn:s1t},
\end{align}
where there is no contribution from the $s \neq s', t \neq t'$ case, as then $\oC_{st}, \oC_{s't'}$ are independent and hence have covariance~$0$.  
Note that when $\{s,t\} \cap \{s',t'\} \neq \emptyset$ we have
\begin{align}
    \Cov[\oC_{st}, \oC_{s't'}] &= \sum_{j,j'} \frac{\Cov\bracks*{1[\bJ_s^{(1)} = \bJ_t^{(2)} = j],1[\bJ_{s'}^{(1)} = \bJ_{t'}^{(2)} = j']}}{q(j)q(j')} \\
    &= \sum_{j,j'}\frac{\Pr[\bJ_s^{(1)} = \bJ_t^{(2)} = j, \bJ_{s'}^{(1)} = \bJ_{t'}^{(2)} = j'] - p_s(j) p_t(j) p_{s'}(j') p_{t'}(j')}{q(j)q(j')} \\
    &= \sum_{j} \frac{\Pr[\bJ_s^{(1)} =  \bJ_{s'}^{(1)} = \bJ_t^{(2)} = \bJ_{t'}^{(2)} = j]}{q(j)^2}  
    - \la \vphi_s, \vphi_t \ra \la \vphi_{s'}, \vphi_{t'} \ra,
    \label{eqn:avg}
\end{align}
where the sum on the left only has the $j = j'$ terms precisely because $\{s,t\} \cap \{s',t'\} \neq \emptyset$.
We now evaluate \Cref{eqn:avg} in three cases:
\begin{align}
    s = s',\ t = t' &\implies
    \Pr[\bJ_s^{(1)} =  \bJ_{s'}^{(1)} = \bJ_t^{(2)} = \bJ_{t'}^{(2)} = j] = p_s(j) p_t(j) \\
    &\implies \Cov[\oC_{st}, \oC_{st}] = \angles*{\frac{\vphi_s}{\sqrt{q}}, \frac{\vphi_t}{\sqrt{q}}} - \la \vphi_s, \vphi_t \ra^2; \\
    s = s',\ t \neq t' &\implies
    \Pr[\bJ_s^{(1)} =  \bJ_{s'}^{(1)} = \bJ_t^{(2)} = \bJ_{t'}^{(2)} = j] = p_s(j) p_t(j)p_{t'}(j) \\
    &\implies \Cov[\oC_{st}, \oC_{st'}] = \la \vphi_s, \vphi_t \vphi_{t'}\ra  - \la \vphi_s, \vphi_t \ra \la \vphi_{s}, \vphi_{t'} \ra;\\
    \text{and similarly }
    s \neq s',\ t = t' &\implies \Cov[\oC_{st}, \oC_{s't}] = \la \vphi_s \vphi_{s'}, \vphi_t\ra  - \la \vphi_s, \vphi_t \ra \la \vphi_{s'}, \vphi_{t} \ra.
\end{align}
Now putting these results into \Cref{eqn:st,eqn:st1,eqn:s1t} yields
\begin{align}
    \eqref{eqn:st} &= \sum_{s}\sum_{t} \parens*{\angles*{\frac{\vphi_s}{\sqrt{q}}, \frac{\vphi_t}{\sqrt{q}}} - \la \vphi_s, \vphi_t \ra^2} 
    \\
    &= T^2 \cdot \parens*{\angles*{\frac{\vphi}{\sqrt{q}}, \frac{\vphi}{\sqrt{q}}} - \avg_{s,t} \{\la \vphi_s, \vphi_t \ra^2\}};\\
    \eqref{eqn:st1} &= \sum_{s} \sum_{t \neq t'} (\la \vphi_s, \vphi_t \vphi_{t'}\ra  - \la \vphi_s, \vphi_t \ra \la \vphi_{s}, \vphi_{t'} \ra) \label{eqn:sss}\\
    &= \sum_{s, t, t'} (\la \vphi_s, \vphi_t \vphi_{t'}\ra  - \la \vphi_s, \vphi_t \ra \la \vphi_{s}, \vphi_{t'} \ra) - \sum_{s, t} (\la \vphi_s, \vphi_t^2 \ra  - \la \vphi_s, \vphi_t \ra^2) \label{eqn:sss1}\\
    &= T^3 \cdot \parens*{\la \vphi, \vphi^2\ra  - \avg_s\{\la \vphi_s, \vphi\ra^2\}} - T^2 \cdot \parens*{\angles*{\vphi, \avg_{t} \{\vphi_t^2\}}  - \avg_{s,t}\{\la \vphi_s, \vphi_t \ra^2\}} \label{eqn:sss2}\\
    &\leq T^3 \cdot \parens*{\la \vphi, \vphi^2\ra  - \la \vphi, \vphi\ra^2} - T^2 \cdot \parens*{\la\vphi, \vphi^2\ra  - \avg_{s,t}\{\la \vphi_s, \vphi_t \ra^2\}};\label{eqn:sss3}\\
    \eqref{eqn:s1t} &\leq T^3 \cdot \parens*{\la \vphi^2, \vphi\ra  - \la \vphi, \vphi\ra^2} - T^2 \cdot \parens*{\la\vphi^2, \vphi\ra  - \avg_{s,t}\{\la \vphi_s, \vphi_t \ra^2\}}. \label{eqn:ttt}
\end{align}
Writing $\la \vphi, \vphi^2 \ra = \la \vphi^2, \vphi \ra = \|\vphi\|_3^3$  and returning to the variance, we conclude (after dropping some nonnegative terms) that 
\begin{equation} \label{eqn:var1}
    \Var[\oM] \leq \frac{1}{T^2} \cdot \norm*{\frac{\vphi}{\sqrt{q}}}_2^2 + \frac{1}{T^2} \cdot \avg_{s,t} \{\la \vphi_s, \vphi_t \ra^2\} + \frac{1}{T} \cdot \parens*{ \|\vphi\|_3^3  - \|\vphi\|_2^4}.
\end{equation}
We now observe that
\begin{align}
    \la \vphi_s, \vphi_t \ra^2 = \E\bracks*{\frac{\sqrt{\vphi_s \vphi_t}}{\sqrt{q}} \cdot \sqrt{q} \sqrt{\vphi_s \vphi_t}}^2
    &\leq \E\bracks*{\frac{\vphi_s \vphi_t}{q}} \cdot \E\bracks*{q \vphi_s \vphi_t} \\
    &= \angles*{\frac{\vphi_s}{\sqrt{q}}, \frac{\vphi_t}{\sqrt{q}}} \cdot 
    \sum_{j} p_s(j) p_t(j) \leq \angles*{\frac{\vphi_s}{\sqrt{q}}, \frac{\vphi_t}{\sqrt{q}}},
\end{align}
where the first inequality was Cauchy--Schwarz.
Thus
\begin{equation}
    \avg_{s,t} \{\la \vphi_s, \vphi_t \ra^2\} \leq \angles*{\frac{\vphi}{\sqrt{q}}, \frac{\vphi}{\sqrt{q}}} = \norm*{\frac{\vphi}{\sqrt{q}}}_2^2,
\end{equation}
and putting this back into \Cref{eqn:var1} yields
\begin{equation} \label{eqn:var2}
    \Var[\oM] \leq \frac{2}{T^2} \cdot \norm*{\frac{\vphi}{\sqrt{q}}}_2^2 + \frac{1}{T} \cdot \parens*{ \|\vphi\|_3^3  - \|\vphi\|_2^4}.
\end{equation}
We've effectively reduced to the iid case.
The last step is to move from $\vphi$ to $\wt{\vphi}$, to obtain:
\begin{equation} \label{eqn:put1}
    \norm*{\frac{\vphi}{\sqrt{q}}}_2^2 
    = \norm*{\frac{\wt{\vphi}+1}{\sqrt{q}}}_2^2
    \leq 2\norm*{\frac{\wt{\vphi}}{\sqrt{q}}}_2^2 + 2\norm*{\frac{1}{\sqrt{q}}}_2^2 \leq \frac{2}{\gamma} \mu + 2d, 
    \qquad \text{recalling } \gamma = \min_{j \in [d]} \{q_j\};
\end{equation}
and,
\begin{align}
    \|\vphi\|_3^3  - \|\vphi\|_2^4 
    =  \E[(\wt{\vphi} + 1)^3]  - \E[(\wt{\vphi} + 1)^2]^2 
    &= (\E[\wt{\vphi}^3] + 3\|\wt{\vphi}\|_2^2 + 1) - (\|\wt{\vphi}\|_2^2 + 1)^2 \\
    & = \E[\wt{\vphi}^3] + \mu - \mu^2
    \leq \E[\wt{\vphi}^3] + \mu. \label{eqn:put2}
\end{align}
Moreover,
\begin{equation} \label{eqn:put3}
    \E[\wt{\vphi}^3] = \E\bracks*{\frac{\wt{\vphi}}{\sqrt{q}} \cdot \sqrt{q} \wt{\vphi}^2} \leq \norm*{\frac{\wt{\vphi}}{\sqrt{q}}}_2 \cdot \norm*{\sqrt{q} \wt{\vphi}^2}_2 \leq \frac{2}{\sqrt{\gamma}} \sqrt{\mu} \cdot \parens*{\E[q \wt{\vphi}^4]}^{1/2},
\end{equation}
and
\begin{equation} \label{eqn:put4}
    \E[q \wt{\vphi}^4] = \sum_{j} q(j)^2 \wt{\vphi}(j)^4 \leq \parens*{\sum_j q(j) \wt{\vphi}(j)^2}^2 = \E[\wt{\vphi}^2]^2 = \mu^2.
\end{equation}
Putting \Cref{eqn:put1,eqn:put2,eqn:put3,eqn:put4} into \Cref{eqn:var2} completes the proof.
\end{proof}

\end{document}